\newcommand{\dd}{\textup{d}}
\def\eps{\varepsilon}
\def\E{\mathbb{E}}
\def\P{\mathbb{P}}
\def\R{\mathbb{R}}
\def\target{\textup{target}}
\newtheorem{theorem}{Theorem}
\newtheorem{proposition}[theorem]{Proposition}
\newtheorem{lemma}[theorem]{Lemma}
\newtheorem{corollary}[theorem]{Corollary}
\theoremstyle{plain}
\theoremstyle{remark}
\theoremstyle{definition}
\newtheorem*{definition*}{Definition}
\begin{document}

% Use the \preprint command to place your local institutional report
% number in the upper righthand corner of the title page in preprint mode.
% Multiple \preprint commands are allowed.
% Use the 'preprintnumbers' class option to override journal defaults
% to display numbers if necessary
%\preprint{}

%Title of paper
\title{Slowest first passage times, redundancy, and menopause timing}

% repeat the \author .. \affiliation  etc. as needed
% \email, \thanks, \homepage, \altaffiliation all apply to the current
% author. Explanatory text should go in the []'s, actual e-mail
% address or url should go in the {}'s for \email and \homepage.
% Please use the appropriate macro foreach each type of information

% \affiliation command applies to all authors since the last
% \affiliation command. The \affiliation command should follow the
% other information
% \affiliation can be followed by \email, \homepage, \thanks as well.
\author{Sean D. Lawley\thanks{Department of Mathematics, University of Utah, Salt Lake City, UT 84112 USA (\texttt{lawley@math.utah.edu}). SDL was supported by the National Science Foundation (Grant Nos.\ NSF CAREER DMS-1944574 and NSF DMS-1814832).} \and Joshua Johnson\thanks{University of Colorado-Anschutz Medical Campus, Department of Obstetrics and Gynecology, Division of Reproductive Endocrinology and Infertility, Aurora, CO, USA (\texttt{joshua.2.johnson@cuanschutz.edu}). JJ was supported by University of Colorado School of Medicine Research Funds and McPherson Family Funds.} \thanks{University of Colorado-Anschutz Medical Campus, Department of Obstetrics and Gynecology, Division of Reproductive Sciences, Aurora, CO, USA}
}
\date{\today}
\maketitle

\begin{abstract} 
Biological events are often initiated when a random ``searcher'' finds a ``target,'' which is called a first passage time (FPT). In some biological systems involving multiple searchers, an important timescale is the time it takes the slowest searcher(s) to find a target. For example, of the hundreds of thousands of primordial follicles in a woman's ovarian reserve, it is the slowest to leave that trigger the onset of menopause. Such slowest FPTs may also contribute to the reliability of cell signaling pathways and influence the ability of a cell to locate an external stimulus. In this paper, we use extreme value theory and asymptotic analysis to obtain rigorous approximations to the full probability distribution and moments of slowest FPTs. Though the results are proven in the limit of many searchers, numerical simulations reveal that the approximations are accurate for any number of searchers in typical scenarios of interest. We apply these general mathematical results to models of ovarian aging and menopause timing, which reveals the role of slowest FPTs for understanding redundancy in biological systems. We also apply the theory to several popular models of stochastic search, including search by diffusive, subdiffusive, and mortal searchers.
\end{abstract}

%%%%%%%%%%%%%%%%%%%%%%%%%%%%%%%%%%%%%
\section{Introduction}

Timescales in many biological systems have been studied using first passage times (FPTs) \cite{chou2014, polizzi2016}. Generically, a FPT is the first time a random ``searcher'' finds a ``target.'' Depending on the application, the searcher could be, for example, an ion, protein, cell, or predatory animal, and the target could be a receptor, ligand, cell, or prey. Many mathematical and numerical methods have been developed in order to estimate such FPTs \cite{redner2001, benichou2010, cheviakov2010, opplestrup2006, kaye2020}. In the past several decades, FPT analysis has focused almost exclusively on the distribution and statistics of a single given searcher. 

There has recently been a surge of interest in the fastest FPT, which is the time it takes the fastest searcher to find a target out of multiple searchers \cite{meerson2015, godec2016x,hartich2018,hartich2019, basnayake2019, schuss2019, lawley2020esp1, lawley2020uni, lawley2020dist}. To describe more precisely, suppose there are $N\ge1$ searchers and let $\tau_{1},\dots,\tau_{N}$ denote their respective FPTs to some target. The fastest FPT is then
\begin{align}\label{t1n}
T_{1,N}
:=\min\{\tau_{1},\dots,\tau_{N}\}.
\end{align}
If $N$ is large, then the fastest FPT is much faster than a typical single FPT,
\begin{align}\label{faster}
T_{1,N}
\ll\tau,\quad\text{if }N\gg1,
\end{align}
and previous work has studied the decay of $T_{1,N}$ in this many searcher limit. We note that fastest FPTs are often called extreme FPTs, since $T_{1,N}$ in \eqref{t1n} is an example of an extreme statistic \cite{coles2001, falkbook, haanbook}. The theory of extreme statistics has been used for many decades in fields such as engineering, earth sciences, and finance \cite{coles2001, novak2011}, but the theory has only started to be applied in biology.

Much of the interest in fastest FPTs stems from attempts to understand biological ``redundancy'' \cite{meerson2015}. A prototypical example of such redundancy occurs in human reproduction, in which roughly $N=10^{8}$ sperm cells search for an oocyte despite the fact that only one sperm cell initiates fertilization \cite{eisenbach2006} {(human fertilization also involves many other complicated mechanisms \cite{fitzpatrick2020})}. Other important examples come from (i) gene regulation, in which only the fastest few of the $N\in[10^{2},10^{4}]$ transcription factors determine cellular response \cite{harbison2004}, and (ii) intracellular calcium dynamics, in which the fastest two of the $N=10^{3}$ released calcium ions to arrive at a Ryanodyne receptor trigger further calcium release \cite{basnayake2019fast}. In these systems, why are there $N\gg1$ searchers when only a few searchers determine the biological response? Since the fastest FPT is much faster than a typical single FPT (as in \eqref{faster}), it has been argued that the apparently redundant or ``extra'' searchers in these systems are not a waste, but rather function to accelerate the search process \cite{schuss2019, coombs2019, redner2019, sokolov2019, rusakov2019, martyushev2019, tamm2019, basnayake2019c}. 

Rather than the fastest FPT, an important timescale in some biological systems is the time it takes the slowest searcher(s) to find a target, which we call a slowest FPT. A slowest FPT can define the termination of a process or perhaps the exhaustion of a supply. More precisely, let us generalize \eqref{t1n} and define the order statistics,
\begin{align*}
T_{1,N}\le T_{2,N}\le \cdots\le T_{N-1,N}\le T_{N,N},
\end{align*}
where $T_{j,N}$ denotes the $j$th fastest FPT,
\begin{align}\label{Tkn}
T_{j,N}
:=\min\big\{\{\tau_{1},\dots,\tau_{N}\}\backslash\cup_{i=1}^{j-1}\{T_{i,N}\}\big\},\quad j\in\{1,\dots,N\}.
\end{align}
In this notation, a key quantity in some systems is the $(k+1)$st slowest FPT,
\begin{align*}
T_{N-k,N},\quad\text{for }{{N\gg k+1\ge1}}.
\end{align*}

For example, the onset of menopause is triggered by a slowest FPT. When a woman is born, she has around $N\approx5\times10^{5}$ primordial ovarian follicles (PFs) in her ovarian reserve \cite{wallace2010}. During her lifetime, this number {decays} as individual PFs in this dormant reserve enter a growth stage (given that no new PFs are formed after birth). Menopause occurs when the number of PFs in the reserve drops to around $k\approx10^{3}$, which is around age 50 years for most women \cite{faddy1992}. Hence, if  $\{\tau_{n}\}_{n=1}^{N}$ denote the times that each of the PFs {leaves the reserve}, then menopause occurs at time $T_{N-k,N}$, where
\begin{align}\label{phys}
N\approx5\times10^{5}\gg k\approx10^{3}.
\end{align}
Put another way, the timing of menopause is determined by the slowest $k/N\approx0.2\%$ of PFs to {leave the reserve}.

Furthermore, this ovarian system is notable for its apparent redundancy. Indeed, of the hundreds of thousands of PFs present at birth, most are destined to die some time after entering the growth stage, and only about one PF survives to ovulate per menstrual cycle \cite{faddy1992, faddy1996}. Over 40 years of menstrual cycles, only approximately $40\times12\approx500$ PFs are relevant to possible reproduction across the lifetime. Even considering the larger number of follicles that engage in ovarian endocrine function and participate in the signaling required for continued menstrual cyclicity (and then die), why have $N\approx5\times10^{5}$ PFs when only much smaller fractions are absolutely needed? What explains this disparity of up to three orders of magnitude? The situation is exacerbated by the fact that at around 5 months of gestation, a female has closer to $5\times10^{6}$ PFs in her reserve \cite{wallace2010}. To begin to address this seeming redundancy in the ovarian system, a mathematical understanding of slowest FPTs is required.

Slowest FPTs also play a role in cell signaling pathways. A prototypical model involves signaling molecules (i.e.\ searchers) diffusing from a source and then binding to some receptor (i.e.\ the target) \cite{liu2018}. The very interesting recent study by \cite{ma2020} found that such cell signaling mechanisms are strongly affected by signal inactivation, in which the diffusing searchers can be inactivated before finding the target (such searchers are often called ``mortal'' or ``evanescent'' \cite{abad2010, abad2012, abad2013, yuste2013, meerson2015b, grebenkov2017}). In particular, \cite{ma2020} showed that inactivation ``sharpens'' signals by reducing variability in the FPTs of a continuum of many searchers arriving at the target. For a signal conveyed by a finite number of searchers, signal sharpness or ``spread'' could be understood in terms of the difference between the latest and earliest searchers to arrive at the target. That is, the spread of a signal relayed by the arrival of discrete searchers could be defined as
\begin{align}\label{sharpness}
\Sigma_{N}
:=T_{N,N}-T_{1,N}.
\end{align}
How does this notion of signal spread depend on inactivation? How does $\Sigma_{N}$ depend on cellular geometry and the many other parameters in the problem? Answering these questions requires understanding slowest FPTs $T_{N,N}$ and how they relate to fastest FPTs $T_{1,N}$.

Understanding slowest FPTs also promises insight into single-cell source location detection. Many types of cells display a remarkable ability to pinpoint the location of an external stimulus. Examples include eukaryotic gradient-directed cell migration (chemotaxis) \cite{levchenko2002}, directional growth (chemotropism) in growing neurons \cite{goodhill2016}, and yeast \cite{ismael2016}. In these systems, cells infer the spatial location of an external source through the noisy arrivals of diffusing molecules (searchers) to membrane receptors (targets). Dating back to the seminal work of \cite{berg1977}, there is a long history of mathematical modeling of such systems \cite{zwanzig1990, zwanzig1991, bernoff2018b, lindsay2017,Berezhkovskii2004, Berezhkovskii2006, Dagdug2016, Eun2017, Muratov2008, lawley2019bp, eun2020, handy2021}. 
{A recent} work investigated the theoretical limits of what a cell could infer about the source location from the number of arrivals at different membrane receptors \cite{lawley2020prl}. Intuitively, membrane receptors which receive many molecules are likely nearer the source than receptors which receive only a few molecules. However, this prior study considered only the total number of arrivals at each receptor, rather than the temporal data of when the molecules arrive at each receptor. We conjecture that including the arrival time data would significantly improve the estimates of the source location, and might thereby improve the rather inaccurate estimates found in \cite{lawley2020prl} for sources located more than a few cell radii away. Indeed, recent work shows that early arrivals are much more likely to hit receptors near the source compared to later arrivals \cite{linn2021}. However, quantifying how arrival time data helps pinpoint the source location again requires understanding slowest FPTs.

We also note that slowest FPTs have recently been studied in the chemical physics literature in the very interesting work by \cite{grebenkov2022}. {These authors study slowest FPTs as a special case of so-called impatient particles problem, which involves particles which bind reversibly to a target. {To our knowledge, the only prior study of slowest FPTs is the recent work by \cite{grebenkov2022}.}} See section~\ref{sg} for a description of how the present paper yields (i) an exact and mathematically rigorous result that explains a result by \cite{grebenkov2022} which was obtained therein by numerical fitting and (ii) higher order corrections to an estimate by \cite{grebenkov2022}.

In this paper, we use extreme value theory and asymptotic analysis to obtain rigorous approximations to the full probability distribution of $T_{N-k,N}$ in the many searcher limit, ${{N\gg k+1\ge1}}$. We further prove asymptotic expansions of all the moments of $T_{N-k,N}$ in this limit. Though the results are proven in the limit of many searchers, numerical simulations reveal that the approximations are accurate for any number of searchers in typical scenarios of interest. This contrasts existing estimates of fastest FPTs, which require very large values of $N$ to be accurate. 

In line with previous work on fastest FPTs, we assume that the single FPTs $\tau_{1},\dots,\tau_{N}$ are independent and identically distributed (iid). Our results are given in terms of the large-time distribution of a single FPT $\tau$. Depending on this large-time distribution, our results involve rescaling $T_{N-k,N}$ according to a diverging power function, logarithm, harmonic number, or {Lambert W} function of the searcher number $N$. {We emphasize that these general mathematical results apply to the largest values $T_{N-k,N},\dots,T_{N,N}$ of any sequence of iid nonnegative random variables $\{\tau_{n}\}_{n\ge1}$ whose large-time distribution decays either (a) exponentially (possibly with a power law pre-factor) or (b) according to a power law.}

We apply these general mathematical results to {models} of ovarian aging and menopause timing, which reveals the role of slowest FPTs for understanding redundancy in biological systems. We also apply the   mathematical results to several common models of stochastic search in biology. We consider diffusive searchers, subdiffusive searchers, and searchers which move on a discrete network. We also consider searchers which can be inactivated before finding the target (i.e.\ mortal searchers). We find that even small inactivation rates can drastically sharpen the signal by decreasing the signal spread, $\Sigma_{N}$, in \eqref{sharpness}.

The rest of the paper is organized as follows. In section~\ref{math}, we present the general mathematical results. In section~\ref{meno}, we apply the results to a model of ovarian aging and menopause timing. In section~\ref{examples}, we apply the results to various search models and compare the asymptotic theory to numerical simulations. In section~\ref{mortal}, we consider mortal searchers. {Sections~\ref{examples} and \ref{mortal} reveal several generic features of slowest FPTs, which offers insight into single-cell source location detection and cellular signaling.} In the Discussion section, we discuss further applications of the theory and describe how slowest FPTs can be nearly deterministic. We collect the mathematical proofs and some technical details in an appendix.

%%%%%%%%%%%%%%%%%%%%%%%%%%%%%%%%%%%%%%%%%%%%%%%%%%%%%%%%%%%%%%%%%%%%%%%%%%%%%%%%%%%%%%%%%%%%%%%%%%%%%%%%%%%%%%%%%%%%%%%%%%%%%%%%%%%%%%%%%%%%%%%%%%%%%%%%%%%%%%%%%%%%%%%%%%%%%%%%%%%%%%%%%%%%%%%%%%%%%%%%%%%%%%%%%%%%%%%%%%%%%%%%%%%%%%%%%%%%%%%%%%%%%%%%%%%%%%%%%%%%%%%%%%%%%%%%%%%%%%%%%%%%%%%%%%%%%%%%%%%%%%%%%%%%%%%%%%%%%%%%%%%%%%%%%%%%%%%%%%%%%%%%%%%%%%%%%%%%%%%%%%%%%%%%%%%%%%%%%%%%%%%%%%%%%%%%%%%%%%%%%%%%%%%%%%%%%%%%%%%%%%%%%%%%%%%%%%%%%%%%%%%%%%%%%%%%%%%%%%%%%%%%%%%%%%%%%%%%%%%%%%%%%%%
\section{Mathematical analysis}\label{math}

In this section, we present general mathematical results on slowest FPTs. The proofs are given in section~\ref{appproofs} in the Appendix. The results are formulated in terms of the large-time asymptotic behavior of the so-called survival probability of a single searcher, which we denote by
\begin{align*}
S(t)
:=\P(\tau>t).
\end{align*}
As in the Introduction, $\{\tau_{n}\}_{n\ge1}$ is an iid sequence of realizations of a nonnegative random variable $\tau>0$. The order statistics,
\begin{align*}
T_{1,N}\le T_{2,N}\le\cdots\le T_{N,N},
\end{align*}
are defined in \eqref{Tkn}. {Though we consider slowest FPTs in sections~\ref{meno}-\ref{mortal}, we emphasize that the results of the present section are general results which apply to any iid sequence of nonnegative random variables $\{\tau_{n}\}_{n\ge1}$ whose survival probabilities decay at large time either (a) exponentially (possibly with a power law pre-factor) or (b) according to a power law.}

Since $\{\tau_{n}\}_{n\ge1}$ are iid, the distribution of $T_{N-k,N}$ for $k\in\{0,1,\dots,N-1\}$ is
\begin{align}\label{unwieldy}
\begin{split}
\P(T_{N-k,N}\le t)
&=\sum_{i=0}^{k}{N\choose i}(1-S(t))^{N-i}(S(t))^{i}\\
&=1-\P(T_{N-k,N}> t)\\
&=1-\sum_{i=0}^{N-k-1}{N\choose i}(S(t))^{N-i}(1-S(t))^{i}.
\end{split}
\end{align}
Though this expression gives the exact distribution of $T_{N-k,N}$, it requires knowing the survival probability $S(t)$ for all $t\ge0$ (i.e.\ it requires knowledge of the full distribution of $\tau$). In this section, we focus on obtaining approximations to the distribution and moments of $T_{N-k,N}$ in the limit $N\to\infty$ with $k\ge0$ fixed, assuming only knowledge of the large-time decay of the survival probability $S(t)$. 

Throughout this paper, 
\begin{align*}
&\text{``$f\sim g$''}\quad\text{denotes}\quad f/g\to1,\\
&\text{``$f=o(g)$''}\quad\text{denotes}\quad f/g\to0,\\
&\text{``$f=O(g)$''}\quad\text{denotes}\quad \limsup |f|/g<\infty.
\end{align*} 
{Recall} that a sequence of random variables $\{X_{N}\}_{N\ge1}$ is said to \emph{converge in distribution} to a random variable $X$ as $N\to\infty$ if
\begin{align}\label{cddef}
\P(X_{N}\le x)
\to\P(X\le x)\quad\text{as }N\to\infty
\end{align}
for all points $x\in\R$ such that the function $F(x):=\P(X\le x)$ is continuous. If \eqref{cddef} holds, then we write
\begin{align*}
X_{N}
\to_{\dd}X\quad\text{as }N\to\infty.
\end{align*}

We consider the case that $S(t)$ decays exponentially at large-time (possibly with a power law pre-factor) in sections~\ref{exponentialdecay}-\ref{bestdecay} and consider a power law decay of $S(t)$ in section~\ref{powerlawdecay}.

%%%%%%%%%%%%%%%%%%%%%%%%%%%%%%%%%%%%%
\subsection{Exponential decay}\label{exponentialdecay}

The first theorem yields the limiting distribution of the slowest FPT assuming that the survival probability decays exponentially at large time. The proof uses classical extreme value theory and properties of the {Lambert W} function \cite{corless1996, zarfaty2021}.

%%%%%%%%%%%%%%%%%%%%%%%%%%%%%%
\begin{theorem}\label{expdist}
Assume
\begin{align*}
S(t)
\sim A(\lambda t)^{-p}e^{-\lambda t}\quad\text{as }t\to\infty,
\end{align*}
for some $A>0$, $\lambda>0$, and $p\in\R$. Then for any fixed integer $k\ge0$, we have 
\begin{align}\label{cdt}
\lambda T_{N-k,N}-b_{N}
\to_{\dd}Y_{k}\quad\text{as }N\to\infty,
\end{align}
where $Y_{k}$ has distribution
\begin{align*}
\P(Y_{k}\le {y})
=\frac{\Gamma(k+1,e^{-{y}})}{k!},\quad {y}\in\R,
\end{align*}
where $\Gamma(a,z):=\int_{z}^{\infty}u^{a-1}e^{-u}\,\dd u$ denotes the upper incomplete gamma function, and $\{b_{N}\}_{N\ge1}$ is any sequence satisfying
\begin{align}\label{bnp}
\lim_{N\to\infty}(b_{N}-b_{N}')=0,
\end{align}
where
\begin{align}\label{bN}
b_{N}'
:=\begin{cases}
\ln(AN) & \text{if }p=0,\\
pW_{0}(\frac{1}{p}(AN)^{1/p}) & \text{if }p>0,\\
pW_{-1}(\frac{1}{p}(AN)^{1/p}) & \text{if }p<0,
\end{cases}
\end{align}
where $W_{0}(z)$ denotes the principal branch of the {Lambert W} function and $W_{-1}(z)$ denotes the lower branch \cite{corless1996}. 
\end{theorem}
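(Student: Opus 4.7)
The plan is to reduce the theorem to a classical extreme-value convergence argument by finding a centering sequence $b_N$ such that $N S((b_N+y)/\lambda) \to e^{-y}$ for every $y\in\R$, and then passing to the limit in the explicit binomial expression \eqref{unwieldy} for $\P(T_{N-k,N}\le t)$.

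First, I would construct the centering sequence by solving the asymptotic equation $N S(b_N/\lambda) \to 1$. Using the hypothesis $S(t)\sim A(\lambda t)^{-p}e^{-\lambda t}$, this reduces to requiring $b_N + p\ln b_N - \ln(AN)\to 0$. Setting $u = b_N/p$ (for $p\neq 0$) and exponentiating yields the implicit equation $u e^u = (AN)^{1/p}/p$, whose solution is a branch of the Lambert W function. For $p>0$ the argument $(AN)^{1/p}/p$ diverges to $+\infty$, forcing the principal branch $W_0$; for $p<0$ the argument tends to $0^{-}$, so the correct solution is the lower branch $W_{-1}$. The formula for $b_N'$ in \eqref{bN} thereby falls out, and the $p=0$ case is immediate. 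Here I would invoke standard asymptotic properties of $W_0$ and $W_{-1}$ (in particular $W_0(z)=\ln z-\ln\ln z+o(1)$ as $z\to\infty$ and $W_{-1}(z) = \ln(-z)-\ln(-\ln(-z))+o(1)$ as $z\to 0^-$) to verify that $b_N'\to\infty$ and $N S(b_N'/\lambda)\to 1$.

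Next, I would show the pointwise convergence
\begin{equation*}
\mu_N(y):=N\,S\!\left(\frac{b_N+y}{\lambda}\right)\longrightarrow e^{-y}\qquad(N\to\infty,\ y\in\R).
\end{equation*}
Since $b_N\to\infty$ and $b_N-b_N'\to 0$ by \eqref{bnp}, the polynomial prefactor satisfies $(b_N+y)^{-p}/b_N^{-p}=(1+y/b_N)^{-p}\to 1$, and the exponential factor contributes $e^{-y}$. Combining with $N S(b_N'/\lambda)\to 1$ gives the claim. This is the principal technical step, and the main delicacy is verifying that the Lambert W asymptotics are tight enough that $b_N-b_N'\to 0$ really implies $N A b_N^{-p} e^{-b_N}\to 1$; I expect this to be a short manipulation but it is where the whole argument hinges, since everything downstream is the standard extreme-value limit.

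Finally, I would substitute $t=(b_N+y)/\lambda$ into the exact formula \eqref{unwieldy}, yielding
\begin{equation*}
\P\big(\lambda T_{N-k,N}-b_N\le y\big)=\sum_{i=0}^{k}\binom{N}{i}\big(S_N\big)^{i}\big(1-S_N\big)^{N-i},
\end{equation*}
with $S_N=\mu_N(y)/N$. Using the standard Poisson approximation to the binomial, namely $\binom{N}{i}S_N^{i}\sim \mu_N(y)^{i}/i!$ and $(1-S_N)^{N-i}\sim e^{-\mu_N(y)}$ whenever $\mu_N(y)$ is bounded, the sum converges to $e^{-e^{-y}}\sum_{i=0}^{k}e^{-iy}/i!$. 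Recognizing this as the Poisson($e^{-y}$) CDF evaluated at $k$, and using the integer-argument identity $\Gamma(k+1,z)=k!\,e^{-z}\sum_{i=0}^{k}z^{i}/i!$, identifies the limit with $\Gamma(k+1,e^{-y})/k!$, establishing \eqref{cdt}. The function $y\mapsto \Gamma(k+1,e^{-y})/k!$ is continuous, so the convergence holds at every $y\in\R$ as required.
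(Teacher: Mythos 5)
Your proposal is correct, and its analytic core coincides with the paper's: you center by solving $NAb_N^{-p}e^{-b_N}=1$ exactly (which is what produces the Lambert~W branches), use $b_N\to\infty$ to absorb the polynomial prefactor via $(1+y/b_N)^{-p}\to1$, and conclude $N\,S(\lambda^{-1}(b_N+y))\to e^{-y}$; the paper does precisely this (with a small lemma justifying the replacement of $S$ by its asymptotic equivalent inside the $N$th power, and a citation to Gnedenko for the freedom in \eqref{bnp}, which you handle directly). The one genuine difference is how the case $k\ge1$ is treated: the paper proves only $k=0$ from first principles and then invokes a classical order-statistics theorem (Theorem~3.4 of Coles) to upgrade the Gumbel limit of the maximum to the limit law of the $(k+1)$st largest value, whereas you inline that step by applying the Poisson approximation to the binomial sum in \eqref{unwieldy} with $S_N=\mu_N(y)/N$ and $\mu_N(y)\to e^{-y}$, then identifying the Poisson CDF with $\Gamma(k+1,e^{-y})/k!$. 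Your route is more self-contained and makes transparent why the incomplete gamma function appears; the paper's route is shorter on the page but defers the $k\ge1$ mechanism to a black-box citation. Both are complete proofs.
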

%%%%%%%%%%%%%%%%%%%%%%%%%%%%%%

One choice of the sequence $\{b_{N}\}_{N\ge1}$ in Theorem~\ref{expdist} is $b_{N}=b_{N}'$ in \eqref{bN}. Another choice which satisfies \eqref{bnp} is 
\begin{align}\label{bNalt}
b_{N}
=\ln(AN)
-p\ln\Big|\ln\big|\frac{1}{p}(AN)^{1/p}\big|\Big|
-p\ln\big|p\big|,\quad p\in\R,
\end{align}
where the two terms involving $p$ in \eqref{bNalt} are interpreted as zero if $p=0$. 
The fact that Theorem~\ref{expdist} holds with $\{b_{N}\}_{N\ge1}$ in \eqref{bNalt} follows from the following logarithmic expansions of the {Lambert W} function \cite{corless1996},
\begin{align*}
W_{0}(z)
&=\ln z-\ln\ln z+o(1),\quad\text{as }z\to\infty,\\
W_{-1}(z)
&=\ln(-z)-\ln(-\ln(-z))+o(1),\quad\text{as }z\to0-.
\end{align*}

Roughly speaking, the convergence in distribution in \eqref{cdt} in Theorem~\ref{expdist} means
\begin{align*}
\P(T_{N-k,N}\le t)
\approx
\P(Y_{k}\le \lambda t+b_{N})
=\frac{\Gamma(k+1,e^{b_{N}}e^{-{\lambda t}})}{k!},\quad\text{if }{{N\gg k+1\ge1}}.
\end{align*}
In the case that $k=0$, the limiting distribution is Gumbel, $\P(Y_{0}\le {y})=\exp(-e^{-{y}})$, and so
\begin{align*}
\P(T_{N,N}\le t)
\approx
\P(Y_{0}\le \lambda t+b_{N})
=\exp(-e^{b_{N}}e^{-{\lambda t}}),\quad\text{if }N\gg 1.
\end{align*}

We further note that $Y_{k}$ in {Theorem~\ref{expdist}} can be written in terms of a sum of iid exponential random variables. In particular, it is straightforward to check that
\begin{align*}
Y_{k}
=_{\dd}-\ln(E_{1}+\dots+E_{k+1}),
\end{align*}
where $=_{\dd}$ denotes equality in distribution and $E_{1},\dots,E_{k+1}$ are iid exponential random variables with unit rate (i.e.\ $\E[E_{j}]=1$).

The next theorem yields asymptotic expansions for the moments of the slowest FPTs by proving the moment convergence,
\begin{align}\label{mdb}
\E[(\lambda T_{N-k,N}-b_{N})^{m}]
\to\E[Y_{k}^{m}]\quad\text{as }N\to\infty,
\end{align}
for any moment $m\in\{1,2,\dots\}$ and fixed $k\ge0$. In light of the convergence in distribution in Theorem~\ref{expdist}, it is natural to expect that \eqref{mdb} holds. However, \eqref{mdb} is not a corollary of Theorem~\ref{expdist}, since convergence in distribution does not imply convergence of moments. {Furthermore, we note that \cite{pickands1968} proved that the convergence in distribution in Theorem~\ref{expdist} implies that \eqref{mdb} holds for $k=0$ (i.e.\ for the slowest FPT $T_{N,N}$). However,} to our knowledge there is no previous result that shows that the convergence in distribution in Theorem~\ref{expdist} implies that \eqref{mdb} holds for any fixed $k\ge1$. Indeed, we show in section~\ref{powerlawdecay} below that moment convergence for $T_{N,N}$ is not equivalent to moment convergence of $T_{N-k,N}$. We prove Theorem~\ref{expmoments} by showing that the sequence of random variables, $\{(\lambda T_{N-k,N}-b_{N})^{2m}\}_{N\ge1}$, is uniformly integrable.

%%%%%%%%%%%%%%%%%%%%%%%%%%%%%%
\begin{theorem}\label{expmoments}
Assume
\begin{align}\label{aless}
S(t)
\sim A(\lambda t)^{-p}e^{-\lambda t}\quad\text{as }t\to\infty,
\end{align}
for some $A>0$, $\lambda>0$, and $p\in\R$. Then for each moment $m\in\{1,2,\dots\}$ and any fixed $k\ge0$, we have that
\begin{align}\label{cmt}
\E\big[(\lambda T_{N-k,N}-b_{N})^{m}\big]
\to\E[Y_{k}^{m}]\quad\text{as }N\to\infty,
\end{align}
where $Y_{k}$ and $b_{N}$ are as in Theorem~\ref{expdist}. In particular, the mean satisfies
\begin{align*}
\E[T_{N-k,N}]
&=\lambda^{-1}\big(b_{N}+\E[Y_{k}]+o(1)\big)
=\lambda^{-1}\big(b_{N}+\gamma-H_{k}+o(1)\big)\\
&=\lambda^{-1}\big(\ln N-p\ln\ln N+\ln A+\gamma-H_{k}+o(1)\big),\quad\text{as }N\to\infty,
\end{align*}
where $b_{N}$ is given by \eqref{bN}, $\gamma\approx0.5772$ is the Euler-Mascheroni constant, and $H_{k}=\sum_{r=1}^{k}r^{-1}$ is the $k$-th harmonic number. Further, the variance satisfies
\begin{align*}
\textup{Variance}(T_{N-k,N})
&=\lambda^{-2}\big(\psi'(k+1)+o(1)\big),\quad\text{as }N\to\infty,
\end{align*}
where $\psi'(k+1)=\sum_{r=0}^{\infty}(r+k+1)^{-2}$ is the first order polygamma function.
\end{theorem}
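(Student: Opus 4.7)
The plan is to derive the moment convergence \eqref{cmt} from the distributional convergence established in Theorem~\ref{expdist} via uniform integrability. Setting $Z_N:=\lambda T_{N-k,N}-b_N$, convergence in distribution $Z_N\to_{\dd}Y_k$ combined with uniform integrability of $\{|Z_N|^m\}_{N\ge1}$ implies $\E[Z_N^m]\to\E[Y_k^m]$, and a uniform bound $\sup_N\E[Z_N^{2m}]<\infty$ yields the required uniform integrability (by the de~la~Vall\'ee~Poussin criterion). Thus it suffices to prove
\[
\sup_{N\ge1}\E[Z_N^{2m}]<\infty\qquad\text{for each integer }m\ge1,
\]
which by the layer-cake formula reduces to controlling $\P(Z_N>y)$ and $\P(Z_N<-y)$ by functions of $y$ that are integrable against $y^{2m-1}$, uniformly in $N$.

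For the upper tail, the event $\{T_{N-k,N}>t\}=\{\mathrm{Bin}(N,S(t))\ge k+1\}$ and a union bound yield $\P(T_{N-k,N}>t)\le(NS(t))^{k+1}/(k+1)!$. Evaluating at $\lambda t=b_N+y$ with $y\ge0$, the hypothesis \eqref{aless} together with the identity $NAb_N^{-p}e^{-b_N}\to 1$ (obtained from the defining Lambert~W representation of $b_N$ via $we^w=z$) gives $NS((b_N+y)/\lambda)=(1+o(1))(b_N/(b_N+y))^p e^{-y}$, which is bounded by $C(1+y)^{\max(0,-p)}e^{-y}$ uniformly in $N\ge N_0$; hence $\P(Z_N>y)\le C(1+y)^{(k+1)\max(0,-p)}e^{-(k+1)y}$. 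For the lower tail, the event $\{T_{N-k,N}\le t\}=\{\mathrm{Bin}(N,S(t))\le k\}$ and the analogous computation at $\lambda t=b_N-y$ show that $NS(t)$ grows like $e^y$ for $y\in(0,b_N)$; a standard Chernoff-type bound then gives $\P(Z_N<-y)\le C(NS(t))^k e^{-cNS(t)}$, doubly exponentially small in $y$, while for $y\ge b_N$ the event forces $T_{N-k,N}\le 0$ and is impossible.

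The main technical obstacle will be turning the asymptotic equivalence \eqref{aless} into the required \emph{uniform}-in-$N$ bound on $NS((b_N+y)/\lambda)$ valid for all $y\ge0$: one must exploit $b_N\to\infty$ to guarantee that $\lambda t=b_N+y$ always lies in the asymptotic regime for $N\ge N_0$, while the finitely many small $N$ can be handled directly since $\tau$ has finite moments of all orders under \eqref{aless}. Once the uniform moment bound is in hand, the explicit formulas follow by computing the moments of $Y_k$ via the representation $Y_k=_{\dd}-\ln(E_1+\cdots+E_{k+1})$: standard identities for the logarithm of a Gamma random variable give $\E[Y_k]=-\psi(k+1)=\gamma-H_k$ and $\textup{Variance}(Y_k)=\psi'(k+1)$. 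Substituting the logarithmic expansion $b_N=\ln N-p\ln\ln N+\ln A+o(1)$ from the Lambert~W asymptotics recorded just after Theorem~\ref{expdist} then produces the stated expressions for $\E[T_{N-k,N}]$ and $\textup{Variance}(T_{N-k,N})$.
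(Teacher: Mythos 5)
Your proposal is correct in strategy and would go through, but it takes a genuinely different route through the technical core. Both you and the paper reduce \eqref{cmt} to uniform integrability via a uniform bound on the $2m$-th moment of $Z_N=\lambda T_{N-k,N}-b_N$, and both split into upper and lower tails. From there the paper proceeds differently: for the upper tail it uses the monotonicity $T_{N-k,N}\le T_{N,N}$ to reduce to the maximum and then invokes Pickands' 1968 moment-convergence theorem, and for the lower tail it integrates the exact binomial expression $\sum_{j=0}^{k}\binom{N}{j}G^{N-j}(1-G)^{j}$ term by term, proving $I_j=O(N^{-j})$ via a change of variables $u=(b_N-s^{1/r})^{-p}e^{-(b_N-s^{1/r})}$ and explicit Lambert~W expansions. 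You instead use direct binomial tail bounds --- the union bound $\P(\mathrm{Bin}(N,S(t))\ge k+1)\le (NS(t))^{k+1}/(k+1)!$ for the upper tail and a Chernoff-type bound for the lower tail --- combined with the identity $NAb_N^{-p}e^{-b_N}\to1$ to control $NS((b_N\pm y)/\lambda)$. Your version is self-contained (no appeal to Pickands) and avoids the Lambert~W gymnastics, at the cost of being an outline rather than a full argument. The one point that needs care when you write it out is the lower tail: $NS((b_N-y)/\lambda)$ is not exactly $e^{y}$ but $(1+o(1))\big(\tfrac{b_N-y}{b_N}\big)^{-p}e^{y}$, so for $p<0$ the prefactor degrades as $y\uparrow b_N$. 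You should split at $y=b_N/2$: on $(0,b_N/2)$ the prefactor is bounded below by $2^{-|p|}$ so $NS\ge c e^y$ with $c$ independent of $N$ and the double exponential makes the integral against $y^{2m-1}$ uniformly convergent; on $(b_N/2,b_N)$ you have $NS\ge c\,b_N^{-|p|}e^{b_N/2}$ (polynomially large in $N$) or, very near $y=b_N$, $NS\ge N\,S(t_0)$ by monotonicity of $S$, so that region contributes $o(1)$. With that refinement your bound $\sup_N\E[Z_N^{2m}]<\infty$ holds, and the computation of $\E[Y_k]=\gamma-H_k$ and $\mathrm{Var}(Y_k)=\psi'(k+1)$ from $Y_k=_{\dd}-\ln(E_1+\cdots+E_{k+1})$, together with the logarithmic expansion of $b_N$, matches the stated formulas.
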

%%%%%%%%%%%%%%%%%%%%%%%%%%%%%%

%%%%%%%%%%%%%%%%%%%%%%%%%%%%%%
Note that Theorem~\ref{expmoments} implies that the coefficient of variation of $T_{N-k,N}$ vanishes as $N\to\infty$,
\begin{align}\label{cvvanish}
\frac{\sqrt{\textup{Variance}(T_{N-k,N})}}{\E[T_{N-k,N}]}
\sim\frac{\sqrt{\psi'(k+1)}}{\ln N}\quad\text{as }N\to\infty.
\end{align}
%%%%%%%%%%%%%%%%%%%%%%%%%%%%%%

%%%%%%%%%%%%%%%%%%%%%%%%%%%%%%
%%%%%%%%%%%%%%%%%%%%%%%%%%%%%%

Theorem~\ref{expdist} approximates the distribution of the $(k+1)$st slowest FPT. The next result, which is a corollary of Theorem~\ref{expdist}, generalizes Theorem~\ref{expdist} to approximate the joint distribution of the $k+1$ slowest FPTs. 

\begin{corollary}\label{expjoint}
Assume
\begin{align*}
S(t)
\sim A(\lambda t)^{-p}e^{-\lambda t}\quad\text{as }t\to\infty,
\end{align*}
for some $A>0$, $\lambda>0$, and $p\in\R$. For each fixed $k\ge0$, we have the following convergence in distribution for the joint random variables,
\begin{align}\label{cd2}
\Big(\lambda T_{N-k,N}-b_{N},\lambda T_{N-k+1,N}-b_{N},\dots,\lambda T_{N,N}-b_{N}\Big)
\to_{\textup{d}}
\mathbf{Y}\;\;\text{as }N\to\infty,
\end{align}
where $b_{N}$ is as in Theorem~\ref{expdist} and $\mathbf{Y}\in\R^{k+1}$ is the random vector,
\begin{align*}
 \mathbf{Y}
=\Big(-\ln(E_{1}+\dots+E_{k+1}),-\ln(E_{1}+\dots+E_{k}),\dots,-\ln (E_{1})\Big)\in\R^{k+1},
\end{align*}
where $E_{1},\dots,E_{k+1}$ are iid unit rate exponential random variables.
\end{corollary}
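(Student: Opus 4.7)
My plan is to reduce the joint convergence to the classical Renyi representation of uniform order statistics, and then pull back through the survival function. Assume $S$ is continuous (the general case is handled by the usual probability integral transform with an independent uniform randomization, a standard technicality I would relegate to a remark). Set $V_i := S(\tau_i)$; these are iid Uniform$(0,1)$, and since $S$ is non-increasing their order statistics satisfy $V_{(j)} = S(T_{N-j+1,N})$ for $j = 1, \ldots, N$.

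Next, I invoke the Renyi representation: there exist iid unit rate exponentials $E_1, E_2, \ldots$ with $V_{(j)} = \Gamma_j/\Gamma_{N+1}$, where $\Gamma_j := E_1 + \cdots + E_j$. The strong law gives $\Gamma_{N+1}/N \to 1$ almost surely, so for each fixed $k \ge 0$ I get the joint almost sure (hence in distribution) convergence
\[
\bigl(N S(T_{N,N}),\, N S(T_{N-1,N}),\, \ldots,\, N S(T_{N-k,N})\bigr)
\to_{\textup{d}} (\Gamma_1,\Gamma_2,\ldots,\Gamma_{k+1}).
\]
Applying the continuous mapping theorem to the map $(x_1,\ldots,x_{k+1}) \mapsto (-\ln x_1,\ldots,-\ln x_{k+1})$, which is continuous on $(0,\infty)^{k+1}$ where the limit lives almost surely, yields the joint convergence of $\bigl(-\ln(NS(T_{N-j+1,N}))\bigr)_{j=1}^{k+1}$ to $(-\ln\Gamma_1,\ldots,-\ln\Gamma_{k+1})$.

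The final step is to show that for each $j \in \{1,\ldots,k+1\}$,
\[
-\ln\bigl(N S(T_{N-j+1,N})\bigr) = \lambda T_{N-j+1,N} - b_N + o_P(1).
\]
Using $S(t)\sim A(\lambda t)^{-p}e^{-\lambda t}$, one has $-\ln(NS(t)) = \lambda t + p\ln(\lambda t) - \ln(AN) + o(1)$ as $t\to\infty$. Theorem~\ref{expdist} gives that $\lambda T_{N-j+1,N} - b_N$ is tight and $b_N\to\infty$, so $\lambda T_{N-j+1,N}\to\infty$ in probability and $p\ln(\lambda T_{N-j+1,N}) = p\ln b_N + o_P(1)$. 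The defining identity $b_N + p\ln b_N = \ln(AN) + o(1)$ (an immediate consequence of $W(z)e^{W(z)}=z$ applied to $b_N'$, together with $b_N - b_N'\to 0$; trivial when $p=0$) then converts the expansion into the desired relation. Slutsky's theorem combines this with the joint convergence of the previous paragraph to deliver the corollary, noting that the first coordinate of the stated vector $\mathbf{Y}$ corresponds to $j = k+1$ and the last to $j = 1$.

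The main obstacle is the last step: one must justify that all $k+1$ error terms vanish jointly in probability. This is handled by restricting to the event $\{T_{N-k,N} \ge M\}$ for a large constant $M$ (the smallest of the top $k+1$ order statistics still diverges, since $\lambda T_{N-k,N} - b_N$ is tight and $b_N\to\infty$), on which the asymptotic expansion of $S$ is uniformly controlled. Everything else is mechanical: Renyi is classical, continuous mapping is routine, and the Lambert W identity was already established in the discussion following Theorem~\ref{expdist}.
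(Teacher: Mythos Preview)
Your argument is correct and takes a genuinely different route from the paper. The paper's proof is a two-line appeal to the literature: having established in Theorem~\ref{expdist} that $\lambda T_{N,N}-b_N\to_{\dd}Y_0$, the survival function is in the Gumbel domain of attraction, and the joint convergence of the top $k+1$ order statistics then follows directly from Theorem~2.1.1 of de Haan and Ferreira. You instead give a self-contained proof via the probability integral transform, the gamma/Renyi representation of uniform order statistics, the strong law for $\Gamma_{N+1}/N$, and a Slutsky correction that converts $-\ln(NS(T_{N-j+1,N}))$ into $\lambda T_{N-j+1,N}-b_N$ using the asymptotic form of $S$ and the defining relation $b_N+p\ln b_N=\ln(AN)+o(1)$.

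What each buys: the paper's route is shorter and robust (the cited theorem handles all three extreme value domains at once), but offloads the substance to a reference. Your route is essentially a direct proof of the special case of that theorem needed here; it exposes the mechanism (the fixed exponential partial sums $\Gamma_1,\dots,\Gamma_{k+1}$ surviving the normalization $N/\Gamma_{N+1}\to1$) and avoids external citations. One expository quibble: the ``almost sure'' convergence you invoke holds for the \emph{constructed} version $(N\Gamma_j/\Gamma_{N+1})_j$, not for the original $(NS(T_{N-j+1,N}))_j$; since the two agree in distribution for each $N$ and you only need convergence in distribution, this is harmless, but it would be cleaner to say so explicitly. Your handling of the $o_P(1)$ errors (restricting to $\{T_{N-k,N}\ge M\}$ and using that errors vanishing in probability coordinatewise vanish jointly) is fine.
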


%%%%%%%%%%%%%%%%%%%%%%%%%%%%%%%%%%%%%%%%%%%%%%%%%%%%%
\subsection{Accelerating convergence}\label{bestdecay}

Theorem~\ref{expmoments} above gives the first few terms in the asymptotic expansion of the $m$th moment of $T_{N-k,N}$ for $N\gg k+1\ge1$ assuming only that $S(t)$ decays exponentially. In many examples of interest (see section~\ref{examples}), $S(t)$ is a sum of exponentials,
\begin{align*}
S(t)
=Ae^{-\lambda t}+\sum_{n}B_{n}e^{-\beta_{n}t},
\end{align*}
where $0<\lambda<\beta_{1}<\beta_{2}<\cdots$. For this case, the following result gives higher order asymptotic expansions of the $m$th moment of $T_{N-k,N}$ for $N\gg k+1\ge1$. The proof relies on Renyi's representation of exponential order statistics \cite{renyi1953} and some detailed asymptotic estimates.

\begin{theorem}\label{work}
Assume 
\begin{align}\label{assump}
S(t)-Ae^{-\lambda t}
=O(e^{-\beta t})\quad\text{as }t\to\infty,
\end{align}
where $A>0$ and $\beta>\lambda>0$. Then for any moment $m\in(0,\infty)$ and any fixed $k\ge0$, we have 
\begin{align*}
&\E[(T_{N-k,N}-\lambda^{-1}\ln A)^{m}]\\
&\quad=\frac{1}{\lambda^{m}}\bigg(\E\bigg[\Big(\sum_{i=1}^{N-k}\frac{E_{i}}{i+k}\Big)^{m}\bigg]+O\big(N^{-(\beta/\lambda-1)}(\ln N)^{m-1}\big)\bigg)\quad\text{as }N\to\infty,
\end{align*}
where $E_{1},E_{2},\dots,E_{N-k}$ are iid unit rate exponential random variables. In particular,
\begin{align*}
\E[T_{N-k,N}]
=\frac{1}{\lambda}\Big(H_{N}-H_{k}+\ln A+O(N^{-(\beta/\lambda-1)})\Big)\quad\text{as }N\to\infty,
\end{align*}
where $H_{n}=\sum_{r=1}^{n}r^{-1}$ is the $n$-th harmonic number. Further, as $N\to\infty$,
\begin{align*}
\textup{Variance}(T_{N-k,N})
&=\frac{1}{\lambda^{2}}\Big(\sum_{i=1}^{N-k}\frac{1}{(i+k)^{2}}+O(N^{-(\beta/\lambda-1)}\ln N)\Big),\\
&=\frac{1}{\lambda^{2}}\Big(\psi'(k+1)-\psi'(N+1)+O(N^{-(\beta/\lambda-1)}\ln N)\Big),
\end{align*}
where $\psi'(k+1)=\sum_{r=0}^{\infty}(r+k+1)^{-2}$ is the first order polygamma function.
\end{theorem}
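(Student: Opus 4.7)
The plan is to use a direct quantile coupling combined with R\'enyi's representation of exponential order statistics. First, I would represent $T_{N-k,N} =_{\dd} S^{-1}(U_{(k+1)})$, where $U_{(1)} \le \cdots \le U_{(N)}$ are the order statistics of $N$ iid uniform random variables on $(0,1)$ and $S^{-1}$ denotes the generalized inverse; here the $(k+1)$st smallest uniform corresponds to the $(k+1)$st largest FPT because $S$ is decreasing. Inverting \eqref{assump} asymptotically yields
\begin{align*}
S^{-1}(u) = \lambda^{-1}\ln(A/u) + g(u), \qquad g(u) = O(u^{\beta/\lambda - 1}) \text{ as } u \to 0^+,
\end{align*}
so that $T_{N-k,N} - \lambda^{-1}\ln A = -\lambda^{-1}\ln U_{(k+1)} + g(U_{(k+1)})$.

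Next, I would observe that $-\ln U_{(k+1)}$ is distributed as the $(N-k)$th order statistic of $N$ iid unit exponentials, and invoke R\'enyi's representation to express this in distribution as $\sum_{i=1}^{N-k} E_i/(i+k)$. Taking $m$th powers and expanding binomially then reduces matters to bounding the error terms $\E[(-\ln U_{(k+1)})^{m-j}\,|g(U_{(k+1)})|^j]$ for $j = 1, \dots, m$. Since $U_{(k+1)}$ is Beta$(k+1,N-k)$ distributed, I would verify directly the estimates $\E[U_{(k+1)}^\alpha] = O(N^{-\alpha})$ for any fixed $\alpha > 0$ and $\E[(-\ln U_{(k+1)})^p] = O((\ln N)^p)$ for any fixed $p > 0$. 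Applying H\"older's inequality then bounds the $j$th error by $O(N^{-j(\beta/\lambda-1)}(\ln N)^{m-j})$, with the dominant contribution from $j=1$ giving the stated error $O(N^{-(\beta/\lambda-1)}(\ln N)^{m-1})$.

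The main technical obstacle is that the bound $|g(u)| \le C u^{\beta/\lambda - 1}$ is only asymptotic, valid for $u$ below some fixed threshold $u_0 > 0$. I would handle this by splitting the expectation over the events $\{U_{(k+1)} < u_0\}$ and $\{U_{(k+1)} \ge u_0\}$: the preceding estimates apply directly on the first event, while on the second event $T_{N-k,N} \le S^{-1}(u_0)$ is deterministically bounded and $\P(U_{(k+1)} \ge u_0) = O(N^k (1-u_0)^N)$ decays exponentially in $N$, so this contribution is negligible compared to the claimed error. Finally, the explicit mean and variance formulas follow by specializing $m=1$ and $m=2$ and using the identities $\sum_{i=1}^{N-k} 1/(i+k) = H_N - H_k$ and $\sum_{i=1}^{N-k} 1/(i+k)^2 = \psi'(k+1) - \psi'(N+1)$.
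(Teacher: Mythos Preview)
Your approach is correct and takes a genuinely different route from the paper's. The paper computes $\E[(T_{N-k,N}-\lambda^{-1}\ln A)^m]$ via the tail-integral formula $\E[Z^m]=\int_0^\infty mt^{m-1}\P(Z>t)\,dt$, writes this out using the explicit order-statistic distribution \eqref{unwieldy}, and then subtracts the corresponding expression with $S(t)$ replaced by $Ae^{-\lambda t}$ (which by R\'enyi gives the exponential-order-statistic moment). The resulting difference is a sum of integrals that the paper estimates through detailed Laplace-type asymptotics, culminating in a technical lemma (Lemma~\ref{key2}) that invokes a generalization of Watson's lemma for integrands with logarithmic singularities. Your quantile-coupling argument is more direct and probabilistic: by writing $T_{N-k,N}-\lambda^{-1}\ln A = -\lambda^{-1}\ln U_{(k+1)} + g(U_{(k+1)})$ pointwise and bounding cross moments via Beta-distribution estimates and H\"older, you bypass that analytic machinery entirely, and the exponential decay of $\P(U_{(k+1)}\ge u_0)$ plays the same role as the paper's treatment of the event $\{T_{N-k,N}<\ln A\}$.

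One caveat: your binomial-expansion step is only a finite sum for integer $m$, whereas the theorem allows any $m\in(0,\infty)$; the paper's integral representation handles non-integer $m$ without modification. This is easily patched on your side (e.g., via $|(a+b)^m-a^m|\le m|b|(|a|+|b|)^{m-1}$ for $m\ge 1$ and $|(a+b)^m-a^m|\le |b|^m$ for $0<m<1$, followed by the same H\"older argument), but you should flag it explicitly.
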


Theorem~\ref{work} gives better estimates of the moments of $T_{N-k,N}$ than Theorem~\ref{expmoments} in the case that \eqref{assump} holds rather than merely that \eqref{aless} holds with $p=0$. Therefore, assuming \eqref{assump}, Theorem~\ref{work} suggests that we choose the sequence $\{b_{N}\}_{N\ge1}$ in Theorem~\ref{expdist} to be
\begin{align}
\begin{split}
\label{betterb}
b_{N}
&=H_{N}-\gamma+\ln A\\
&=\ln N+\ln A+\frac{1}{2N}-\sum_{j=1}^{\infty}\frac{B_{2j}}{2jN^{2j}},
\end{split}
\end{align}
where $H_{N}=\sum_{r=1}^{N}r^{-1}$ is the $N$-th harmonic number, $\gamma\approx0.5772$ is the Euler-Mascheroni constant, and $B_{j}$ are the Bernoulli numbers. The second equality in \eqref{betterb} follows from expanding $H_{N}$. It follows immediately from \eqref{bnp} that the convergence in distribution in Theorem~\ref{expdist} holds with the choice in \eqref{betterb} assuming \eqref{assump}. Furthermore, Theorem~\ref{work} implies that this convergence in distribution occurs with a faster rate, in the sense that our bound on the mean of the difference $\lambda T_{N-k,N}-b_{N}-Y_{k}$ vanishes at a faster rate as $N\to\infty$ with $b_{N}$ in \eqref{betterb} rather than \eqref{bNalt}.

%%%%%%%%%%%%%%%%%%%%%%%%%%%%%%%%%%%%%
%\newpage
\subsection{Power law decay}\label{powerlawdecay}

We now present analogs of the results in section~\ref{exponentialdecay} for the case that the survival probability of a single FPT $\tau$ vanishes according to a power law.

%%%%%%%%%%%%%%%%%%%%%%%%%%%%%%
\begin{theorem}\label{powerdist}
Assume
\begin{align}\label{pl}
S(t)
\sim 
(\lambda t)^{-p}\quad\text{as }t\to\infty,
\end{align}
for some $\lambda>0$ and $p>0$. For any fixed integer $k\ge0$, we have that
\begin{align*}
\frac{\lambda T_{N-k,N}}{N^{1/p}}
\to_{\dd}Z_{k}\quad\text{as }N\to\infty,
\end{align*}
where $Z_{k}$ has probability distribution,
\begin{align*}
\P(Z_{k}\le {z})
=
\frac{\Gamma(k+1,z^{-p})}{k!}, & \quad \text{if }{z}>0,
\end{align*}
and $\P(Z_{k}\le z)=0$ if $z\le0$,  where $\Gamma(a,x):=\int_{x}^{\infty}u^{a-1}e^{-u}\,\dd u$ denotes the upper incomplete gamma function.

Furthermore, for each fixed $k\ge0$, we have the following convergence in distribution for the joint random variables,
\begin{align}\label{cd4}
\left(\frac{\lambda T_{N-k,N}}{N^{1/p}},\frac{\lambda T_{N-k+1,N}}{N^{1/p}},\dots,\frac{\lambda T_{N,N}}{N^{1/p}}\right)
\to_{\textup{d}}
\mathbf{Z}
\in\R^{k+1}\quad\text{as }N\to\infty,
\end{align}
where $\mathbf{Z}\in\R^{k+1}$ is the random vector,
\begin{align*}
 \mathbf{Z}
=\Big((E_{1}+\dots+E_{k+1})^{-1/p},(E_{1}+\dots+E_{k})^{-1/p},\dots,(E_{1})^{-1/p}\Big)\in\R^{k+1},
\end{align*}
where $E_{1},\dots,E_{k+1}$ are iid unit rate exponential random variables.
\end{theorem}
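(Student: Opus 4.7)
The plan is to work directly from the exact formula \eqref{unwieldy} for the marginal statement and then use a Poisson point process representation for the joint statement.

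For the marginal convergence, I would substitute $t=N^{1/p}z/\lambda$ with $z>0$ fixed into
\[
\P(T_{N-k,N}\le t)=\sum_{i=0}^{k}{N\choose i}(1-S(t))^{N-i}(S(t))^{i}.
\]
The power-law hypothesis \eqref{pl} gives $S(N^{1/p}z/\lambda)\sim z^{-p}/N$, so $NS(t)\to z^{-p}$. For each fixed $i\in\{0,1,\dots,k\}$, the elementary limits ${N\choose i}(S(t))^{i}\sim (NS(t))^{i}/i!\to z^{-pi}/i!$ and $(1-S(t))^{N-i}\to e^{-z^{-p}}$ both hold. Summing the $k+1$ terms yields
\[
\P\big(\lambda T_{N-k,N}/N^{1/p}\le z\big)\to e^{-z^{-p}}\sum_{i=0}^{k}\frac{z^{-pi}}{i!}=\frac{\Gamma(k+1,z^{-p})}{k!},
\]
where the last equality is the standard identification of the Poisson$(z^{-p})$ CDF at $k$ with $\Gamma(k+1,z^{-p})/k!$. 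For $z\le 0$ the probability is trivially zero since $T_{N-k,N}\ge 0$, and the limit distribution of $Z_{k}$ has continuous CDF on all of $\R$, so the convergence in distribution holds at every real $z$.

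For the joint convergence, I would set $U_{n}:=N\lambda^{-p}\tau_{n}^{-p}$. The hypothesis \eqref{pl} yields $\P(U_{n}\le u)\sim u/N$ as $N\to\infty$ for each fixed $u>0$, so by a standard result in extreme value theory the empirical point process $\sum_{n=1}^{N}\delta_{U_{n}}$ converges (vaguely, in distribution) to a unit-rate Poisson point process on $(0,\infty)$. The ordered points of such a process are distributed as the partial sums $E_{1},E_{1}+E_{2},\dots$, where $E_{i}$ are iid unit-rate exponentials. Since $\tau\mapsto N\lambda^{-p}\tau^{-p}$ is strictly decreasing, the $(j+1)$-th smallest $U_{n}$ corresponds to the $(j+1)$-th largest $\tau_{n}$, that is, to $T_{N-j,N}$. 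Hence
\[
\big(N\lambda^{-p}T_{N-k,N}^{-p},\dots,N\lambda^{-p}T_{N,N}^{-p}\big)\to_{\dd}\big(E_{1}+\cdots+E_{k+1},\dots,E_{1}\big),
\]
and applying the continuous map $x\mapsto x^{-1/p}$ componentwise yields \eqref{cd4} with the claimed $\mathbf{Z}$.

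The main obstacle is making the Poisson point process convergence rigorous, but this is standard: one verifies convergence of intensity measures on Borel subsets of $(0,\infty)$ bounded away from $0$ and applies a continuous-mapping theorem for point processes. A purely elementary alternative is to compute the joint CDF of $(T_{N-k,N},\dots,T_{N,N})$ directly from a multinomial-type order statistics formula and pass to the limit term-by-term exactly as in the marginal argument; the algebra is longer but entirely parallel, and no uniformity issues arise because $k$ is fixed and only finitely many binomial factors appear.
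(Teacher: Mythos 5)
Your proof is correct, and it takes a genuinely different route from the paper's. For the marginal statement, the paper only computes the $k=0$ limit directly (via its Lemma on $(1-s_N)^N\to z$) and then invokes Theorem~3.4 of \cite{coles2001} / Theorem~2.1.1 of \cite{haanbook} to upgrade from $T_{N,N}$ to $T_{N-k,N}$ and to the joint vector; you instead pass to the limit term by term in the exact finite sum \eqref{unwieldy}, which handles every fixed $k$ in one stroke and is fully self-contained (each step --- ${N\choose i}(S(t))^{i}\to z^{-pi}/i!$, $(1-S(t))^{N-i}\to e^{-z^{-p}}$ from $NS(t)\to z^{-p}$, and the Poisson-CDF identity $e^{-x}\sum_{i=0}^{k}x^{i}/i!=\Gamma(k+1,x)/k!$ --- is elementary and valid since only finitely many terms appear). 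For the joint statement, your Poisson point process argument via $U_{n}=N\lambda^{-p}\tau_{n}^{-p}$ is essentially a proof of the cited de Haan--Ferreira theorem specialized to this setting: the intensity check $N\P(U_{n}\le u)\to u$ is exactly the regular-variation hypothesis, the ordered atoms of the limiting unit-rate Poisson process are the partial sums $E_{1}+\cdots+E_{j}$, and the componentwise map $x\mapsto x^{-1/p}$ is continuous on $(0,\infty)^{k+1}$, so the continuous mapping theorem applies. What the paper's citation buys is brevity; what your approach buys is transparency and independence from the extreme value theory machinery (your elementary multinomial alternative for the joint law would work too, mirroring the marginal computation). The only point to be careful about, which you already flag, is making the point process convergence rigorous on sets bounded away from the boundary; since the limiting intensity (Lebesgue measure on $(0,\infty)$) has no fixed atoms, the joint convergence of the $k+1$ smallest points follows by the standard argument.
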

%%%%%%%%%%%%%%%%%%%%%%%%%%%%%%

Roughly speaking, Theorem~\ref{powerdist} implies
\begin{align*}
\P(T_{N-k,N}\le t)
\approx
\P(Z_{k}\le N^{-1/p}\lambda t)
=\frac{\Gamma(k+1,N(\lambda t)^{-p})}{k!},\quad\text{if }{{N\gg k+1\ge1}}.
\end{align*}
In the case that $k=0$, the limiting distribution is Frechet with shape $p>0$, $\P(Z_{0}\le {z})=\exp(-z^{-p})$, and so
\begin{align*}
\P(T_{N,N}\le t)
\approx
\P(Z_{0}\le N^{-1/p}\lambda t)
=\exp(-N(\lambda t)^{-p}),\quad\text{if }N\gg 1.
\end{align*}
We further note that, as is evident from the statement of Theorem~\ref{powerdist}, 
\begin{align*}
Z_{k}
=_{\dd}(E_{1}+\dots+E_{k+1})^{-1/p},\quad k\ge0,
\end{align*}
where $E_{1},\dots,E_{k+1}$ are iid unit rate exponential random variables.

The next theorem approximates the moments of $T_{N-k,N}$ assuming $S(t)$ decays according to the power law in \eqref{pl}. For such a power law decay, it follows immediately from \eqref{unwieldy} that 
\begin{align*}
\E[(T_{N-k,N})^{m}]
=\infty\quad\text{if }m\ge(k+1)p.
\end{align*}
Hence, the next theorem assumes $0<m<(k+1)p$. As described in section~\ref{exponentialdecay}, moment convergence does not in general follow from convergence in distribution. We prove Theorem~\ref{powermoments} by showing that the sequence of random variables, $\{(\lambda T_{N-k,N}N^{-1/p})^{r}\}_{N\ge1}$ is uniformly integrable for any even integer $r\ge2$. The proof is quite different from the proof of Theorem~\ref{expmoments}, where the difference stems from the fact that $\E[(T_{N,N})^{m}]=\infty$ and $\E[(T_{N-k,N})^{m}]<\infty$ if $p\le m<(k+1)p$.

%%%%%%%%%%%%%%%%%%%%%%%%%%%%%%
\begin{theorem}\label{powermoments}
Assume
\begin{align*}
S(t)
\sim 
(\lambda t)^{-p}\quad\text{as }t\to\infty,
\end{align*}
for some $\lambda>0$ and $p>0$. Then for any fixed $k\ge0$ and moment $m$ such that
\begin{align*}
0<m<(k+1)p,
\end{align*}
we have that
\begin{align*}
\E[(T_{N-k,N})^{m}]
&\sim \lambda^{-m}N^{m/p}\frac{\Gamma(k+1-m/p)}{\Gamma(k+1)},\quad\text{as }N\to\infty.
\end{align*}
\end{theorem}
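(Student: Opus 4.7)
The strategy is to combine the convergence in distribution from Theorem~\ref{powerdist} with a uniform integrability argument, and then identify the limiting moment by direct computation. By Theorem~\ref{powerdist} we have $\lambda T_{N-k,N}/N^{1/p} \to_{\dd} Z_k =_{\dd} (E_1+\cdots+E_{k+1})^{-1/p}$, so if we can show that $\{(\lambda T_{N-k,N}/N^{1/p})^{m}\}_{N\ge 1}$ is uniformly integrable, then $\E[(\lambda T_{N-k,N}/N^{1/p})^{m}] \to \E[Z_k^{m}]$ as $N\to\infty$, which (after reshuffling the $N^{m/p}$) is exactly the claimed asymptotic.

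Computing the limit is immediate. Since $E_1+\cdots+E_{k+1}$ has Gamma density $u^{k}e^{-u}/k!$, a change of variable gives
\begin{equation*}
\E[Z_k^{m}] \;=\; \frac{1}{k!}\int_0^\infty u^{-m/p}\, u^{k} e^{-u}\,\dd u \;=\; \frac{\Gamma(k+1-m/p)}{k!},
\end{equation*}
which is finite precisely under the hypothesis $0<m<(k+1)p$, because this is exactly the condition $k-m/p>-1$ for the integral to converge at the origin.

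For uniform integrability, it suffices to exhibit a single exponent $r$ with $m<r<(k+1)p$ for which $\sup_{N}\E[(\lambda T_{N-k,N}/N^{1/p})^{r}]<\infty$; the hypothesis $m<(k+1)p$ permits such a choice. Writing the moment via the tail integral and substituting $t=N^{1/p}s/\lambda$ yields
\begin{equation*}
\E\bigl[(\lambda T_{N-k,N}/N^{1/p})^{r}\bigr] \;=\; \int_0^\infty r\, s^{r-1}\,\P\bigl(\lambda T_{N-k,N}>N^{1/p}s\bigr)\,\dd s.
\end{equation*}
The exact expression \eqref{unwieldy} gives $\P(T_{N-k,N}>t) = \sum_{j=k+1}^{N}\binom{N}{j}S(t)^{j}(1-S(t))^{N-j}$, and the assumption $S(t)\sim(\lambda t)^{-p}$ means there is $C>0$ with $S(t)\le C(1+\lambda t)^{-p}$ for all $t\ge 0$. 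Using $\binom{N}{j}\le N^{j}/j!$ and $(1-S(t))^{N-j}\le 1$, each term is bounded by $(NS(t))^{j}/j!$, and when $t=N^{1/p}s/\lambda$ one has $NS(t)\le C s^{-p}$ uniformly in $N$. Summing the geometric-type series in $j$ then produces a bound of the form $\P(\lambda T_{N-k,N}>N^{1/p}s)\le C' s^{-p(k+1)}$ valid for $s$ large (uniformly in $N$), while the trivial bound by $1$ handles small $s$. Since $r<(k+1)p$, the dominating function $r s^{r-1}\min(1, C' s^{-p(k+1)})$ is integrable on $(0,\infty)$, which establishes uniform integrability.

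The main obstacle is making the tail bound on $\P(T_{N-k,N}>N^{1/p}s)$ uniform across all $N$ and all $s$, because the asymptotic $S(t)\sim(\lambda t)^{-p}$ only kicks in for $t$ large, and one must confirm that the regimes where $NS(t)$ is not small (small $s$) and where the binomial tail must be controlled term-by-term (moderate to large $s$) are both handled. This is a routine but careful splitting argument, exploiting the inequality $S(t)\le C(1+\lambda t)^{-p}$ available globally from the power-law assumption. Once this bound is in hand, uniform integrability of $\{(\lambda T_{N-k,N}/N^{1/p})^{m}\}$ follows by the de la Vall\'ee Poussin criterion, and moment convergence together with the direct computation of $\E[Z_k^{m}]$ completes the proof.
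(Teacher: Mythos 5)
Your proposal is correct, and its overall skeleton (convergence in distribution from Theorem~\ref{powerdist}, uniform integrability via a bounded higher moment, then direct computation of $\E[Z_k^m]=\Gamma(k+1-m/p)/k!$ from the Gamma density of $E_1+\cdots+E_{k+1}$) matches the paper's. The difference lies in how the key bound $\sup_N\E[(\lambda T_{N-k,N}/N^{1/p})^r]<\infty$ is established. The paper constructs a stochastic majorant $\tau^+$ whose survival probability is exactly $1$ for $t<t_0$ and exactly $(2\lambda t)^{-p}$ for $t\ge t_0$, and then evaluates $\E[(T^+_{N-k,N})^r]$ in closed form through Beta-function integrals and Gamma-function identities, extracting the precise growth rate $N^{r/p}$. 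You instead bound the tail $\P(T_{N-k,N}>t)=\sum_{j\ge k+1}\binom{N}{j}S(t)^j(1-S(t))^{N-j}$ termwise by $(NS(t))^j/j!$ and observe that $NS(N^{1/p}s/\lambda)\le Cs^{-p}$ uniformly in $N$, yielding the Poisson-tail-type estimate $\P(\lambda T_{N-k,N}>N^{1/p}s)\le C's^{-p(k+1)}$ for large $s$; this is shorter and avoids the exact Beta computation, at the cost of giving only an upper bound rather than the sharp constant (which is all that uniform integrability requires). One point in your favor: you explicitly require $m<r<(k+1)p$, which is genuinely necessary — the paper states only that $r$ is ``an even integer satisfying $r>m$,'' yet its own Beta integrals diverge unless $r<(k+1)p$, and since $T_{N-k,N}\ge0$ there is no need for $r$ to be an even integer, so your non-integer choice of $r$ in the open interval $(m,(k+1)p)$ is the cleaner formulation.
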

%%%%%%%%%%%%%%%%%%%%%%%%%%%%%%

A counterintuitive implication of Theorem~\ref{powermoments} is that
\begin{align}\label{cii}
E[T_{N-k,N}]
<\E[\tau]=\infty\quad\text{if }p\le1<(k+1)p\le Np.
\end{align}
This means that, for example, if $p=1$, then the FPT of any given searcher has infinite mean (i.e.\ $\E[\tau]=\infty$), but the FPT of the second slowest searcher out of $N$ searchers has a finite mean (i.e.\ $\E[T_{N-1,N}]<\infty$) for any $N\ge2$. The result \eqref{cii} is especially counterintuitive if $N\gg1$. To check \eqref{cii} in a simple special case, observe that \eqref{unwieldy} implies that for $k=p=1$, 
\begin{align*}
\E[T_{N-1,N}]
=\int_{0}^{\infty}\P(T_{N-1,N}>t)\,\dd t
&=\sum_{i=0}^{N-2}{N\choose i}\int_{0}^{\infty}(S(t))^{N-i}(1-S(t))^{i}\,\dd t,
\end{align*}
and the slowest decaying integrand in the sum decays like $t^{-2}$ as $t\to\infty$, and thus all the terms in the sum are finite. We discuss \eqref{cii} in section~\ref{halfline} in the context of diffusing searchers in an unbounded spatial domain.

%%%%%%%%%%%%%%%%%%%%%%%%%%%%%%%%%%%%%%%%%%%%%%%%%%%%%%%%%%%%%%%%%%%%%%%%%%%%%%%%%%%%%%%%%%%%%%%%%%%%%%%%%%%%%%%%%%%%%%%%%%%%%%%%%%%%%%%%%%%%%%%%%%%%%%%%%%%%%%%%%%%%%%%%%%%%%%%%%%%%%%%%%%%%%%%%%%%%%%%%%%%%%%%%%%%%%%%%%%%%%%%%%%%%%%%%%%%%%%%%%%%%%%%%%%%%%%%%%%%%%%%%%%%%%%%%%%%%%%%%%%%%%%%%%%%%%%%%%%%%%%%%%%%%%%%%%%%%%%%%%%%%%%%%%%%%%%%%%%%%%%%%%%%%%%%%%%%%%%%%%%%%%%%%%%%%%%%%%%%%%%%%%%%%%%%%%%%%%%%%%%%%%%%%%%%%%%%%%%%%%%%%%%%%%%%%%%%%%%%%%%%%
\section{Menopause timing}\label{meno}

We now apply the general mathematical results of section~\ref{math} to {a} model of ovarian aging and menopause timing. As described in the Introduction, a woman is born with around $N\approx5\times10^{5}$ PFs in her ovarian reserve \cite{wallace2010}. The number of PFs in her reserve then {decays} over the next 40-60 years of her life. Menopause occurs when her reserve drops to around $k\approx10^{3}$ PFs, which is around age 50 years for most women. Hence, {letting} $\{\tau_{n}\}_{n=1}^{N}$ denote the times that each of the $N$ PFs leave the reserve, menopause occurs at time
\begin{align}\label{menopausetime}
T_{N-k,N}\quad\text{for }N\gg k=10^{3}.
\end{align}

{
We can apply the results of section~\ref{exponentialdecay} to any model of ovarian aging if the model assumes
\begin{align}\label{iid}
\{\tau_{n}\}_{n=1}^{N}\text{ are iid},
\end{align}
and
\begin{align}\label{mexp}
S(t):=\P(\tau_{n}>t)\sim Ae^{-\lambda t}\quad\text{as }t\to\infty.
\end{align}
Models of PF dynamics which assume \eqref{iid}-\eqref{mexp} have a long history. Perhaps the earliest models that assume \eqref{iid}-\eqref{mexp} are those of Faddy, Jones, and Edwards \cite{faddy1976} and Faddy, Gosden, and Edwards \cite{faddy1983}, which were models of PF dynamics in mice. Faddy and Gosden \cite{faddy1995} proposed and analyzed a stochastic model of PF dynamics in women which assumed \eqref{iid}-\eqref{mexp}, where the parameters $A>0$ and $\lambda>0$ in \eqref{mexp} were chosen by fitting to PF data. PF decay in a woman's ovarian reserve as been compared to radioactive decay, which satisfies \eqref{iid}-\eqref{mexp}. Specifically, the review by Hirshfield \cite{hirshfield1991} discussed the possibility that PF growth activation is a ``randomized stochastic event, similar to radioactive decay,'' and the book by Finch and Kirkwood \cite{finch2000} discussed ``pure chance'' PF growth activation and discussed the similarity of the decay of PFs in the reserve to radioactive decay. {More recently, experimental results obtained by one of our groups revealed that the integrated stress response (ISR) pathway influences the probability that an individual PF {leaves the reserve} \cite{llerena2021}. Since ISR activity fluctuates over time in a single PF and varies broadly between PFs \cite{hagen2022}, the ISR activity in individual PFs {was recently} modeled by independent random walks, where a single PF {leaves the reserve} when its ISR activity crosses a given threshold \cite{johnson2022}.  
}
}

{Following the ovarian aging model in \cite{johnson2022},} if $X(t)$ denotes the ISR activity of a given PF at time $t\ge0$, then $X$ evolves according to the stochastic differential equation (SDE),
\begin{align}\label{sder}
\dd X
=-V\,\dd t +\sqrt{2D}\,\dd W,\quad X(0)=x,
\end{align}
where $W=\{W(t)\}_{t\ge0}$ is a standard Brownian motion. In \eqref{sder}, $V>0$ is a drift parameter which describes the tendency of the ISR activity $X$ to decrease over time, and $D>0$ is a diffusivity parameter which describes the size of the stochastic fluctuations. {The PF leaves the reserve when $X$ crosses at threshold either at $X=0$ or $X=L$}. Hence, the ovarian reserve exit times $\{\tau_{n}\}_{n=1}^{N}$ in this model are iid realizations of the FPT,
\begin{align}\label{reserveexit}
\tau
:=\inf\{t>0:X(t)\notin(0,L)\}.
\end{align}

By solving the associated backward Kolmogorov equation, the survival probability of a single FPT $\tau$ in \eqref{reserveexit} can be shown to have the following form (see equation~(19) in the Appendix in \cite{johnson2022}),
\begin{align}\label{sre}
S(t)
:=\P(\tau>t)
=Ae^{-\lambda t}
+\sum_{k\ge2}A_{k}e^{-\lambda_{k}t}.
\end{align}
In \eqref{sre}, $A>0$, $A_{k}\in\R$ for $k\ge2$, and 
\begin{align}\label{order0}
0<\lambda<\lambda_{2}<\lambda_{3}<\cdots
\end{align}
depend on the parameters in \eqref{sder}-\eqref{reserveexit}. In particular, the constants in the leading order term in \eqref{sre} are
\begin{align}\label{lambdaA}
\begin{split}
\lambda
&=\frac{D\pi^{2}}{L^{2}}+\frac{V^{2}}{4D},\quad
A
=\frac{4 \sqrt{2} \pi  D^2 \sqrt{L} \big(1- e^{\frac{-L V}{2 D}}\big)}{4 \pi ^2 D^2+L^2 V^2}e^{\frac{V}{2D}x}\sqrt{\frac{2}{L}}\sin\Big(\frac{\pi x}{L}\Big).
\end{split}
\end{align}

Owing to the ordering in \eqref{order0}, the survival probability in \eqref{sre} decays exponentially at large time,
\begin{align*}
S(t)-Ae^{-\lambda t}
=O(e^{-\lambda_{2}t})\quad\text{as }t\to\infty,
\end{align*}
where $0<\lambda<\lambda_{2}$ and $A>0$ are in \eqref{order0}-\eqref{lambdaA}. Therefore, we can apply Theorems~\ref{expdist}, \ref{expmoments}, and \ref{work} to approximate the probability distribution and moments of the time of menopause for a given woman, $T_{N-k,N}$ in \eqref{menopausetime}. {
We note that one can immediately generalize this calculation to the case of stochastic initial conditions. In particular, if the initial distribution of each searcher has measure $\mu$, then one merely replaces $A$ in \eqref{lambdaA} by $\int_{0}^{L}A\,\dd\mu(x)$.
}

%%%%%%%%%%%%%%%%%%%%%%%%%%%%%%%%%
\begin{figure}
  \centering
             \includegraphics[width=1\textwidth]{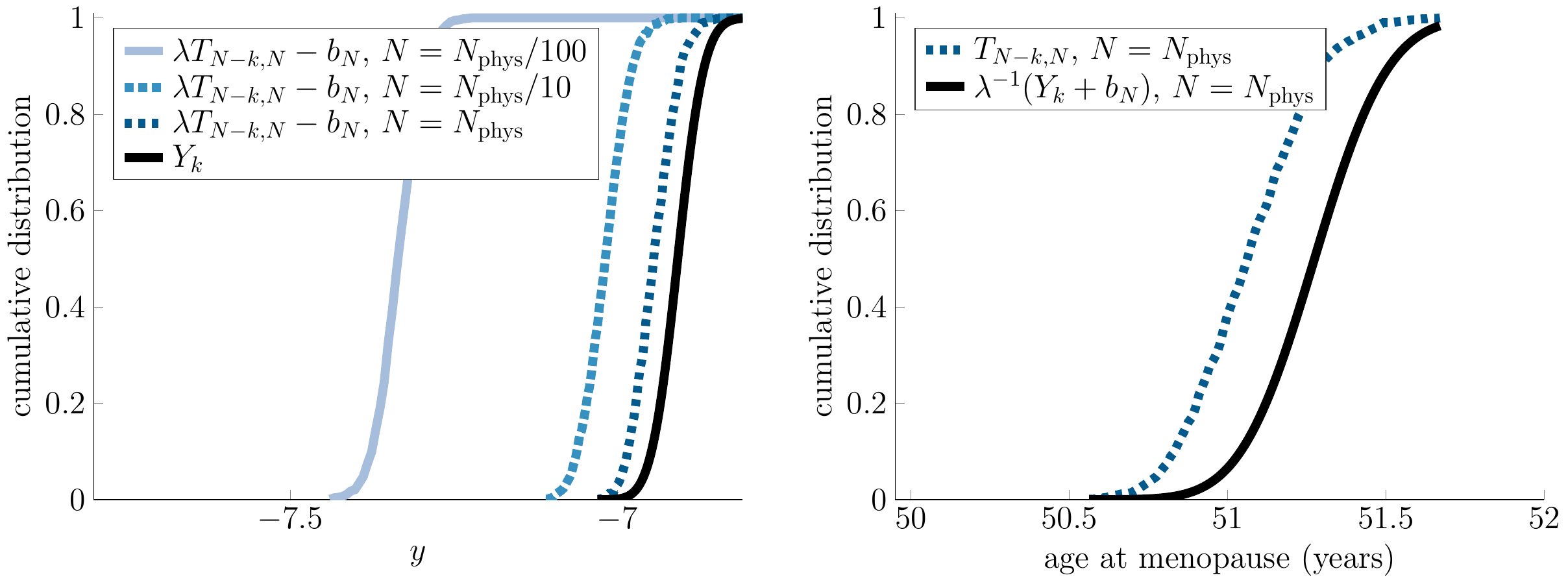}
 \caption{Convergence in distribution implied by Theorem~\ref{expdist} for a model of ovarian aging. See the text for details.}
 \label{figmeno}
\end{figure}
%%%%%%%%%%%%%%%%%%%%%%%%%%%%%%%%%

In Figure~\ref{figmeno}, we show the convergence in distribution implied by Theorem~\ref{expdist}. In the left panel of Figure~\ref{figmeno}, we plot the distribution of the rescaled and shifted slowest FPT, 
\begin{align}\label{cdfsim}
\P(\lambda T_{N-k,N}-b_{N}\le y),
\end{align}
for $k=10^{3}$ and $b_{N}=\ln(AN)$, as the number of PFs increases up to a physiological value of \cite{wallace2010}
\begin{align}\label{Nphys}
N
=N_{\textup{phys}}
:=3.2\times10^{5}.
\end{align} 
The solid black curve in the left panel of Figure~\ref{figmeno} is
\begin{align}\label{cdftheory}
\P(Y_{k}\le y)
=\frac{\Gamma(k+1,e^{-y})}{k!},
\end{align}
as in Theorem~\ref{expdist}. The convergence of \eqref{cdfsim} to \eqref{cdftheory} as $N$ increases is evident in this plot. In this plot, we take the following parameter values,
\begin{align}\label{param}
D/x^{2}
=0.004\,\textup{year}^{-1},\quad
V/x
=0.051\,\textup{year}^{-1},\quad
x/L
=1/2,
\end{align}
which were obtained in \cite{johnson2022} by fitting \eqref{sre} to histological data of PF decay \cite{wallace2010}. See section~\ref{appmeno} in the Appendix for details on the numerical method used to obtain the cumulative distribution function in \eqref{cdfsim}. 

In the right panel of Figure~\ref{figmeno}, we compare the distribution of the time to menopause, $T_{N-k,N}$, for the physiologically relevant PF number, $N=N_{\textup{phys}}$. This plot also shows the distribution of the theoretical approximation,
\begin{align*}
\lambda^{-1}(Y_{k}+b_{N}),
\end{align*}
where $Y_{k}$ is in \eqref{cdftheory} (as in Theorem~\ref{expdist}), $b_{N}=\ln(AN)$, and $\lambda$ and $A$ are in \eqref{lambdaA}. This figure shows that the theoretical approximation to the time of menopause is accurate to about 2 or 3 months. Indeed, the mean of $T_{N-k,N}$ computed from stochastic simulations is
\begin{align*}
\E[T_{N-k,N}]
=51.07\,\textup{years},
\end{align*}
which is within 3 months of the mean of $\E[T_{N-k,N}]$ estimated from ignoring the $o(1)$ terms in Theorem~\ref{expmoments},
\begin{align}\label{slowlog}
\E[T_{N-k,N}]
\approx%\lambda^{-1}(\E[Y_{k}]+b_{N})
%=
\lambda^{-1}(\ln N+\ln A+\gamma-H_{k})
=51.27\,\textup{years}.
\end{align}

{
We now comment on how this analysis offers potential explanations for some perplexing aspects of ovarian biology. First,} these results demonstrate the utility of the very large and seemingly redundant number of PFs. As described in the Introduction, the number $N$ of PFs is a few orders of magnitude greater than the number that will be ovulated or the number engaged in ovarian endocrine function and menstrual signaling. What explains this {so-called ``wasteful'' oversupply \cite{faddy1995, themmen2005, hartshorne2009, inserra2014, albamonte2019}?} These results show that the very large number $N$ of PFs ensures that there will be a supply of PFs available for ovulation for several decades of a woman's life. Indeed, for the parameters in \eqref{param}, one can compute that a typical PF spends less than 20 years in the reserve, {$\E[\tau]\approx19\,\textup{years}$.} Despite the relatively short time {that} a typical PF spends in the reserve, the large number $N$ of PFs ensures that the ovarian reserve lasts around 50 years. In fact, the slow logarithmic growth of $\E[T_{N-k,N}]$ as a function of $N$ means that the number of PFs $N$ must be on the order of hundreds of thousands to extend the lifetime of the reserve much beyond the time {$\E[\tau]\approx19\,\textup{years}$.}

{
Second, despite the enormous variability in the PF starting supply $N$ across a population, the menopause age distribution is quite narrow across a population. Indeed, in a dataset of only 14 girls at birth, the largest $N$ was over 500\% greater than the smallest $N$ (namely, $N=10^{6}$ versus $N=1.5\times10^{5}$) \cite{wallace2010}. In contrast, the menopause age varies by at most around $50\%$ between healthy women (age 40 to 60 years) \cite{weinstein2003}. This discrepancy can be understood immediately from \eqref{slowlog}, since this equation predicts that menopause age depends logarithmically on $N$. Indeed, taking $N=1.5\times10^{5}$ versus $N=10^{6}$ in \eqref{slowlog} yields respective menopause ages of 47 and 58 years.

Third, a unilateral oophorectomy (removal of a single ovary) tends to yield only a slightly earlier menopause age. Numerical estimates vary \cite{yasui2012, bjelland2014, rosendahl2017}, but a unilateral oophorectomy is associated with an earlier menopause age of at most a couple of years. As noted by \cite{bjelland2014}, since removing an ovary cuts the PF count in half, it is counterintuitive that the menopause age ``penalty'' for a unilateral oophorectomy is so small. Furthermore, this penalty is at most only very weakly correlated with the woman's age at the time of unilateral oophorectomy (see Figure~3 in \cite{rosendahl2017}). Both of these observations are consistent with the analysis above. Indeed, \eqref{slowlog} predicts that removing half of the PFs at birth causes a menopause age penalty of only $\ln(2)/\lambda\approx4.0\,\text{years}$, and the iid assumption in \eqref{iid} implies that this penalty is independent of when the ovary is removed (assuming merely that there are at least $k=10^{3}$ {PFs} in the remaining ovary at the time of oophorectomy). {We note that including population heterogeneity in the parameters in \eqref{sder} as in \cite{johnson2022} (see equation (10) therein) would imply that the menopause age penalty estimate of $\ln(2)/\lambda\approx4.0\,\text{years}$ would vary by a fraction of a year between women.}

In addition, the analysis above predicts an interesting consequence of the menopause threshold $k\approx10^{3}\gg1$. In particular, although the model assumes each PF leaves the reserve at an independent random time, the resulting menopause age is nearly deterministic (for a given woman with a fixed $N$). Indeed, Theorem~\ref{expmoments} implies that the menopause age standard deviation is
\begin{align}\label{stddevmeno}
\sqrt{\textup{Variance}(T_{N,N-k})}
\approx\lambda^{-1}\sqrt{\psi'(k+1)},
\end{align}
and taking $k=10^{3}$ in \eqref{stddevmeno} yields a standard deviation of only 2.2 months. In contrast, taking $k=0$ in \eqref{stddevmeno} (i.e.\ menopause occurs when the PF supply is completely exhausted) increases the standard deviation to over 7 years.

Naturally, the analysis above makes a number of simplifying assumptions. For instance, though the iid assumption in \eqref{iid} is common in models of ovarian aging \cite{faddy1976, faddy1983, faddy1995}, neighboring PFs in the ovary may be correlated due to physiological processes that fluctuate over time regionally within the ovary \cite{llerena2021}. {Further, mechanistic knowledge has been accumulating on the ovarian reserve establishment and PF activation \cite{grosbois2020}, and these details are not directly accounted for in the simple iid assumption in \eqref{iid}.} In addition, though it is common to assume a link between the initial PF supply and the menopause age \cite{faddy1996, wallace2010}, some data in mice has cast doubt on this link \cite{bristol2006}. It is likely also that the rate of loss of PFs can be modified by known or unidentified exposures in individuals, but modeling those individual cases will need to be informed by those specific circumstances of exposure type and time.
} {An additional limitation is that the analysis above does not distinguish between (i) PFs which exit the reserve due to growth activation and (ii) PFs which exit the reserve due to atresia. This is in contrast to some prior models which track follicles through multiple stages of development with distinct growth and death rates which are piecewise constant in time  \cite{faddy1976, faddy1983, faddy1995}. Determining the relative contribution of activation versus atresia is critical to jointly predict the numbers of primordial and growing follicles and thus remains an important area for research.
}

%%%%%%%%%%%%%%%%%%%%%%%%%%%%%%%%%%%%%%%%%%%%%%%%%%%%%%%%%%%%%%%%%%%%%%%%%%%%%%%%%%%%%%%%%%%%%%%%%%%%%%%%%%%%%%%%%%%%%%%%%%%%%%%%%%%%%%%%%%%%%%%%%%%%%%%%%%%%%%%%%%%%%%%%%%%%%%%%%%%%%%%%%%%%%%%%%%%%%%%%%%%%%%%%%%%%%%%%%%%%%%%%%%%%%%%%%%%%%%%%%%%%%%%%%%%%%%%%%%%%%%%%%%%%%%%%%%%%%%%%%%%%%%%%%%%%%%%%%%%%%%%%%%%%%%%%%%%%%%%%%%%%%%%%%%%%%%%%%%%%%%%%%%%%%%%%%%%%%%%%%%%%%%%%%%%%%%%%%%%%%%%%%%%%%%%%%%%%%%%%%%%%%%%%%%%%%%%%%%%%%%%%%%%%%%%%%%%%%%%%%%%%
\section{Stochastic search}\label{examples}

In this section, we use the general mathematical results of section~\ref{math} to investigate several prototypical models of stochastic search and compare the asymptotic theory to numerical simulations. The details of the numerical simulation methods are given in section~\ref{appnumerical} in the Appendix. 

%%%%%%%%%%%%%%%%%%%%%%%%%%%%%%%%%%%%%
\subsection{Diffusive escape from an interval}\label{interval}

Let $\{X(t)\}_{t\ge0}$ be a one-dimensional, pure diffusion process with diffusivity $D>0$. Let $\tau$ be the FPT for the diffusion to escape the interval $(-L,L)$,
\begin{align*}
\tau
:=\inf\{t>0:X(t)\notin(-L,L)\}.
\end{align*}
Assume that the searcher starts at $X(0)=x_{0}\in[0,L)$. 
Let $\{\tau_{n}\}_{n=1}^{N}$ be an iid sequence of $N$ realizations of $\tau$, representing the FPTs of $N$ iid searchers. 

A standard eigenfunction analysis of the associated backward Kolmogorov equation yields that the survival probability, $S(t):=\P(\tau>t)$, decays exponentially at large time (see section~\ref{appinterval} in the Appendix),
\begin{align*}
S(t)-Ae^{-\lambda t}
=O(e^{-\beta t})\quad\text{as }t\to\infty,
\end{align*}
where $\beta=4\lambda$ and 
\begin{align}\label{lA}
\lambda
=\frac{\pi^{2}}{4}\frac{D}{L^{2}},\quad
A=\frac{4}{\pi}\sin(\pi(x_{0}+L)/(2L)).
\end{align}
Theorems~\ref{expdist}, \ref{expmoments}, and \ref{work} thus yield the distribution and moments of the $(k+1)$st slowest FPT, $T_{N-k,N}$ if ${{N\gg k+1\ge1}}$. In particular, Theorem~\ref{work} implies that the mean FPT of the slowest searcher satisfies
\begin{align}\label{exm}
\E[T_{N,N}]
&=\frac{1}{\lambda}\Big(H_{N}+\ln A+O(N^{-3})\Big),\quad\text{as }N\to\infty.
\end{align}
{We note that it is immediate to generalize this calculation to the case of stochastic initial conditions. In particular, if the initial distribution of each searcher has measure $\mu$, then one merely replaces $A$ in \eqref{lA} by (see section~\ref{appinterval} in the Appendix for more details)
\begin{align*}
\int_{-L}^{L}\frac{4}{\pi}\sin(\pi(x_{0}+L)/(2L))\,\dd \mu(x_{0}).
\end{align*}}

%%%%%%%%%%%%%%%%%%%%%%%%%%%%%%%%%
\begin{figure}
  \centering
             \includegraphics[width=1\textwidth]{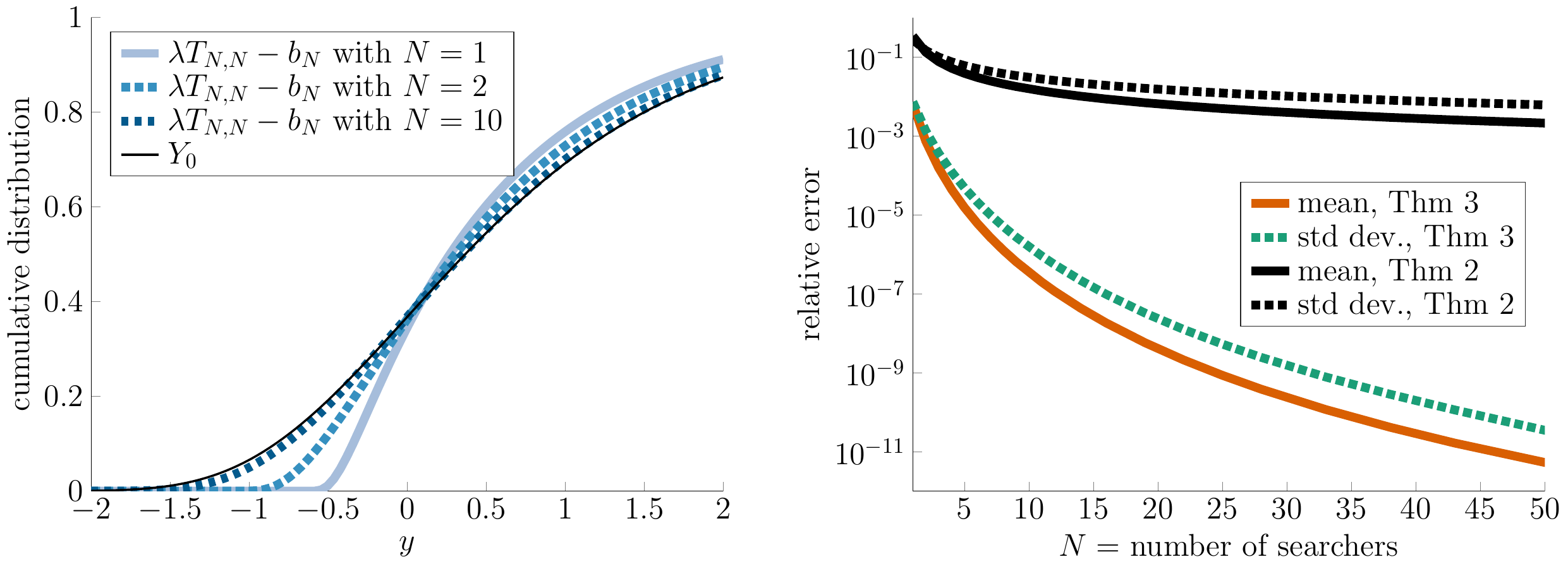}
 \caption{Convergence in distribution and moments for diffusive escape from an interval. The left panel plots the distribution of the rescaled slowest FPT, $\lambda T_{N,N}-b_{N}$, with $b_{N}$ in \eqref{betterb} for $N=1,2,10$, which converges rapidly to the distribution of $Y_{0}$ defined in Theorem~\ref{expdist}. The right panel plots the relative errors in \eqref{remean}-\eqref{restd} for computing the mean and standard deviation of $T_{N,N}$ via Theorem~\ref{work}.}
 \label{figinterval}
\end{figure}
%%%%%%%%%%%%%%%%%%%%%%%%%%%%%%%%%

In Figure~\ref{figinterval}, we illustrate the conclusions of Theorems~\ref{expdist}, \ref{expmoments}, and \ref{work} for this example. In the left panel of Figure~\ref{figinterval}, we plot the distribution of $\lambda T_{N,N}-b_{N}$ where $b_{N}=H_{N}-\gamma+\ln A$ (see \eqref{betterb}) for $N=1,2,10$. This plot shows that $\lambda T_{N,N}-b_{N}$ converges in distribution very rapidly to the random variable $Y_{0}$ defined in Theorem~\ref{expdist} as $N$ increases. In the right panel, we plot the relative error {for} the approximations in Theorem~\ref{expmoments} and Theorem~\ref{work} {for} the mean and standard deviation of $T_{N,N}$. That is, we plot the following relative errors for Theorem~\ref{expmoments} (black curves),
\begin{align}\label{remean}
\bigg|\frac{\E[T_{N,N}]-\lambda^{-1}(\ln(AN)+\gamma)}{\E[T_{N,N}]}\bigg|,\;
\bigg|\frac{\sqrt{\textup{Var}(T_{N,N})}-\lambda^{-1}(\pi^{2}/6)}{\sqrt{\textup{Var}(T_{N,N})}}\bigg|
\end{align}
and the following relative errors for Theorem~\ref{work} (orange and green curves),
\begin{align}\label{restd}
\bigg|\frac{\E[T_{N,N}]-\lambda^{-1}(H_{N}+\ln A)}{\E[T_{N,N}]}\bigg|,\;
\bigg|\frac{\sqrt{\textup{Var}(T_{N,N})}-\lambda^{-1}(\pi^{2}/6-\psi'(N+1))}{\sqrt{\textup{Var}(T_{N,N})}}\bigg|.
\end{align}
As expected, the error for the approximations in Theorem~\ref{work} vanish very quickly. 
In Figure~\ref{figinterval}, the distribution and statistics of $T_{N,N}$ are computed from the survival probability of $\tau$, which can be obtained via a standard eigenfunction calculation (see section~\ref{appinterval} in the Appendix for details). In Figure~\ref{figinterval}, we take $L=1/2$, $x_{0}=0$, and $D=1$. 

The rapid convergence of the approximations in Theorem~\ref{work} to the slowest FPT $T_{N,N}$ contrasts with the very slow convergence of approximations to the fastest FPT $T_{1,N}$. Indeed, the relative error for approximating the mean slowest FPT $\E[T_{N,N}]$ using Theorem~\ref{work} is less than 1\% for $N=1$, whereas a comparable relative error for approximating the mean fastest FPT $\E[T_{1,N}]$ for this one-dimensional diffusion problem is around $N=10^{6}$ \cite{lawley2020dist}.

Continuing the comparison of slowest and fastest FPTs, if we take only the leading order term in \eqref{exm} (using the expansion in \eqref{betterb}), then the mean slowest FPT has the following simple form,
\begin{align}\label{order1}
\E[T_{N,N}]
\sim\frac{1}{\lambda}\ln N
=\frac{4}{\pi^{2}}\frac{L^{2}}{D}\ln N,\quad\text{as }N\to\infty.
\end{align}
Considering only a given single searcher, we have that
\begin{align}\label{order2}
\E[\tau_{n}]
=\E[\tau]
=\frac{1}{2}\frac{L^{2}}{D}(1-x_{0}^{2}/L^{2}).
\end{align}
The fastest searcher satisfies \cite{lawley2020uni}
\begin{align}\label{order3}
\E[T_{1,N}]
\sim\frac{1}{4}\frac{L^{2}}{D}(1-x_{0}/L)^{2}\frac{1}{\ln N}\quad\text{as }N\to\infty.
\end{align}
There are a few things to notice about the order statistics in \eqref{order1}-\eqref{order3}. First, the leading order mean slowest FPT in \eqref{order1} is independent of the starting location $x_{0}\in[0,L)$. In contrast, the prefactors in the mean FPT of a single searcher in \eqref{order2} and the mean fastest FPT in \eqref{order3} depend on $x_{0}$.

Furthermore, as long as $x_{0}/L\not\approx1$, there is relatively little difference between \eqref{order1}, \eqref{order2}, and \eqref{order3}, due to the slow growth of $\ln N$ as $N$ increases. That is, for values of $N$ which are of interest in typical applications, mean FPTs for the slowest searcher in \eqref{order1}, the single searcher in \eqref{order2}, and the fastest searcher in \eqref{order3} are quite similar. Concretely, even if $N=10^{5}\gg1$, $\E[T_{N,N}]$ is only about an order of magnitude slower than $\E[\tau]$, and $\E[\tau]$ is only about an order of magnitude slower than $\E[T_{1,N}]$. Finally, we also point out that there is a form of symmetry in how the slowest FPT and fastest FPT relate to a single FPT, in that \eqref{order1}-\eqref{order3} imply
\begin{align}\label{symmetry}
\frac{\E[\tau]}{\E[T_{N,N}]}
\approx\frac{1}{\ln N}
\approx\frac{\E[T_{1,N}]}{\E[\tau]}\quad\text{if }N\gg1,
\end{align}
where the approximate equalities in \eqref{symmetry} merely ignore order one constants. 
In section~\ref{rare} below, we see that \eqref{symmetry} can break down, and we can instead have that $\E[\tau]/\E[T_{N,N}]\gg\E[T_{1,N}]/\E[\tau]$ if $N\gg1$.

%%%%%%%%%%%%%%%%%%%%%%%%%%%%%%%%%%%%%
\subsection{Rare {diffusive} escape (small target or deep well)}\label{rare}

In a variety of scenarios of biophysical interest, the FPT for a diffusive searcher to find a target in a bounded spatial domain satisfies
\begin{align}\label{ne}
S(t)
\sim Ae^{-\lambda t}\quad\text{as }t\to\infty,
\end{align}
where
\begin{align}\label{ne2}
A
\approx1\quad\text{and}\quad
0<\lambda\ll D/L^{2},
\end{align}
where $D>0$ is the diffusivity of the searcher and $L>0$ is the shortest distance a searcher must travel to reach the target (which assumes that searchers cannot start arbitrarily close to the target). Equations~\eqref{ne}-\eqref{ne2} mean that the time it takes most searchers to find the target is much longer than the diffusion timescale $L^{2}/D$. For example, it is well known that \eqref{ne}-\eqref{ne2} hold (i) for small targets (which is the so-called narrow escape or narrow capture problem \cite{lindsay2017}), (ii) for partially reactive targets which are small and/or have low reactivity \cite{lawley2019imp}, and (iii) if the searcher must escape a deep potential well before reaching the target \cite{williams1982}. 

The point of this section is to show what our results imply about slowest FPTs in the case that \eqref{ne}-\eqref{ne2} hold. Equations~\eqref{ne}-\eqref{ne2} imply that a single FPT $\tau$ is approximately exponentially distributed with rate $\lambda>0$, and thus
\begin{align}\label{single}
\E[\tau]\approx \lambda^{-1}\gg L^{2}/D.
\end{align}
Furthermore, it was shown in \cite{madrid2020comp} that the fastest FPT satisfies
\begin{align}\begin{split}\label{fast}
\E[T_{1,N}]
\approx
\begin{cases}
(\lambda N)^{-1}
\approx\E[\tau]/N & \text{if }N\ln N\ll(\lambda L^{2}/D)^{-1},\\
(L^{2}/D)(4\ln N)^{-1} & \text{if }N/\ln N\gg(\lambda L^{2}/D)^{-1}.
\end{cases}
\end{split}
\end{align}
In words, \eqref{single} states that the mean FPT of a single searcher, $\E[\tau]$, is much slower than the diffusion timescale of $L^{2}/D$. Further, \eqref{fast} states that the mean FPT of the fastest searcher, $\E[T_{1,N}]$, decays like $\E[\tau]/N$ for small to moderately large $N$, and $\E[T_{1,N}]$ finally decays like $(L^{2}/D)(4\ln N)^{-1}$ for very large $N$ (the behavior in \eqref{fast} is shown in Figure~\ref{figrare}).

Now, Theorem~\ref{expmoments} implies that the mean FPT of the slowest searcher, $\E[T_{N,N}]$, satisfies $\E[T_{N,N}]\sim\lambda^{-1}\ln N$ as $N\to\infty$, and thus \eqref{single}-\eqref{fast} imply
\begin{align}\label{asymm}
\frac{\E[\tau]}{\E[T_{N,N}]}
\approx\frac{1}{\ln N}
\gg\frac{\E[T_{1,N}]}{\E[\tau]}\quad\text{if }N\gg1.
\end{align}
To summarize, in the case that a typical single searcher finds the target much more slowly than the diffusion timescale, we have that (a) the fastest searcher out of $N\gg1$ searchers finds the target much faster than a single searcher, and in contrast, (b) the slowest searcher out of $N\gg1$ searchers is by comparison only slightly slower than a single searcher.

%%%%%%%%%%%%%%%%%%%%%%%%%%%%%%%%%
\begin{figure}
  \centering
             \includegraphics[width=.6\textwidth]{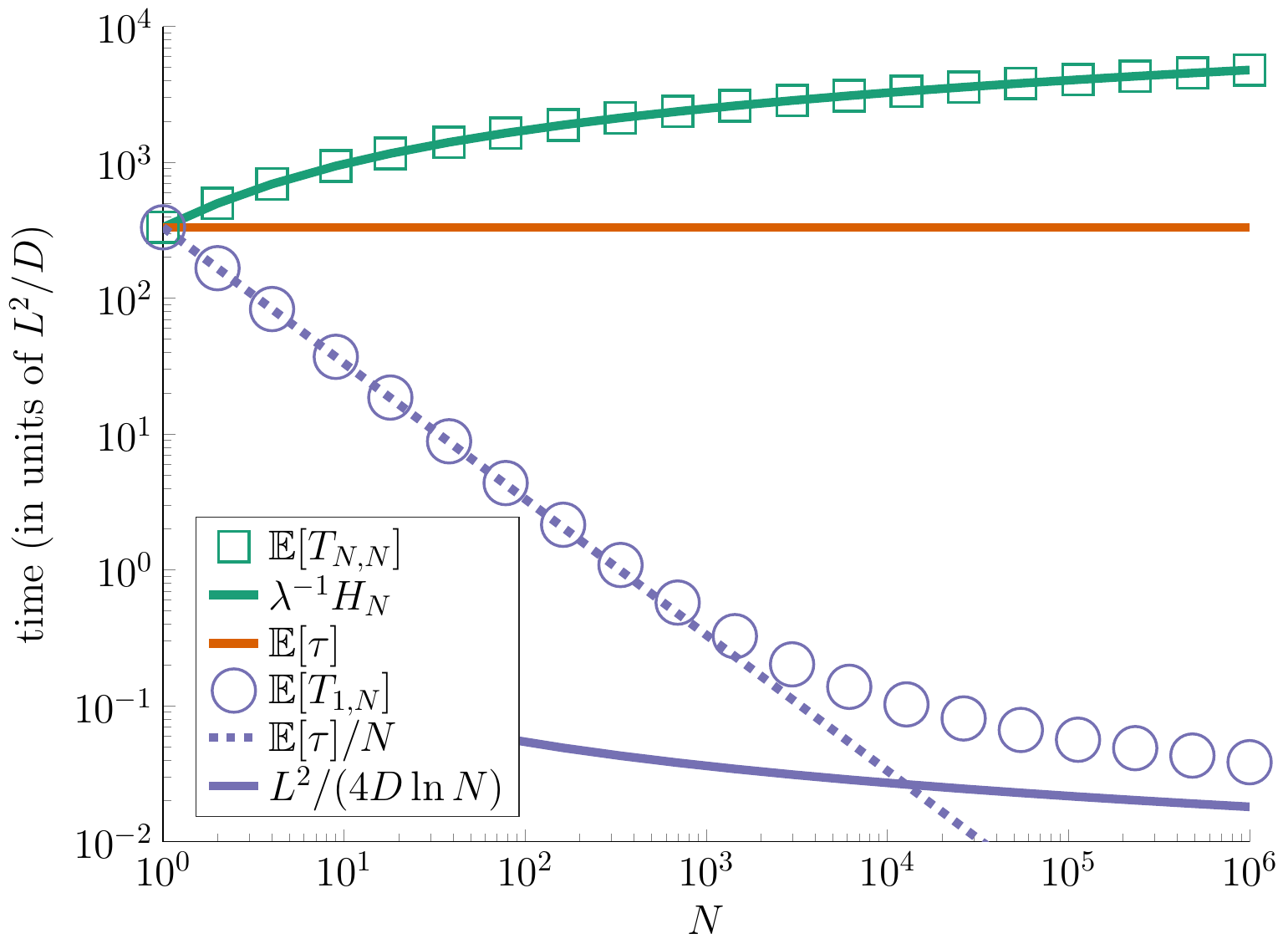}
 \caption{Slowest, fastest, and typical FPTs for a narrow capture problem. A typical single FPT ($\E[\tau]$, orange solid curve) is much slower than the diffusion timescale $L^{2}/D$, and the slowest FPT ($\E[T_{N,N}]$, green square markers) is only mildly slower {than $\E[\tau]$}. In contrast, the fastest FPT ($\E[T_{1,N}]$, purple circle markers) is much {faster} than $\E[\tau]$ and the diffusion timescale $L^{2}/D$. The green solid curve shows the leading order slowest FPT estimate from Theorem~\ref{expmoments}. The dotted and solid purple curves show the estimates in \eqref{fast}. See the text for details.}
 \label{figrare}
\end{figure}
%%%%%%%%%%%%%%%%%%%%%%%%%%%%%%%%%

We illustrate this point in Figure~\ref{figrare}. In this figure, we plot mean slowest and mean fastest FPTs for searchers with a small target. More specifically, we consider searchers which move by pure diffusion with diffusivity $D>0$ in a three-dimensional spherical domain of radius $L>0$. Searchers start at the boundary of the sphere, which is reflecting, and diffuse until they hit the target, which is a small sphere in the center of the domain with radius $a=L/10^{3}$. It is evident from Figure~\ref{figrare} that \eqref{asymm} holds in that the fastest FPT is much faster than a single FPT and the slowest FPT is comparatively only slightly slower than a single FPT. Indeed, for Figure~\ref{figrare} we have that
\begin{align*}
\frac{\E[\tau]}{\E[T_{N,N}]}
=7\times10^{-2}
\approx\frac{1}{\ln N}
\gg\frac{\E[T_{1,N}]}{\E[\tau]}
=1.2\times10^{-4}\quad\text{if }N=10^{6}.
\end{align*}

Finally, reiterating a point made in section~\ref{interval}, Figure~\ref{figrare} illustrates that the approximation to $\E[T_{N,N}]$ from Theorem~\ref{work} is much more accurate than the extreme value approximation to $\E[T_{1,N}]$ for finite $N$. In particular, in Figure~\ref{figrare} notice that the values of $\E[T_{N,N}]$ (green square markers) are nearly indistinguishable from the approximation $\lambda^{-1}H_{N}$ (solid green curve) from Theorem~\ref{work} (using that $\ln A\approx0$ by \eqref{ne2}) for all $N\ge1$, whereas the values of $\E[T_{1,N}]$ (purple circle markers) are relatively far from the leading order extreme value theory approximation $L^{2}/(4D\ln N)$ (solid purple curve).

To summarize, the following picture emerges if the typical search time is much slower than the diffusion timescale (i.e.\ if \eqref{ne}-\eqref{ne2} hold). Slowest searchers are only slightly slower than typical searchers, whereas fastest searchers are much faster than typical searchers. Further, the searcher starting position does not strongly impact slowest and typical FPTs, whereas fastest FPTs depend critically on searcher starting positions.

%%%%%%%%%%%%%%%%%%%%%%%%%%%%%%%%%%%%
\subsection{Partially absorbing target (example due to \cite{grebenkov2022})}\label{sg}

A very interesting recent work in the chemical physics literature considered slowest FPTs of diffusing searchers in a bounded domain \cite{grebenkov2022}. By analyzing an eigenfunction expansion of the associated survival probability of a single searcher of the form,
\begin{align*}
S(t)
=Ae^{-\lambda t}+O(e^{-\beta t}),\quad\text{as }t\to\infty,
\end{align*}
where $0<\lambda<\beta$, these authors derived the following approximation for the mean slowest FPT,
\begin{align}\label{gk}
\E[T_{N,N}]
\approx\frac{1}{\lambda}\Big(\ln N+\ln A-\ln\ln\big[\tfrac{1}{1-\xi}\big]\Big),\quad\text{if }N\gg1,
\end{align}
where $\xi$ was an unknown constant near $\xi\approx0.5$. These authors found that $\xi\approx0.428$ yielded the best fit of \eqref{gk} to numerical simulations.

By Theorem~\ref{expmoments} above, we see that $-\ln\ln[\tfrac{1}{1-\xi}]$ in \eqref{gk} should be replaced by the Euler-Mascheroni constant $\gamma\approx0.5772$. Indeed, setting $\xi=0.428$ yields $-\ln\ln[\tfrac{1}{1-\xi}]=0.5823\approx\gamma$. Furthermore, Theorem~\ref{work} above yields corrections to the expansion in \eqref{gk} up to order $N^{-(\lambda/\beta-1)}$.

In Figure~\ref{figkappa}, we consider the specific example studied in \cite{grebenkov2022}, in which each searcher diffuses in a three-dimensional sphere centered on an interior spherical target that is partially absorbing (as in Figure~4 in \cite{grebenkov2022}, the domain has radius 10, the searchers start at radius 5, and the diffusion coefficient, target radius, and target reactivity are all unity). This figure illustrates the very high accuracy of the approximation $\E[T_{N,N}]\approx\lambda^{-1}(H_{N}+\ln A)$ given by Theorem~\ref{work} (solid purple curve), in which the relative error is less than $10^{-3}$ for $N=1$ and dips to nearly $10^{-14}$ by $N=8$. The details for this example are given in section~\ref{appkappa} in the Appendix.

%%%%%%%%%%%%%%%%%%%%%%%%%%%%%%%%%
\begin{figure}
  \centering
	\includegraphics[width=.49\textwidth]{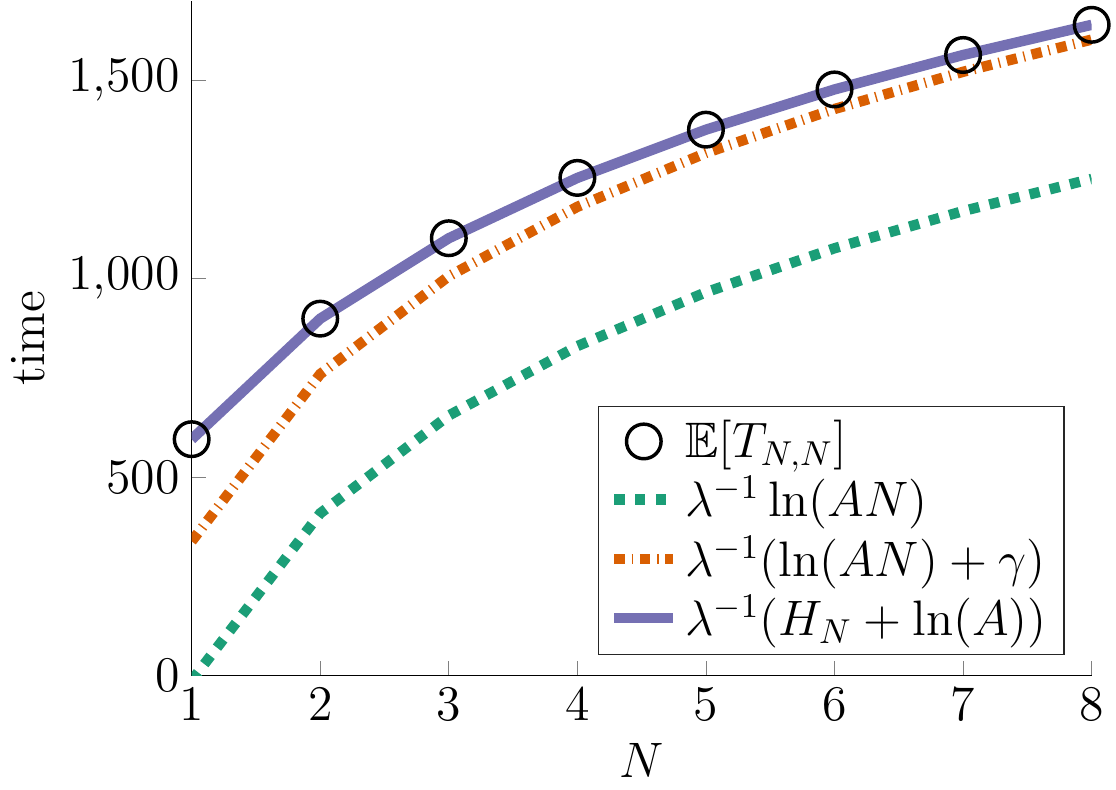}
	\includegraphics[width=.49\textwidth]{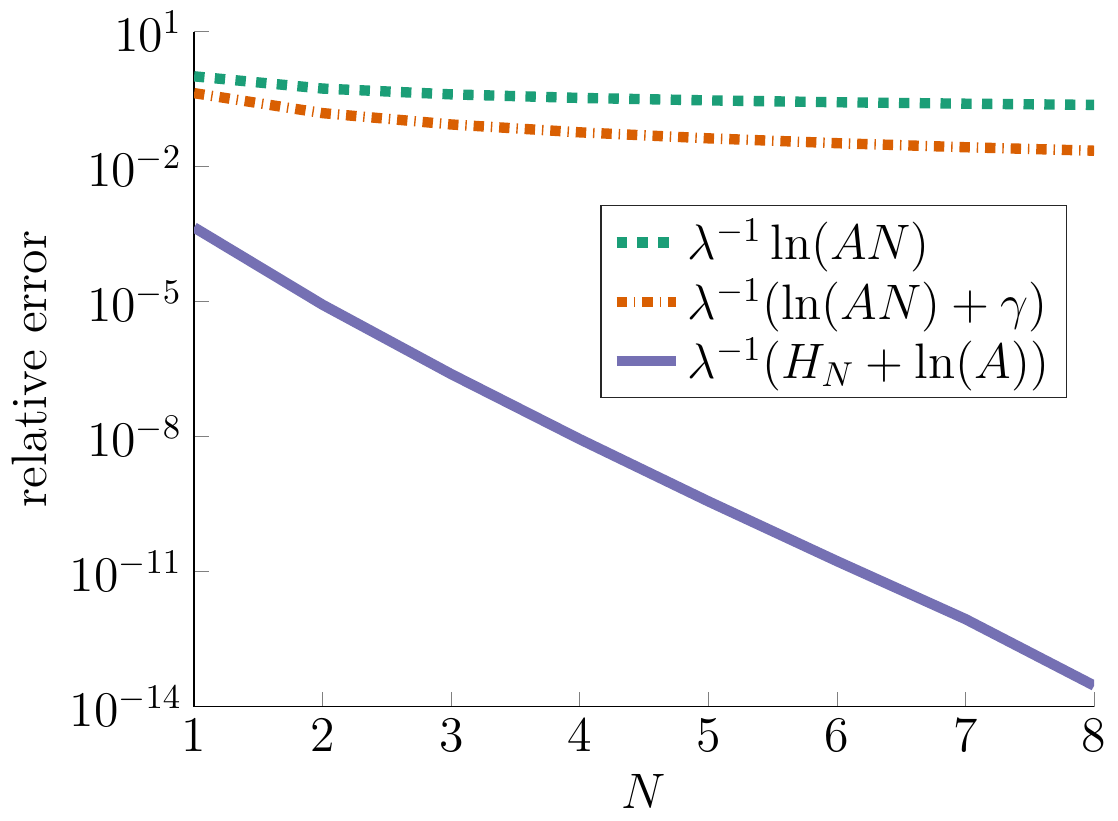}
 \caption{Comparison of Theorem~\ref{expmoments} and Theorem~\ref{work} to a problem studied by \cite{grebenkov2022}. See the text for details.
 }
 \label{figkappa}
\end{figure}
%%%%%%%%%%%%%%%%%%%%%%%%%%%%%%%%%%%%%%%%

%%%%%%%%%%%%%%%%%%%%%%%%%%%%%%%%%%%%%
\subsection{Random walk on discrete network}

We now briefly describe how to apply our results to a random walk on a discrete network. Let $\{X(t)\}_{t\ge0}$ be an irreducible, continuous-time Markov chain on a finite state space $I$ (i.e.\ the network) with infinitesimal generator matrix $Q\in\R^{|I|\times|I|}$ \cite{norris1998}. Recall that this means that the entry in row $i$ and column $j$ of $Q$ denotes the rate that $X$ jumps from state $i$ to $j$ if $i\neq j$ and the diagonal entries of $Q$ are chosen so that {$Q$} has zero row sums. Let $I_{\target}\subset I$ denote some set of ``target'' states, and define the FPT to $I_{\target}$,
\begin{align*}
\tau
:=\inf\{t>0:X(t)\in I_{\target}\}.
\end{align*}
Let $\rho=\{\rho_{i}\}_{i\in I}=\{\P(X(0)=i)\}_{i\in I}\in\R^{|I|\times1}$ denote the initial distribution of $X$ and assume that $\rho_{i}=\P(X(0)=i)=0$ for all $i\in I_{\target}$, which merely means that $X$ cannot start in the target set. The survival probability is then given by
\begin{align}\label{ctmc}
S(t)
:=\P(\tau>t)
=\overline{\rho}^{\top}e^{\overline{Q}t}\mathbf{1},
\end{align}
where 
%\begin{align*}
{$
\overline{\rho}=\{\rho_{i}\}_{i\in I\backslash I_{\target}}\in\R^{|I\backslash I_{\target}|\times1},
$}
%\end{align*}
is the vector obtained by discarding the elements of $\rho$ corresponding to states in $I_{\target}$ (and $\overline{\rho}^{\top}\in\R^{1\times|I\backslash I_{\target}|}$ denotes the transpose of $\overline{\rho}$), 
%\begin{align*}
{$
\overline{Q}=\{Q_{i,j}\}_{i,j\in I\backslash I_{\target}}\in\R^{|I\backslash I_{\target}|\times|I\backslash I_{\target}|}
$} 
%\end{align*}
denotes the matrix obtained by discarding all the rows and columns corresponding to states in $I_{\target}$, and $\mathbf{1}\in\R^{|I\backslash I_{\target}|\times1}$ is the column vector of all ones.

The form of the survival probability in \eqref{ctmc} implies that it must decay at large time according to 
%\begin{align*}
{$
S(t)
\sim A(\lambda t)^{r}e^{-\lambda t}\quad\text{as }t\to\infty,
$} 
%\end{align*}
where $A>0$, $r\in\{0,1,2,\dots\}$, and $\lambda>0$ depend on $\overline{\rho}$ and $e^{\overline{Q}t}$. Hence, the slowest FPTs satisfy the assumptions Theorems~\ref{expdist}-\ref{expmoments}, where the scalings $b_{N}$ involve either logarithms (if $r=0$) or the lower branch of the {Lambert W} function (if $r\ge1$). Hence, Theorems~\ref{expdist}-\ref{expmoments} (and Theorem~\ref{work} if $r=0$) yield the full distribution and all the moments of $T_{N-k,N}$ if ${{N\gg k+1\ge1}}$.

%%%%%%%%%%%%%%%%%%%%%%%%%%%%%%%%%%%%%
\subsection{Diffusion on half-line}\label{halfline}

We now consider an example in which the survival probability of a single FPT has power law decay rather than exponential decay. Let $\{X(t)\}_{t\ge0}$ be a one-dimensional, pure diffusion process with diffusivity $D>0$. Let $\tau$ be the FPT to reach the origin,
\begin{align}\label{tauhalf}
\tau
:=\inf\{t>0:X(t)=0\},
\end{align}
and assume that the searcher starts at $X(0)={x}>0$. The important distinction between this example and the diffusion examples above is that the domain in this example is unbounded. 

The survival probability is \cite{carslaw1959}
\begin{align}\label{erf}
S(t)
:=\P(\tau>t)
=\textup{erf}({x}/\sqrt{4Dt}),
\end{align}
where $\textup{erf}(z)=\frac{2}{\sqrt{\pi}}\int_{0}^{z}e^{-u^{2}}\,\dd u$ denotes the error function. Taking $t\to\infty$ in \eqref{erf} yields 
\begin{align}\label{Shalf}
S(t)
\sim(\lambda t)^{-p}\quad\text{as }t\to\infty,
\end{align}
where
\begin{align}\label{Shalfp}
\lambda
=\frac{\pi D}{{x}^{2}},\quad
p
=\frac{1}{2}.
\end{align} 

%%%%%%%%%%%%%%%%%%%%%%%%%%%%%%%%%
\begin{figure}
  \centering
             \includegraphics[width=1\textwidth]{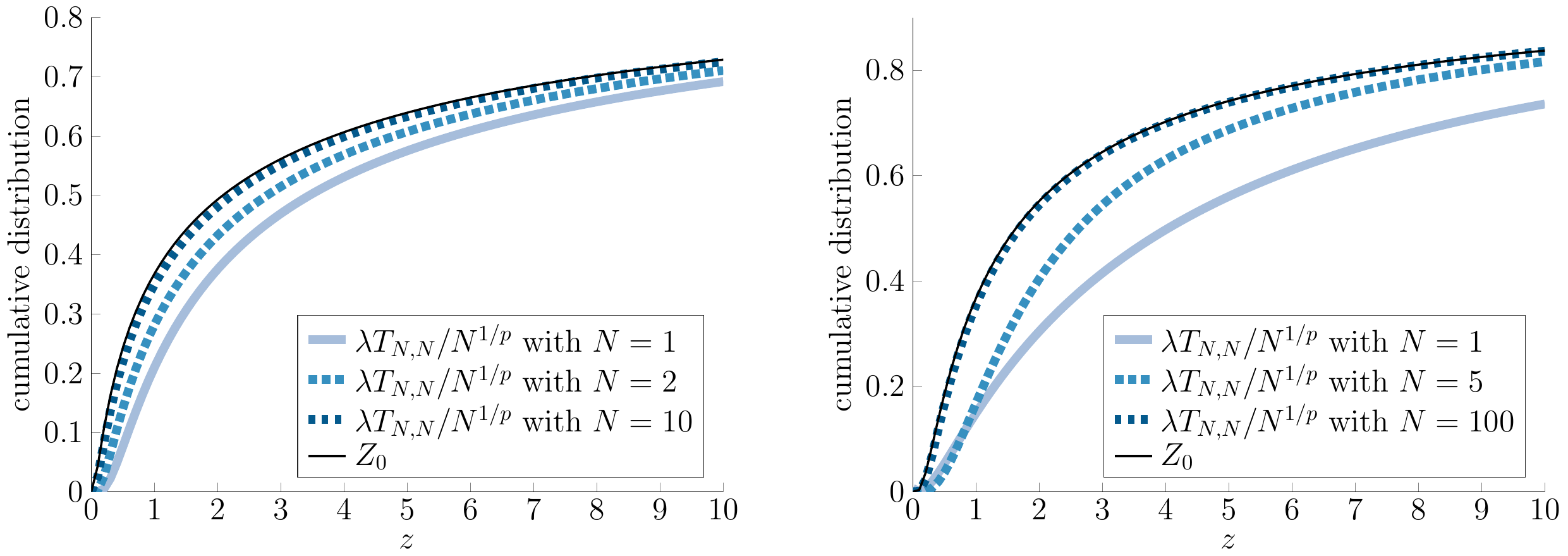}
 \caption{\textbf{Left:} Convergence in distribution for the half-line problem in section~\ref{halfline}. \textbf{Right:} Convergence in distribution for the subdiffusion problem in section~\ref{sub}. See the text for details.}
 \label{fighalfsub}
\end{figure}
%%%%%%%%%%%%%%%%%%%%%%%%%%%%%%%%%

In the left panel of Figure~\ref{fighalfsub}, we plot the convergence in distribution implied by Theorem~\ref{powerdist}. We again see that the convergence rate is rapid for this slowest FPT. In this plot, we take $D=x=1$.

Since $p=1/2$, notice that Theorem~\ref{powermoments} and \eqref{cii} implies that
\begin{align*}
\E[T_{N-k,N}]
<\E[T_{N,N}]
=\E[\tau]=\infty,\quad\text{for any }2<k+1\le N.
\end{align*}
That is, the third slowest searcher out of $N\ge3$ searchers has a finite mean FPT, despite the fact that any given searcher has an infinite mean FPT. This result is counterintuitive in the case of many searchers, since it means that the third slowest out of $N\gg1$ searchers is actually faster than a typical searcher, in the sense that $\E[T_{N-2,N}]<\infty$ and $\E[\tau]=\infty$.

Generalizing this example, suppose each diffusing searcher experiences a constant drift toward the origin. That is, the position of a searcher evolves according to the SDE,
\begin{align}\label{drift}
\dd X
=-V\,\dd t+\sqrt{2D}\,\dd W,\quad X(0)=x>0,
\end{align}
where $W$ is a standard Brownian motion and the drift $V>0$ pushes the searcher ``down'' toward the origin. We show in section~\ref{apphalfline} in the Appendix that the survival probability for \eqref{tauhalf} is
\begin{align}\label{SV}
S(t)=\frac{1}{2}\bigg[1+\textup{erf}\Big(\frac{x-Vt}{\sqrt{4Dt}}\Big)
-e^{\frac{Vx}{D}}\textup{erfc}\Big(\frac{x+Vt}{\sqrt{4Dt}}\Big)\bigg],\quad t>0.
\end{align}
Expanding \eqref{SV} as $t\to\infty$ yields
\begin{align}\label{ddecay}
S(t)
\sim A(\lambda t)^{-p}e^{-\lambda t}\quad\text{as }t\to\infty,
\end{align}
where 
\begin{align*}
\lambda
=\frac{V^{2}}{4D},\quad
A
=\frac{Vx}{4D\sqrt{\pi}}e^{Vx/(2D)},\quad
p
=\frac{3}{2}.
\end{align*}
Hence, the drift causes the survival probability to decay exponentially rather than according to the power law in \eqref{Shalf}. Further, Theorem~\ref{expdist} and the power law prefactor in \eqref{ddecay} imply that the distributions and moments of the slowest FPTs for this example are described in terms of the {Lambert W} function. We postpone numerical illustrations of this example until section~\ref{mortal} below, where we show that the presence of the drift in \eqref{drift} is exactly equivalent to considering purely diffusive searchers (i.e.\ with no drift) that are conditioned to find the target before an exponentially distributed inactivation time.

%%%%%%%%%%%%%%%%%%%%%%%%%%%%%%%%%%%%%
\subsection{Subdiffusive search}\label{sub}

Subdiffusive stochastic motion has been observed in a variety of diverse physical scenarios and is especially prevalent in cell biology \cite{oliveira2019, klafter2005, barkai2012, sokolov2012}. While diffusion is marked by a mean-squared displacement that grows linearly in time, anomalous subdiffusion is defined by a mean squared displacement that grows like $t^{\alpha}$ as time $t$ increases, where $\alpha\in(0,1)$ is the subdiffusive exponent. One very common way to model subdiffusion is via a fractional Fokker-Planck equation \cite{metzler1999} (which can be derived from the continuous-time random walk model with power law waiting times \cite{metzler2000}), which is equivalent to constructing a subdiffusive process $\{X(t)\}_{t\ge0}$ via a random time change of a diffusive process $\{Y(s)\}_{s\ge0}$ \cite{lawley2020subpre}. More specifically, if $\{Y(s)\}_{s\ge0}$ is a diffusion process satisfying an It\^{o} stochastic differential equation, then the subdiffusive process is defined via
\begin{align}\label{XYS}
X(t)
:=Y(\mathbb{S}(t)),\quad t\ge0,
\end{align}
where $\{\mathbb{S}(t)\}_{t\ge0}$ is an inverse $\alpha$-stable subordinator that is independent of $Y$.

Therefore, if $\tau$ and $\sigma$ denote the respective FPTs of $X$ and $Y$ to some target, then 
\begin{align}\label{rep}
S(t)
:=\P(\tau>t)
=\E[S_{\sigma}(\mathbb{S}(t))],
\end{align}
where $S_{\sigma}(s):=\P(\sigma>s)$ denotes the survival probability of the diffusive FPT $\sigma$. Using that the probability density that $\mathbb{S}(t)=s$ is given by
\begin{align}\label{lspdf}
\frac{\dd}{\dd s}\P(\mathbb{S}(t)\le s)
=\frac{t}{\alpha s^{1+1/\alpha}}l_{\alpha}\Big(\frac{t}{s^{1/\alpha}}\Big),
\end{align}
where $l_{\alpha}(z)$ is defined via its Laplace transform,
\begin{align}\label{ltl}
\int_{0}^{\infty}e^{-rz}l_{\alpha}(z)\,\dd z
=e^{-r^{\alpha}},\quad \alpha\in(0,1),\; r\ge0,
\end{align}
the representation \eqref{rep} yields
\begin{align}\label{reprep}
S(t)
=\int_{0}^{\infty}\P(\sigma>s)\frac{t}{\alpha s^{1+1/\alpha}}l_{\alpha}\Big(\frac{t}{s^{1/\alpha}}\Big)\,\dd s.
\end{align}

The following proposition is a general result that yields the large time behavior of any function $S(t)$ satisfying \eqref{reprep} assuming that $\sigma>0$ has finite mean. The proof is given in section~\ref{appsub} in the Appendix.

\begin{proposition}\label{thmsub}
Let $\sigma>0$ be any nonnegative random variable with finite mean. If $S(t)$ is given by \eqref{reprep} where $l_{\alpha}$ is defined via \eqref{ltl}, then
\begin{align}\label{subeq}
S(t)
\sim\frac{\E[\sigma]}{\Gamma(1-\alpha)}t^{-\alpha}\quad\text{as }t\to\infty.
\end{align}
\end{proposition}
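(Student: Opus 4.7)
The plan is to prove Proposition~\ref{thmsub} by Laplace transforming \eqref{reprep} in $t$, extracting the small-$r$ behavior of $\tilde{S}(r)$, and then inverting this information via a Tauberian theorem. The key reason this should work is that the only ingredient about $\sigma$ used in the statement is the single number $\E[\sigma]$; the Laplace transform isolates exactly this quantity as the leading coefficient, so the rest is universal.

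First I would establish the auxiliary identity
\begin{equation*}
\int_{0}^{\infty}e^{-rt}\,\frac{t}{\alpha s^{1+1/\alpha}}\,l_{\alpha}\!\Big(\frac{t}{s^{1/\alpha}}\Big)\,\dd t
= r^{\alpha-1}\,e^{-r^{\alpha}s},\qquad r>0,\; s>0.
\end{equation*}
This can be obtained either by a direct change of variables $z=t/s^{1/\alpha}$ together with differentiating \eqref{ltl} in $r$, or, conceptually, by noting that \eqref{lspdf} is the density of the inverse $\alpha$-stable subordinator $\mathbb{S}(t)$, so that $\P(\mathbb{S}(t)\le s)=\P(U(s)>t)$ for the underlying stable subordinator $U$ with $\E[e^{-rU(s)}]=e^{-sr^{\alpha}}$; integrating $e^{-rt}$ against $\P(U(s)>t)$ in $t$ and then differentiating in $s$ yields the same identity.

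Next, since every factor in \eqref{reprep} is nonnegative, Tonelli's theorem lets me interchange integrals to obtain
\begin{equation*}
\tilde{S}(r):=\int_{0}^{\infty}e^{-rt}S(t)\,\dd t
= r^{\alpha-1}\int_{0}^{\infty}\P(\sigma>s)\,e^{-r^{\alpha}s}\,\dd s.
\end{equation*}
Because $\P(\sigma>s)\le 1$ and $\int_{0}^{\infty}\P(\sigma>s)\,\dd s=\E[\sigma]<\infty$, dominated convergence gives $\int_{0}^{\infty}\P(\sigma>s)e^{-r^{\alpha}s}\,\dd s\to\E[\sigma]$ as $r\to 0^{+}$, whence
\begin{equation*}
\tilde{S}(r)\sim \E[\sigma]\,r^{-(1-\alpha)}\qquad\text{as }r\to 0^{+}.
\end{equation*}

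Finally, $S(t)$ is monotone nonincreasing, so Karamata's Tauberian theorem with constant slowly varying factor $L\equiv\E[\sigma]$ and exponent $\rho=1-\alpha\in(0,1)$ converts the last display into
\begin{equation*}
S(t)\sim\frac{\E[\sigma]}{\Gamma(1-\alpha)}\,t^{-\alpha}\qquad\text{as }t\to\infty,
\end{equation*}
which is \eqref{subeq}. The only step that requires real care is the Laplace-transform identity at the start; once that is in hand, the Fubini exchange is automatic and the Tauberian conclusion is textbook. A minor subtlety worth checking is that no additional integrability hypothesis on $\sigma$ is needed beyond $\E[\sigma]<\infty$, which is handled by the dominated convergence step above.
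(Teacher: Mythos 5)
Your proof is correct, but it takes a genuinely different route from the paper. The paper works directly in the time domain: it invokes the known tail asymptotics $l_{\alpha}(z)\sim\frac{\alpha}{\Gamma(1-\alpha)}z^{-1-\alpha}$ as $z\to\infty$, splits the integral in \eqref{reprep} at $s=(\delta t)^{\alpha}$, shows the lower piece is asymptotic to $\frac{\E[\sigma]}{\Gamma(1-\alpha)}t^{-\alpha}$ by squeezing $l_{\alpha}$ between $(1\pm\eps)$ multiples of its asymptote, and bounds the upper piece by $\P(\sigma>(\delta t)^{\alpha})=o(t^{-\alpha})$ using $\E[\sigma]<\infty$. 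You instead pass to the Laplace transform, where your identity $\int_{0}^{\infty}e^{-rt}\frac{t}{\alpha s^{1+1/\alpha}}l_{\alpha}(t/s^{1/\alpha})\,\dd t=r^{\alpha-1}e^{-r^{\alpha}s}$ (which checks out by the change of variables $z=t/s^{1/\alpha}$ and differentiation of \eqref{ltl}) reduces everything to $\tilde{S}(r)\sim\E[\sigma]\,r^{-(1-\alpha)}$, and then you invoke Karamata's Tauberian theorem. Your route has the advantage of being self-contained relative to the paper's definition of $l_{\alpha}$: it uses only the Laplace-transform characterization \eqref{ltl}, whereas the paper must import the pointwise large-$z$ asymptotics of the one-sided stable density as an external fact. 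The price is that you rely on a nontrivial black box (the Tauberian theorem for ultimately monotone densities), whose applicability hinges on $S(t)$ being nonincreasing; you assert this, and it does hold, but strictly from the integral formula \eqref{reprep} alone it is not self-evident --- it follows from the probabilistic representation $S(t)=\E[S_{\sigma}(\mathbb{S}(t))]$ in \eqref{rep} together with the fact that the inverse subordinator $\mathbb{S}(t)$ is almost surely nondecreasing in $t$, so you should say a word about that. The paper's argument is more elementary (only $\eps$--$\delta$ estimates) and also makes transparent where the hypothesis $\E[\sigma]<\infty$ enters twice (convergence of $\int_{0}^{\infty}\P(\sigma>s)\,\dd s$ and the decay $\P(\sigma>s)=o(s^{-1})$), while yours concentrates that hypothesis in a single dominated-convergence step. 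Both are complete and valid proofs.
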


Applying Proposition~\ref{thmsub} to the case of subdiffusion described above, we obtain the large-time behavior of the survival probability of a subdiffusive FPT (we note the large-time decay in \eqref{subeq} was derived formally in \cite{condamin2007first} and \cite{lua2005} under stronger assumptions). Combining this result with Theorem~\ref{powerdist} yields the probability distribution for the slowest subdiffusive FPTs. Specifically, if $\{\tau_{n}\}_{n\ge1}$ denote iid subdiffusive FPTs whose survival probabilities satisfy \eqref{subeq}, then Theorem~\ref{powerdist} implies that
\begin{align}\label{cdsub}
\frac{\lambda T_{N-k,N}}{N^{1/p}}
\to_{\dd}Z_{k}\quad\text{as }N\to\infty,
\end{align}
where 
\begin{align*}
p=\alpha,\qquad
\lambda
=\Big(\frac{\Gamma(1-\alpha)}{\E[\sigma]}\Big)^{1/\alpha},\qquad
\P(Z_{k}\le {z})
=\frac{\Gamma(k+1,z^{-\alpha})}{k!},\quad\text{if }z>0.
\end{align*}
As expected, this shows that slowest subdiffusive FPTs are much slower than slowest diffusive FPTs. This intuitive result contrasts the results of \cite{lawley2020sub}, wherein it was proven that the fastest subdiffusive searchers are faster than the fastest diffusive searchers. We also note that Theorem~\ref{powermoments} implies $\E[T_{N,N}]=\E[\tau]=\infty$ and $\E[T_{N-k,N}]<\infty$ if $1<(k+1)\alpha\le N\alpha$.

The convergence in distribution in \eqref{cdsub} is illustrated in the right panel of Figure~\ref{fighalfsub}. For this plot, $X$ is in \eqref{XYS} where $Y$ is a one-dimensional pure diffusion process with diffusivity $D>0$ starting at the origin, and the diffusive FPT $\sigma$ is the first time that $Y$ escapes the interval $(-L,L)$. We take $D=1$, $L=1/2$, and $\alpha=3/4$ for the subdiffusive process $X$. Details on the numerical methods for this example are in section~\ref{appsub} in the Appendix.

%%%%%%%%%%%%%%%%%%%%%%%%%%%%%%%%%%%%%
\section{Mortal searchers and signal sharpness}\label{mortal}

We now consider so-called ``mortal'' searchers, which may ``die'' (degrade, be inactivated, etc.)\ before finding the target. Such search processes are sometimes called ``evanescent'' and have been studied extensively \cite{abad2010, abad2012, abad2013, yuste2013, meerson2015b, grebenkov2017, ma2020, lawley2021mortal}.  

%%%%%%%%%%%%%%%%%%%%%%%%
\subsection{Large-time survival probability asymptotics}

Mathematically, the FPT $\tau^{\textup{mortal}}$ of such a mortal searcher can be written as
\begin{align*}
\tau^{\textup{mortal}}
:=\begin{cases}
\tau & \text{if }\tau\le\sigma,\\
+\infty & \text{if }\tau>\sigma,
\end{cases}
\end{align*}
where $\tau$ is the FPT of an immortal searcher (i.e.\ a searcher with no inactivation) and $\sigma>0$ denotes the inactivation time. Following typical assumptions, we assume that $\sigma$ is independent of $\tau$ and is exponentially distributed with mean $\E[\sigma]=1/r$. 

Let $\overline{\tau}$ denote the FPT of a mortal searcher that is conditioned to find the target before inactivation. That is, $\overline{\tau}$ has survival probability,
\begin{align}\label{taubar}
\overline{S}(t)
:=\P(\overline{\tau}>t)
:=\P(\tau>t\,|\,\tau\le\sigma)
=\frac{\P(t<\tau\le\sigma)}{\P(\tau\le\sigma)}.
\end{align}
Such inactivation has the effect of filtering out searchers which take a long time to find the target. The following simple result computes the large-time decay of $\overline{S}(t)$ based on the large-time decay of $S(t)$ (the proof is collected in section~\ref{appmortal} in the Appendix).

\begin{proposition}\label{mortalthm}
Assume $\overline{S}(t):=\P(\overline{\tau}>t)$ is in \eqref{taubar}, where $\sigma$ is independent of $\tau$ and exponentially distributed with mean $\E[\sigma]=1/r$ and $S(t):=\P(\tau>t)$. If
\begin{align*}
S(t)-Ae^{-\lambda t}
=O(e^{-\beta t})\quad\text{as }t\to\infty,
\end{align*}
where $0<\lambda<\beta$ and $A>0$, then 
\begin{align*}
\overline{S}(t)-\overline{A}e^{-\overline{\lambda}t}
=O(e^{-(\beta+r) t})\quad\text{as }t\to\infty,
\end{align*}
where
\begin{align*}
\overline{\lambda}
=\lambda+r,\quad
\overline{A}
=\frac{A\lambda/(\lambda+r)}{\int_{0}^{\infty}(1-S(s))r e^{-r s}\,\dd s}.
\end{align*}
If $S(t)\sim (\lambda t)^{-p}$ as $t\to\infty$, where $\lambda>0$ and $p>0$, then
\begin{align}\label{mortalexp}
\overline{S}(t)
\sim \overline{A}(\overline{\lambda} t)^{-\overline{p}}e^{-\overline{\lambda}t}\quad\text{as }t\to\infty,
\end{align}
where
\begin{align*}
\overline{\lambda}
=r,\quad
\overline{p}
=p+1,\quad
\overline{A}
=\frac{p(r/\lambda)^{p}}{\int_{0}^{\infty}(1-S(s))r e^{-r s}\,\dd s}.
\end{align*}
\end{proposition}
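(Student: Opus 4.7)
The plan is to compute the numerator and denominator of $\overline{S}(t)=\P(t<\tau\le\sigma)/\P(\tau\le\sigma)$ separately and read off the large-$t$ behavior. Since $\sigma$ is independent of $\tau$ and exponential with rate $r$, conditioning on $\tau$ gives $\P(\tau\le\sigma)=\E[e^{-r\tau}]$, which equals $\int_0^\infty(1-S(s))re^{-rs}\,\dd s$ by Fubini and matches the denominator appearing in $\overline{A}$. Conditioning in the numerator on $\sigma$ and using $\P(t<\tau<s)=S(t)-S(s)$ produces
\begin{equation*}
\P(t<\tau\le\sigma)=\int_t^\infty re^{-rs}[S(t)-S(s)]\,\dd s=S(t)e^{-rt}-r\int_t^\infty e^{-rs}S(s)\,\dd s,
\end{equation*}
an identity that requires no differentiability of $S$.

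For the exponential case, I would substitute $S(s)=Ae^{-\lambda s}+R(s)$ with $R(s)=O(e^{-\beta s})$ into the identity above. The $Ae^{-\lambda s}$ contribution yields $Ae^{-(\lambda+r)t}-\tfrac{Ar}{\lambda+r}e^{-(\lambda+r)t}=\tfrac{A\lambda}{\lambda+r}e^{-(\lambda+r)t}$, while the remainder contributions $e^{-rt}R(t)$ and $r\int_t^\infty e^{-rs}R(s)\,\dd s$ are both $O(e^{-(r+\beta)t})$ (the latter since $\int_t^\infty e^{-(r+\beta)s}\,\dd s=O(e^{-(r+\beta)t})$). Dividing by $\P(\tau\le\sigma)$ then yields $\overline{S}(t)-\overline{A}e^{-\overline{\lambda}t}=O(e^{-(r+\beta)t})$ with $\overline{\lambda}=\lambda+r$ and $\overline{A}$ exactly as stated.

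The power-law case is more delicate and is the main obstacle. Both $S(t)e^{-rt}$ and $r\int_t^\infty e^{-rs}S(s)\,\dd s$ contribute $(\lambda t)^{-p}e^{-rt}$ at leading order, so these leading terms cancel and one must extract the next-order correction. My approach is to substitute $v=rs$ to identify the integral with an upper incomplete gamma function,
\begin{equation*}
r\int_t^\infty e^{-rs}(\lambda s)^{-p}\,\dd s=(r/\lambda)^p\,\Gamma(1-p,rt),
\end{equation*}
and invoke the classical large-argument expansion $\Gamma(1-p,z)=z^{-p}e^{-z}\bigl(1-p/z+O(z^{-2})\bigr)$. The $(\lambda t)^{-p}e^{-rt}$ terms then cancel, and what remains is $\tfrac{p\lambda^{-p}}{r}t^{-p-1}e^{-rt}=p(r/\lambda)^p(rt)^{-(p+1)}e^{-rt}$, which after dividing by $\P(\tau\le\sigma)$ gives exactly the claimed $\overline{A}$, $\overline{\lambda}=r$, and $\overline{p}=p+1$. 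The subtle point is that from the hypothesis $S(s)\sim(\lambda s)^{-p}$ alone, the $o((\lambda s)^{-p})$ error in the asymptotic propagates through the integral into an error of size $o((\lambda t)^{-p}e^{-rt})$ in the numerator, which is a priori larger than the surviving $t^{-(p+1)}e^{-rt}$ term. I would resolve this either by working with the density $f_\tau(s)\sim p\lambda^{-p}s^{-p-1}$ (which holds in all the applications the paper has in mind) and applying Watson's lemma to $\int_t^\infty e^{-rs}f_\tau(s)\,\dd s$ directly, or by reading $\sim$ in the statement as a sharper two-term asymptotic so that the cancellation at the leading order is unambiguous.
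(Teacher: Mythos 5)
Your proposal is correct and follows essentially the same route as the paper: the paper's proof consists exactly of the two identities $\P(\tau\le\sigma)=\int_{0}^{\infty}(1-S(s))re^{-rs}\,\dd s$ and $\P(t<\tau\le\sigma)=S(t)e^{-rt}-\int_{t}^{\infty}S(s)re^{-rs}\,\dd s$, followed by the instruction to ``apply Laplace's method to the integral,'' which is precisely the computation you carry out in both cases (and your incomplete-gamma expansion is the standard way to make that step explicit). Your caveat about the power-law case is well taken --- the paper's one-line proof does not address the cancellation of the leading $(\lambda t)^{-p}e^{-rt}$ terms, and the hypothesis $S(t)\sim(\lambda t)^{-p}$ alone does not control the surviving $t^{-(p+1)}e^{-rt}$ term, so the additional regularity (a density asymptotic or a two-term expansion of $S$) that you flag is genuinely needed and is implicitly assumed in the paper.
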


Combining Proposition~\ref{mortalthm} with Theorems~\ref{expdist}-\ref{expmoments} yields the distribution and moments of the slowest FPT for mortal searchers that find the target before inactivation.

%%%%%%%%%%%%%%%%%%%%%%%%%%%%%%%%%%%%%%%%%%%%%%%%%%%%%%%%%%%%%%%%%%%%%%%%%%%%%%%%%%%%%%%%%%%%%%%%
\subsection{Half-line}\label{hm}

%%%%%%%%%%%%%%%%%%%%%%%%%%%%%%%%%
\begin{figure}
  \centering
             \includegraphics[width=1\textwidth]{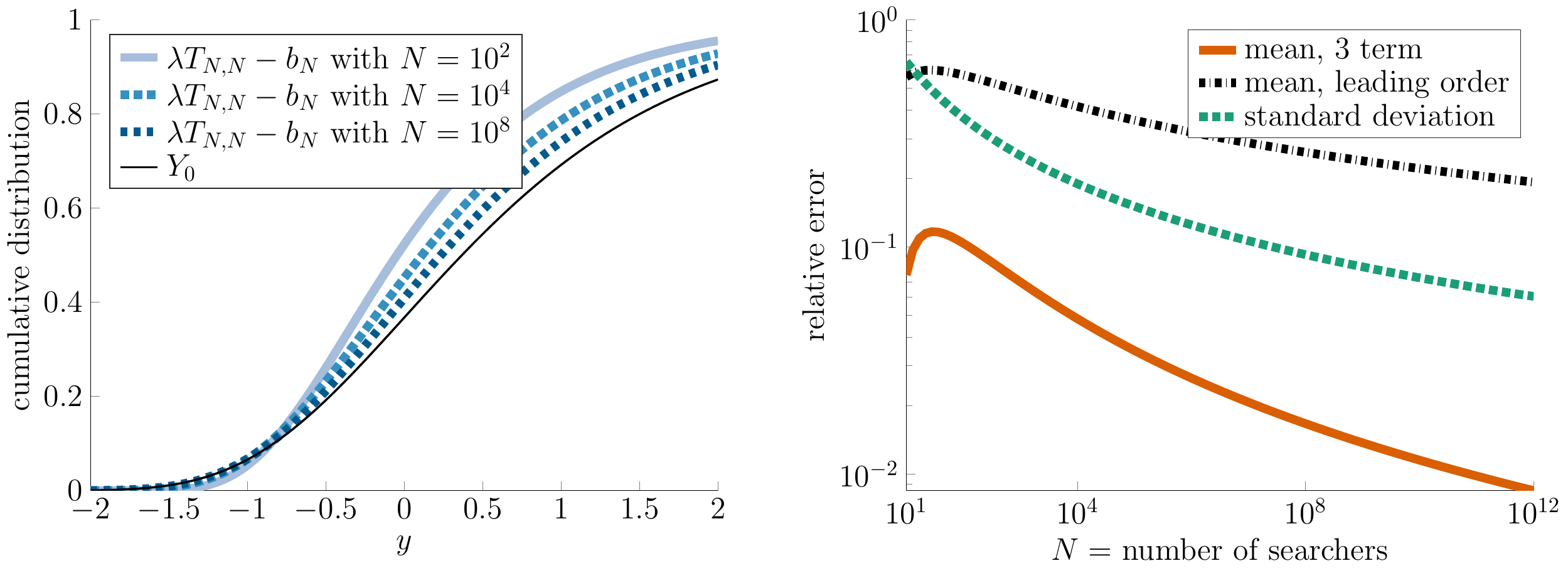}
 \caption{Convergence in distribution and moments for mortal searchers conditioned to find the target before inactivation. The left panel plots the convergence in distribution in \eqref{cdmw}. The right panel plots the relative errors in \eqref{remean}-\eqref{restd} for computing the mean and standard deviation of $T_{N,N}$ via Theorem~\ref{expmoments}. See section~\ref{hm} for details.}
 \label{figmortal}
\end{figure}
%%%%%%%%%%%%%%%%%%%%%%%%%%%%%%%%%

Consider the example in section~\ref{halfline} of pure diffusion on the positive real line starting from $X(0)=x>0$ and a target at the origin. The survival probability for a single immortal searcher has the power law decay in \eqref{Shalf}-\eqref{Shalfp}. Using the unconditioned survival probability in \eqref{erf}, Proposition~\ref{mortalthm} implies that the conditioned survival probability $\overline{S}(t)$ in \eqref{taubar} decays exponentially according to \eqref{mortalexp} with
\begin{align*}
\overline{A}
=\frac{1}{\sqrt{4\pi}}e^{\sqrt{rx^{2}/D}}\sqrt{rx^{2}/D},\quad
\overline{\lambda}
=r,\quad
\overline{p}
=\frac{3}{2}.
\end{align*}
Hence, Theorems~\ref{expdist}-\ref{expmoments} yield the distribution and moments of the slowest FPTs in terms of the {Lambert W} function. In particular, if $T_{N-k,N}$ is as in \eqref{Tkn} but with $\tau$ replaced by $\overline{\tau}$, then Theorem~\ref{expdist} implies
\begin{align}\label{cdmw}
\overline{\lambda}T_{N-k,N}-\overline{p}W_{0}\big((\overline{A}N)^{1/\overline{p}}/\overline{p}\big)
\to_{\dd} Y_{k}\quad\text{as }N\to\infty.
\end{align}
Further, Theorem~\ref{expmoments} yield the three-term asymptotic expansion for the mean,
\begin{align}\label{3term}
\E[T_{N-k,N}]
=(1/\overline{\lambda})\Big(\ln N-\overline{p}\ln\ln N+\ln \overline{A}+\gamma-H_{k}+o(1)\Big),\quad\text{as }N\to\infty.
\end{align}

To illustrate these results numerically, we first note that a direct calculation using \eqref{erf} and \eqref{taubar} shows that for this example,
\begin{align}\label{barV}
\overline{S}(t)
=\frac{1}{2}\bigg[1+\textup{erf}\Big(\frac{x-t\sqrt{4Dr}}{\sqrt{4Dt}}\Big)
-e^{\frac{x\sqrt{4Dr}}{D}}\textup{erfc}\Big(\frac{x+t\sqrt{4Dr}}{\sqrt{4Dt}}\Big)\bigg].
\end{align}
Notice that \eqref{barV} is exactly equivalent to \eqref{SV} if the drift in \eqref{SV} is given by
\begin{align}\label{Vr}
V
=\sqrt{4Dr}>0.
\end{align}

In Figure~\ref{figmortal}, we investigate the distribution and moments of $T_{N-k,N}$ for this example. Due to the equivalence of \eqref{barV} and \eqref{SV} if \eqref{Vr} holds, Figure~\ref{figmortal} also applies to the example with drift in section~\ref{halfline}. In the left panel of Figure~\ref{figmortal}, we plot the distribution of $\lambda T_{N,N}-b_{N}$. While the convergence in distribution to $Y_{0}$ is evident, the rate of convergence is markedly slower than in the examples considered above. This slow convergence is also seen in the right panel of Figure~\ref{figmortal}, where we plot the relative errors for the three-term estimate of the mean of $T_{N,N}$ in \eqref{3term} (solid orange curve) and the estimate of the standard deviation given by Theorem~\ref{expmoments} (dashed green curve). We also plot the relative error for the mean if we only {used} the leading order logarithmic term in \eqref{3term} (dot dashed black curve), which is much larger than using the full three-term estimate in \eqref{3term}. Hence, this more detailed three-term estimate in \eqref{3term} is necessary for an accurate estimate of the mean, with the iterated logarithmic term making a strong contribution. In this plot, we take $D=x=r=1$.

%%%%%%%%%%%%%%%%%%%%%%%%%%%%%%%%%%%%%%%%%%%%%%%%%%%%%%%%%%%%%%%%%%%%%%%%%%%%%%%%%%%%%%%%%%%%%%%%%%%%%%%%%%%%%%%
\subsection{Signal sharpness}

The relevance of mortal searchers to cell signaling was recently highlighted in the very interesting study by \cite{ma2020}, wherein the authors showed that inactivation ``sharpens'' signals by reducing variability in FPTs. \cite{ma2020} were particularly interested in a signal transmitted by diffusing proteins (the searchers) which move from the cell membrane to the nucleus (the target). For a signal conveyed by a finite number of searchers, signal ``sharpness'' could be understood as inversely related to the signal ``spread,'' defined as the difference between the latest and earliest searchers to arrive at the target. That is, a notion of signal spread is
\begin{align}\label{ss88}
\Sigma_{N}
:=T_{N,N}-T_{1,N}>0,
\end{align}
where $T_{N,N}$ and $T_{1,N}$ are the respective slowest and fastest searchers to arrive at the target. We now use the results above to investigate how inactivation (i.e.\ mortal searchers) affects the signal spread $\Sigma_{N}$. 

A typical model of cell signaling, such as the model in \cite{ma2020}, involves diffusive searchers in a bounded domain with reflecting boundaries. In such a model, the survival probability of an immortal searcher can be written as a sum of decaying exponentials,
\begin{align*}
S(t)
=Ae^{-\lambda t}+\sum_{n\ge1}B_{n}e^{-\beta_{n}t},
\end{align*}
where $0<\lambda<\beta_{1}<\cdots$. Suppose that searchers are inactivated at rate $r>0$, and consider $\Sigma_{N}$ in \eqref{ss88} for searchers which reach the target before inactivation. Applying Proposition~\ref{mortalthm} and Theorem~\ref{work} to $T_{N,N}$ and the results of \cite{lawley2020uni} to $T_{1,N}$ yields the following large $N$ behavior of the mean of $\Sigma_{N}$,
\begin{align}\label{ss87}
\E[\Sigma_{N}]
=\frac{1}{\lambda+r}\Big(\ln N+\ln \overline{A}+\gamma\Big)
-\frac{L^{2}}{4D}\frac{1}{\ln N}
+o(1/\ln N)\quad\text{as }N\to\infty,
\end{align}
where $L>0$ is the shortest distance from the searcher starting location to the target and $D>0$ is the searcher diffusivity,

In this model, the three important timescales are
\begin{align*}
L^{2}/D,\quad
1/\lambda,\quad
1/r.
\end{align*}
As in section~\ref{rare}, it is often the case that
\begin{align}\label{typ88}
L^{2}/D
\ll 1/\lambda,
\end{align}
which means that an immortal searcher tends to wander around the domain before finding the target. That is, $L^{2}/D$ describes the FPT of searchers which move along the shortest path to the target, which is often much faster than $1/\lambda$.

%%%%%%%%%%%%%%%%%%%%%%%%%%%%%%%%%
\begin{figure}
  \centering
             \includegraphics[width=.6\textwidth]{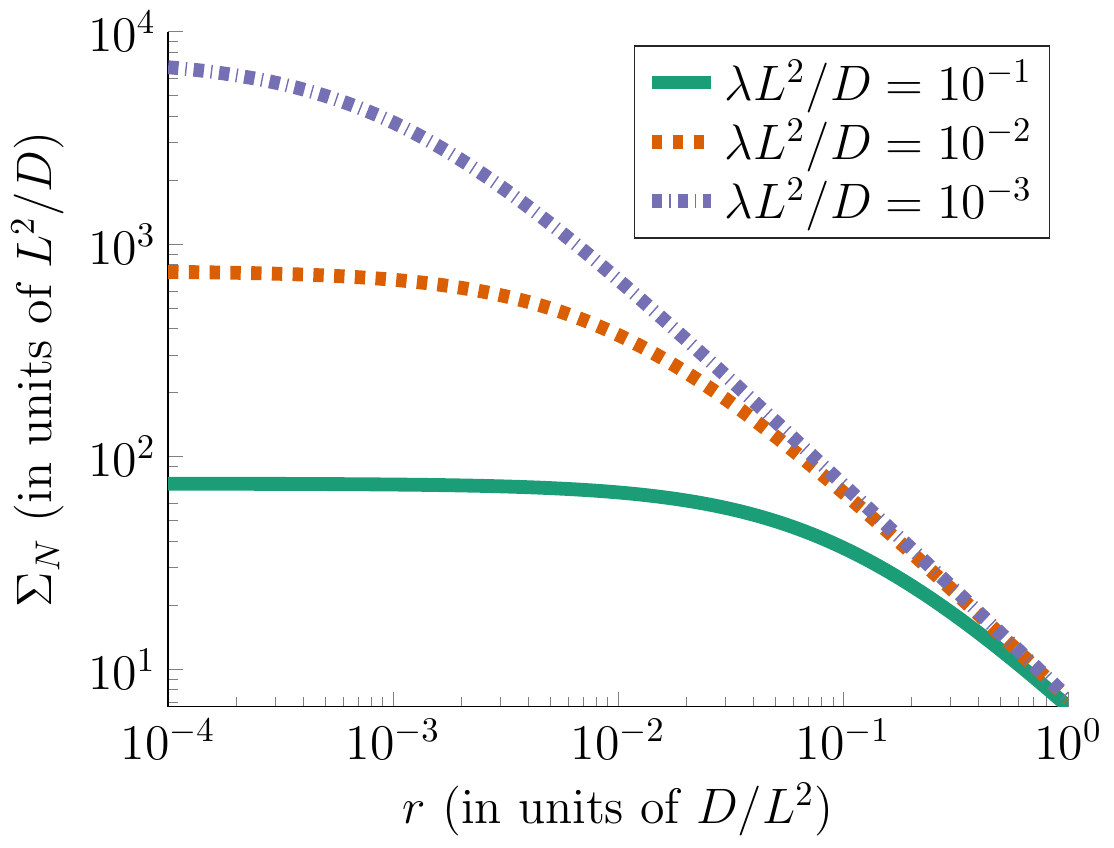}
 \caption{Signal spread $\Sigma_{N}$ in \eqref{ss87} for mortal searchers. See the text for details.}
 \label{figsharpen}
\end{figure}
%%%%%%%%%%%%%%%%%%%%%%%%%%%%%%%%%

In Figure~\ref{figsharpen}, we plot the expression for $\E[\Sigma_{N}]$ in \eqref{ss87} ignoring the $o(1/\ln N)$ term. In this plot, we take $\lambda L^{2}/D\in\{10^{-1},10^{-2},10^{-3}\}\ll1$ and consider the inactivation rate $r$ ranging from $r=0$ (i.e.\ immortal searchers) to $r=D/L^{2}$. We take $N=10^{3}$ and $\ln\overline{A}=0$ for simplicity. Notice that there is a drastic decrease in $\Sigma_{N}$ once $r$ is larger than $\lambda$. That is, if $r\gg\lambda$, then $\Sigma_{N}$ is much less than the value of $\Sigma_{N}$ {for} immortal searchers, even if $r\ll D/L^{2}$. This means that inactivation strongly sharpens the signal as long as the inactivation rate is sufficiently large that it filters out searchers which wander around the entire domain, without requiring the inactivation rate to be so large that the only searchers which find the target are those that move along the shortest path to the target. 
%%%%%%%%%%%%%%%%%%%%%%%%%%%%%%%%%%%%%%%%%%%%%%%%%%%%%%%%%%%%%%%%%%%%%%%%%%%%%%%%%%%%%%%%%%%%%%%%%%%%%%%%%%%%%%%%%%%%%%%%%%%%%%%%%%%%%%%%%%%%%%%%%%%%%%%%%%%%%%%%%%%%%%%%%%%%%%%%%%%%%%%%%%%%%%%%%%%%%%%%%%%%%%%%%%%%%%%%%%%%%%%%%%%%%%%%%%%%%%%%%%%%%%%%%%%%%%%%%%%%%%%%%%%%%%%%%%%%%%%%%%%%%%%%%%%%%%%%%%%%%%%%%%%%%%%%%%%%%%%%%%%%%%%%%%%%%%%%%%%%%%%%%%%%%%%%%%%
\section{Discussion}\label{secdiscussion}

In this paper, we obtained rigorous approximations to the distribution and moments of slowest FPTs. The mathematical results relied on extreme value theory and detailed asymptotic estimates. {We emphasize that these general results apply to the largest values of any sequence of iid nonnegative random variables whose large-time distribution decays either (a) exponentially (possibly with a power law pre-factor) or (b) according to a power law. We applied these general results to FPTs of stochastic searchers. We mostly considered searchers move by pure diffusion, though our results apply to FPTs of much more general search processes, including diffusion with space-dependent drift and noise coefficients, subdiffusive motion, continuous-time Markov chains, non-Markovian random walks \cite{levernier2022}, and search processes with stochastic initial positions.}

{The general} results were proven in the limit of many searchers, but numerical simulations demonstrated their high accuracy for any number of searchers in some typical scenarios of interest. This contrasts with existing estimates of fastest FPTs, which generally require a very large number of searchers to be accurate. This study was motivated by diverse biological systems, including ovarian aging, cell signaling, and single-cell source location detection.

As described in the Introduction, the seemingly redundant excesses in various biological systems have been understood as a means to accelerate search processes \cite{schuss2019}. The oversupply of male gametes (i.e.\ sperm cells) in human fertilization constitutes a prototypical example \cite{meerson2015}. In particular, prior works have argued that the excess of searchers in some systems serves to accelerate the fastest FPT. The analysis in this paper suggests that the excess female gametes (i.e.\ PFs) present at birth ensure a supply available for ovulation for several decades of life. In particular, we have argued that the excess of searchers in this system serves to prolong the slowest FPT. It would be interesting to investigate how this principle might operate in other biological systems, wherein many redundant copies ensure that the supply lasts much longer than the typical lifespan of any single copy.

In the case of our random walk model of the human ovary, the three orders of magnitude oversupply around the time of birth can be seen to all but ensure that the supply {of} follicles exceeds 40 years.  Because human egg quality is known to decline late in a woman's 30's across the population \cite{pmid9922101,pmid19589949}, ovarian function will thus almost always function for longer than the supply of high quality eggs capable of conception. In addition, because the functioning ovary is well-known to support key health and well-being measures in women that tend to decline at the time of menopause \cite{pmid11791087,pmid18310277,pmid28703650}, slowest FPTs can also be seen to dictate the timing that these important life changes take place.

We conclude by discussing how slowest FPTs can be nearly deterministic. Indeed, Theorem~\ref{expmoments} implies that the coefficient of variation of $T_{N-k,N}$ vanishes as $N\to\infty$ (see \eqref{cvvanish}). For the model of ovarian aging in section~\ref{meno}, Theorem~\ref{expmoments} implies that for a given woman with PF starting supply $N\in[10^{5},10^{6}]$, the standard deviation for her age at menopause is less than two months. That is, even though each PF leaves the reserve at a random time, the time of PF exhaustion is quite predictable from the PF starting supply. This predictable behavior is a consequence of the large number of searchers, but we emphasize that the mechanism is quite different from the classical law of large numbers. The law of large numbers relies on averaging many stochastic realizations, whereas the nearly deterministic behavior of slowest FPTs stems from rare events. Indeed, as often noted in large deviation theory \cite{den2000}, rare events are predictable in that they occur in the least unlikely of all the unlikely ways.

\subsubsection*{Acknowledgements}
The authors gratefully acknowledge very helpful input from John W. Emerson (Yale University).

%%%%%%%%%%%%%%%%%%%%%%%%%%%%%%
\appendix
\section{Appendix}

%%%%%%%%%%%%%%%%%%%%%%%%%%%%%%%%%%%%%%%%%%%%%%%%%%%%%%%%%%%%%%%%%%%%%%%%%%%%%%%%
\subsection{Proofs of theorems}\label{appproofs}

In this section, we collect the proofs of the theorems in section~\ref{math}. Before proving Theorem~\ref{expdist}, we first state and prove a simple lemma.

%%%%%%%%%%%%%%%%%%%%%%%%%%%%%%
\begin{lemma}\label{simplelemma}
Suppose $\{s_{N}\}_{N\ge1}$ and $\{s_{N}'\}_{N\ge1}$ are sequences of real numbers satisfying
\begin{align}
s_{N}
&\sim s_{N}'\quad\text{as }N\to\infty,\label{equivi}\\
(1-s_{N}')^{N}
&\to z\in(0,\infty)\quad\text{as }N\to\infty,\label{wanti}
\end{align} for some $z\in(0,\infty)$. Then
\begin{align}\label{wantnoprime}
(1-s_{N})^{N}
&\to z\in(0,\infty)\quad\text{as }N\to\infty.
\end{align}
\end{lemma}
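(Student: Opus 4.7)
The plan is to reduce the claim to a statement about $Ns_N$ and $Ns_N'$ via the standard Taylor expansion $\log(1-x) = -x + O(x^2)$ as $x \to 0$, exploiting the fact that the hypothesis on $(1-s_N')^N$ forces $s_N'$ to be small.

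First I would argue that \eqref{wanti} implies $s_N' \to 0$. Indeed, for $(1-s_N')^N$ to converge to a finite positive number $z$, $s_N'$ cannot be bounded away from $0$ (otherwise $|1-s_N'|^N$ would either converge to $0$ or fail to converge), so after possibly passing to subsequences one sees that $s_N' \to 0$ is forced. Taking logarithms in \eqref{wanti} and using $\log(1-s_N') = -s_N' + O((s_N')^2)$ as $s_N' \to 0$ then yields $N s_N' \to -\log z$. In particular $s_N' = O(1/N)$, so the error term satisfies $N(s_N')^2 = O(1/N) \to 0$.

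Next I would transfer these facts to $s_N$ using \eqref{equivi}. Since $s_N / s_N' \to 1$ and $s_N' \to 0$, we have $s_N \to 0$ and $N s_N = N s_N' \cdot (s_N/s_N') \to -\log z$. The same Taylor expansion then gives
\begin{equation*}
N \log(1 - s_N) = -N s_N + N \cdot O(s_N^2) \to \log z \quad \text{as } N \to \infty,
\end{equation*}
since $N s_N^2 = (N s_N) \cdot s_N \to (-\log z) \cdot 0 = 0$. Exponentiating gives \eqref{wantnoprime}.

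I do not anticipate a serious obstacle here; the only subtlety is verifying that $s_N' \to 0$ (rather than just being bounded) follows from \eqref{wanti}, but this is a short argument from the fact that the limit $z$ lies in $(0,\infty)$. Everything else is a clean Taylor-expansion computation whose error terms vanish at the required rate because $s_N, s_N' = O(1/N)$.
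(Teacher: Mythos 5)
Your proof is correct and follows essentially the same route as the paper's: take logarithms, linearize $\ln(1-x)\approx -x$ to reduce both hypotheses and conclusion to the single statement $Ns_N \to -\ln z$, and transfer via $s_N\sim s_N'$. The only cosmetic difference is that you control the error with an explicit $O(x^2)$ Taylor remainder where the paper invokes L'H\^{o}pital to get $\ln(1-s_N')\sim -s_N'$; both are fine.
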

%%%%%%%%%%%%%%%%%%%%%%%%%%%%%%

%%%%%%%%%%%%%%%%%%%%%%%%%%%%%%
\begin{proof}[Proof of Lemma~\ref{simplelemma}]
Since the logarithm and exponential functions are continuous, \eqref{wanti} is equivalent to
\begin{align}\label{wantii}
\lim_{N\to\infty}N\ln (1-s_{N}')
=\ln z.
\end{align}
Since \eqref{wantii} implies that $s_{N}'\to0$ as $N\to\infty$, applying L'H\^{o}pital's rule yields $\ln(1-s_{N}')\sim -s_{N}'$ as $N\to\infty$. Therefore, \eqref{wantii} is equivalent to
\begin{align}\label{wantiii}
\lim_{N\to\infty}Ns_{N}'
=-\ln z.
\end{align}
By \eqref{equivi}, we have that \eqref{wantiii} holds with $s_{N}'$ replaced by $s_{N}$. But, by the same argument that yielded \eqref{wantiii} from \eqref{wanti}, we have that \eqref{wantnoprime} holds.
\end{proof}
%%%%%%%%%%%%%%%%%%%%%%%%%%%%%%

%%%%%%%%%%%%%%%%%%%%%%%%%%%%%%
\begin{proof}[Proof of Theorem~\ref{expdist}]
We first take $b_{N}=b_{N}'$ in \eqref{bN}. Fix $y\in\R$. Let $S_{0}(t):=A(\lambda t)^{-p}e^{-\lambda t}$. Since $b_{N}\to\infty$ as $N\to\infty$, we have that
\begin{align}\label{nice87}
S_{0}(\lambda^{-1}(y+b_{N}))
=A(y+b_{N})^{-p}e^{-b_{N}}e^{-y}
\sim Ab_{N}^{-p}e^{-b_{N}}e^{-y}
\quad\text{as }N\to\infty.
\end{align}
Furthermore, the definition of $b_{N}$ ensures that $Ab_{N}^{-p}e^{-b_{N}}=N^{-1}$, and therefore \eqref{nice87} implies
\begin{align*}
S_{0}(\lambda^{-1}(y+b_{N}))
\sim N^{-1}e^{-y}
\quad\text{as }N\to\infty.
\end{align*}
Since $\lim_{N\to\infty}(1-N^{-1}e^{-y})^{N}=\exp(-e^{-y})$, Lemma~\ref{simplelemma} yields
\begin{align}\label{want1}
\lim_{N\to\infty}(1-S_{0}(\lambda^{-1}(y+b_{N})))^{N}
=\exp(-e^{-{y}}),\quad \text{for all }{y}\in\R.
\end{align}

By assumption, $S(t)\sim S_{0}(t)$ as $t\to\infty$, and therefore $S(\lambda^{-1}(y+b_{N}))\sim S_{0}(\lambda^{-1}(y+b_{N}))$ as $N\to\infty$ since $b_{N}\to\infty$ as $N\to\infty$. Hence, \eqref{want1} and Lemma~\ref{simplelemma} imply
\begin{align}\label{want2}
\lim_{N\to\infty}(1-S(\lambda^{-1}(y+b_{N})))^{N}
=\exp(-e^{-{y}}),\quad \text{for all }{y}\in\R.
\end{align}
Therefore, since $T_{N,N}=\max\{\tau_{1},\dots,\tau_{N}\}$ and $\{\tau_{n}\}_{n\ge1}$ are iid, we have
\begin{align}\label{triv}
\begin{split}
\P(\lambda T_{N,N}-b_{N}\le y)
=\P(T_{N,N}\le \lambda^{-1}(y+b_{N}))
&=\big(\P(\tau_{1}\le \lambda^{-1}(y+b_{N}))\big)^{N}\\
&=(1-S(\lambda^{-1}(y+b_{N})))^{N}.
\end{split}
\end{align}
Taking $N\to\infty$ in \eqref{triv} and using \eqref{want2} yields that $\lambda T_{N,N}-b_{N}\to_{\dd} Y_{k}$, which completes the proof for the case $k=0$. Having established the case $k=0$, the case of a general fixed integer $k\ge1$ follows directly from Theorem 3.4 in \cite{coles2001}. The fact that $b_{N}$ can be replaced by any sequence satisfying \eqref{bnp} was shown in \cite{gnedenko1943} (see \cite{peng2012} for a more recent reference).
\end{proof}
%%%%%%%%%%%%%%%%%%%%%%%%%%%%%%

\begin{proof}[Proof of Theorem~\ref{expmoments}]
Since Theorem~\ref{expdist} establishes that $\lambda T_{N-k,N}-b_{N}$ converges in distribution to $Y_{k}$ as $N\to\infty$, the continuous mapping theorem (see, for example, Theorem~2.7 in \cite{billingsley2013}) implies that $(\lambda T_{N-k,N}-b_{N})^{m}$ converges in distribution to $(Y_{k})^{m}$ as $N\to\infty$. To conclude $\E[(\lambda T_{N-k,N}-b_{N})^{m}]\to\E[(Y_{k})^{m}]$ as $N\to\infty$, it is enough to show that the sequence of random variables $\{(\lambda T_{N-k,N}-b_{N})^{m}\}_{N\ge1}$ is uniformly integrable (see, for example, Theorem~3.5 in \cite{billingsley2013}). To show this uniform integrability, it is enough (see, for example, equation~(3.18) in \cite{billingsley2013}) to show that
\begin{align*}
\sup_{N\ge1}\E\Big[\big(\lambda T_{N-k,N}-b_{N}\big)^{r}\Big]<\infty,
\end{align*}
where $r=2m\ge2$ is an even integer.

Now, 
\begin{align}\label{terms4}
\E[(\lambda T_{N-k,N}-b_{N})^{r}]
=\E[(\lambda T_{N-k,N}-b_{N})_{+}^{r}]
+\E[(b_{N}-\lambda T_{N-k,N})_{+}^{r}],
\end{align}
where $(\cdot)_{+}$ denotes the positive part (i.e.\ $(x)_{+}=x$ if $x>0$ and $(x)_{+}=0$ otherwise). For the case $k=0$, Theorem~2.1 in \cite{pickands1968} implies
\begin{align*}
\lim_{N\to\infty}\E[((\lambda T_{N,N}-b_{N})_{+})^{r}]
=\int_{0}^{\infty}x^{m}e^{-x-e^{-x}}\,\dd x<\infty.
\end{align*}
Since $T_{N-k,N}\le T_{N,N}$ for any $k\in\{0,1,\dots,N-1\}$, we thus have that
\begin{align*}
\sup_{N\ge1}\E[((\lambda T_{N-k,N}-b_{N})_{+})^{r}]
\le\sup_{N\ge1}\E[((\lambda T_{N,N}-b_{N})_{+})^{r}]<\infty.
\end{align*}

Since $\E[Z]=\int_{0}^{\infty}\P(Z>z)\,\dd z$ for any nonnegative random variable, the second term in the righthand side of \eqref{terms4} can be written as
\begin{align*}
\E[((b_{N}-\lambda T_{N-k,N})_{+})^{r}]
&=\int_{0}^{(b_{N})^{r}}\P\Big(T_{N-k,N}<\frac{b_{N}-s^{1/r}}{\lambda}\Big)\,\dd s,
\end{align*}
since $\tau>0$ almost surely. Now,
\begin{align*}
\P(T_{N-k,N}< t)
&=\P(T_{N,N}< t)+\sum_{j=1}^{k}\P(T_{N-j,N}< t\le T_{N-j+1,N})\\
&=\sum_{j=0}^{k}{N\choose j}[G(t)]^{N-j}[1-G(t)]^{j},
\end{align*}
where $G(t)=\P(\tau<t)$. Therefore,
\begin{align}\label{I2sum}
&\E[((b_{N}-\lambda T_{N-k,N})_{+})^{r}]\nonumber\\
&\quad=\sum_{j=0}^{k}{N\choose j}\int_{0}^{(b_{N})^{r}}\Big[G\Big(\frac{b_{N}-s^{1/r}}{\lambda}\Big)\Big]^{N-j}\Big[1-G\Big(\frac{b_{N}-s^{1/r}}{\lambda}\Big)\Big]^{j}\,\dd s\\
&\quad=:\sum_{j=0}^{k}{N\choose j}I_{j},\nonumber
\end{align}
where $I_{j}$ is the integral in the $j$th term in \eqref{I2sum}. Since ${N\choose j}=O(N^{j})$ as $N\to\infty$, it remains to show that
\begin{align}\label{wtsif}
I_{j}
=O(N^{-j})\quad\text{as }N\to\infty.\end{align}

To show \eqref{wtsif} for $j=0$, observe that
\begin{align*}
\int_{0}^{(b_{N})^{r}}\Big[G\Big(\frac{b_{N}-s^{1/r}}{\lambda}\Big)\Big]^{N}\,\dd s
&=\int_{0}^{(b_{N})^{r}}\P\Big(T_{N,N}<\frac{b_{N}-s^{1/r}}{\lambda}\Big)\,\dd s\\
&=\int_{0}^{(b_{N})^{r}}\P\big(b_{N}-\lambda T_{N,N}> s^{1/r})\,\dd s\\
&=\int_{0}^{(b_{N})^{r}}\P\big(((b_{N}-\lambda T_{N,N})_{+})^{r}> s)\,\dd s\\
&=\E[((b_{N}-\lambda T_{N,N})_{+})^{r}].
\end{align*}
Now, Theorem~2.1 in \cite{pickands1968} implies
\begin{align*}
\lim_{N\to\infty}\E[((b_{N}-\lambda T_{N,N})_{+})^{r}]
=\int_{-\infty}^{0}(-x)^{m}e^{-x-e^{-x}}\,\dd x<\infty,
\end{align*}
and thus \eqref{wtsif} holds for $j=0$.

Fix $j\in\{1,\dots,k\}$ and let $\delta>0$. Splitting $I_{j}$ into the lower integral from $s=0$ to $s=(b_{N}-1/\delta)^{r}$ and the upper integral from $s=(b_{N}-1/\delta)^{r}$ to $s=(b_{N})^{r}$
and estimating the upper integral, we have
\begin{align*}
&\int_{(b_{N}-1/\delta)^{r}}^{(b_{N})^{r}}\Big[G\Big(\frac{b_{N}-s^{1/r}}{\lambda}\Big)\Big]^{N-j}\Big[1-G\Big(\frac{b_{N}-s^{1/r}}{\lambda}\Big)\Big]^{j}\,\dd s\\
&\quad\le\Big[G\Big(\frac{1/\delta}{\lambda}\Big)\Big]^{N-j}\int_{(b_{N}-1/\delta)^{r}}^{(b_{N})^{r}}\Big[1-G\Big(\frac{b_{N}-s^{1/r}}{\lambda}\Big)\Big]^{j}\,\dd s\\
&\quad\le\Big[G\Big(\frac{1/\delta}{\lambda}\Big)\Big]^{N-j}(b_{N})^{r}
=o(N^{-j}),\quad\text{as }N\to\infty,
\end{align*}
since the first factor vanishes exponentially fast as $N\to\infty$. We have used that $G$ is nondecreasing and $G(\frac{1/\delta}{\lambda})<1$ since
\begin{align}\label{Gsim}
1-G(t)\sim Ae^{-\lambda t}\quad\text{as }t\to\infty.
\end{align}

By \eqref{Gsim}, we can take $0<\delta\ll1$ sufficiently small so that
\begin{align*}
\frac{A}{2}(\lambda t)^{-p}e^{-\lambda t}
\le 1-G(t)
\le 2A(\lambda t)^{-p}e^{-\lambda t},\quad\text{if }t\ge\frac{1}{\delta\lambda},
\end{align*}
Moving to the lower part of $I_{j}$, we have that
\begin{align*}
I&:=\int_{0}^{(b_{N}-1/\delta)^{r}}\Big[G\Big(\frac{b_{N}-s^{1/r}}{\lambda}\Big)\Big]^{N-j}\Big[1-G\Big(\frac{b_{N}-s^{1/r}}{\lambda}\Big)\Big]^{j}\,\dd s\\
&\quad\le\int_{0}^{(b_{N}-1/\delta)^{r}}\Big[1-\frac{A}{2}(b_{N}-s^{1/r})^{-p}e^{-(b_{N}-s^{1/r})}\Big]^{N-j}\Big[2A(b_{N}-s^{1/r})^{-p}e^{-(b_{N}-s^{1/r})}\Big]^{j}\,\dd s.
\end{align*}
To analyze this integral, we change variables according to
\begin{align}\label{changev}
\begin{split}
u
&=(b_{N}-s^{1/r})^{-p}e^{-(b_{N}-s^{1/r})},\\
\dd u
&=\frac{u}{r}s^{1/r-1}\Big[1+p/(b_{N}-s^{1/r})\Big]\,\dd s.
\end{split}
\end{align}

We first consider the case $p=0$, in which the change of variables in \eqref{changev} yields $s=(b_{N}-\ln(u^{-1}))^{r}$, and thus
\begin{align*}
I
\le r\int_{1/(AN)}^{\delta'}\Big[1-\frac{A}{2}u\Big]^{N}\Big[2A\Big]^{j}u^{j-1}\big(b_{N}-\ln(u^{-1})\big)^{r-1}\frac{\ln(u^{-1})}{1+\ln(u^{-1})}\,\dd u,
\end{align*}
where we have defined $0<\delta':=\delta^{p}e^{-1/\delta}\ll1$. Since $\ln(u^{-1})\to\infty$ as $u\to0+$, we may take $0<\delta\ll1$ sufficiently small so that 
\begin{align*}
0
<\frac{\ln(u^{-1})}{1+\ln(u^{-1})}
\le2,\quad\text{for all }u\in[1/(AN),\delta'],
\end{align*}
and thus it remains to show that
\begin{align*}
\int_{1/(AN)}^{\delta'}\Big[1-\frac{A}{2}u\Big]^{N}u^{j-1}\big(b_{N}-\ln(u^{-1})\big)^{r-1}\,\dd u
=O(N^{-j})\quad\text{as }N\to\infty.
\end{align*}
Changing variables according to $z=ANu$ and using $b_{N}=\ln(AN)$ yields
\begin{align*}
I''
=(AN)^{-j}\int_{1}^{AN\delta'}\Big[1-\frac{z}{2N}\Big]^{N}z^{j-1}(\ln z)^{r-1}\,\dd u.
\end{align*}
It is straightforward to check that
\begin{align}\label{expbound}
(1-y/N)^{N}
\le e^{-y}\quad\text{for all }y\in[0,N).
\end{align}
To obtain \eqref{expbound}, notice that $f(y):=(1-y/N)^{N}$ satisfies $f(0)=1$ and $f'(y)\le-f(y)$ and apply Gronwall's inequality. Therefore, \eqref{expbound} implies
\begin{align*}
I''
&\le(AN)^{-j}\int_{1}^{\infty}e^{-z/2}z^{j-1}(\ln z)^{r-1}\,\dd u\\
&\le(AN)^{-j}\int_{1}^{\infty}e^{-z}z^{j+r-2}\,\dd u
\le(AN)^{-j}(j+r)!.
\end{align*}

Next, suppose $p>0$. In this case, \eqref{changev} implies $s=(b_{N}-pW(p^{-1}u^{-1/p}))^{r}$ and thus
\begin{align*}
I
\le r\int_{1/(AN)}^{\delta'}\Big[1-\frac{A}{2}u\Big]^{N}\Big[2A\Big]^{j}u^{j-1}\big(b_{N}-pW(p^{-1}u^{-1/p})\big)^{r-1}\frac{W(p^{-1}u^{-1/p})}{1+W(p^{-1}u^{-1/p})}\,\dd u,
\end{align*}
where we have used that
\begin{align}\label{deflambert}
Ab_{N}^{-p}e^{-b_{N}}=N^{-1}.
\end{align}
Since $W(p^{-1}u^{-1/p})\to\infty$ as $u\to0+$, taking $\delta$ sufficiently small ensures that
\begin{align*}
0
<\frac{W(p^{-1}u^{-1/p})}{1+W(p^{-1}u^{-1/p})}
\le2,\quad\text{for all }u\in[1/(AN),\delta'].
\end{align*}
It thus remains to estimate
\begin{align*}
I'
:=\int_{1/(AN)}^{\delta'}\Big[1-\frac{A}{2}u\Big]^{N}u^{j-1}\big(b_{N}-pW(p^{-1}u^{-1/p})\big)^{r-1}\,\dd u.
\end{align*}
Expanding the {Lambert W} function and using the definition of $b_{N}$ yields
\begin{align}\begin{split}\label{expandboth}
pW(p^{-1}u^{-1/p})
&=\ln(u^{-1})
-p\ln\ln p^{-1}u^{-1/p}-p\ln p+g_{0}(u),\\
b_{N}
&=\ln(AN)
-p\ln\ln p^{-1}(AN)^{1/p}
-p\ln p
+g_{1}(N),
\end{split}
\end{align}
where
\begin{align}\label{g0g1}
\lim_{u\to0+}g_{0}(u)=\lim_{N\to\infty}g_{1}(N)=0.
\end{align}
Now, it is straightforward to check that
\begin{align}\label{lee}
0\le \ln b-\ln a\le b-a,\quad\text{if }b\ge a\ge1.
\end{align}
To obtain \eqref{lee}, note that if $f(b)=\ln b-\ln a$ and $g(b)=b-a$, then $f(a)=g(a)=0$ and $f'(b)=1/b\le g'(b)=1$ if $b\ge a\ge1$. Therefore, 
\begin{align}\label{loglog}
\ln\ln p^{-1}(AN)^{1/p}-\ln\ln p^{-1}u^{-1/p}
\le\ln p^{-1}(AN)^{1/p}-\ln p^{-1}u^{-1/p},
\end{align}
if $u\in[1/(AN),\delta']$ and $\delta$ is sufficiently small and $N$ is sufficiently large so that $\ln p^{-1}(AN)^{1/p}\ge\ln p^{-1}u^{-1/p}\ge1$. Hence, using \eqref{expandboth}, \eqref{g0g1}, and \eqref{loglog} yields
\begin{align*}
I'
\le(1+p)^{r-1}\int_{1/(AN)}^{\delta'}\Big[1-\frac{A}{2}u\Big]^{N}u^{j-1}\Big\{\ln(AN)-\ln(u^{-1})+1\Big\}^{r-1}\,\dd u.
\end{align*}
The remaining calculation then proceeds as in the case $p=0$ above. The case $p<0$ is similar to the case $p>0$ and is omitted. 
\end{proof}

{
\begin{proof}[Proof of Corollary~\ref{expjoint}]
By Theorem~\ref{expdist}, we have that $\lambda T_{N,N}-b_{N}\to_{\dd}Y_{0}$ as $N\to\infty$. Thus, $S(t):=\P(\tau>t)$ is in the so-called domain of attraction \cite{haanbook} of the extreme value distribution $G_{0}(x)=\exp(-e^{-x})$ (i.e.\ \eqref{want2} holds). The desired conclusion of Corollary~\ref{expjoint} then follows from a direct application of Theorem 2.1.1 in \cite{haanbook}.
\end{proof}
}

\begin{proof}[Proof of Theorem~\ref{work}]
Since we can always rescale time, we set $\lambda=1$ without loss of generality. Note that
\begin{align}\label{niceg}
S(t+\ln A)
=Ae^{-t-\ln A}+h(t+\ln A)
=e^{-t}+g(t),
\end{align}
where we have defined $h(t):=S(t)-Ae^{-t}$ and $g(t):=h(t+\ln A)$. By assumption, $h(t)=O(e^{-\beta t})$ as $t\to\infty$, and thus
\begin{align}\label{gg}
g(t)=O(e^{-\beta t})\quad\text{as }t\to\infty.
\end{align}

Now,
\begin{align}\label{decomp}
\begin{split}
\E[(T_{N-k,N}-\ln A)^{m}]
&=\E[(T_{N-k,N}-\ln A)^{m}1_{T_{N-k,N}\ge\ln A}]\\
&\quad+\E[(T_{N-k,N}-\ln A)^{m}1_{T_{N-k,N}<\ln A}],
\end{split}
\end{align}
where $1_{E}$ denotes the indicator function on an event $A$ (i.e.\ $1_{E}=1$ is $E$ happens and $1_{E}=0$ otherwise). Now,
\begin{align*}
\big|\E[(T_{N-k,N}-\ln A)^{m}1_{T_{N-k,N}<\ln A}]\big|
\le(2\ln A)^{m}\P(T_{N-k,N}<\ln A).
\end{align*}
The assumption in \eqref{assump} implies $S(\ln A)>0$, and thus \eqref{unwieldy} implies $\P(T_{N-k,N}<\ln A)$ vanishes exponentially fast as $N\to\infty$. We thus turn our attention to the first term in \eqref{decomp}.

For any nonnegative random variable $Z\ge0$, we have that
\begin{align}\label{gnn}
\E[Z]
=\int_{0}^{\infty}\P(Z>z)\,\dd z.
\end{align}
Hence,
\begin{align*}
&\E[(T_{N-k,N}-\ln A)^{m}1_{T_{N-k,N}\ge\ln A}]\\
%&=\int_{0}^{\infty}\P(T_{N-k,N}-\ln A>z^{1/m})\,\dd z\\
&=\int_{0}^{\infty}\Big[1-\P(T_{N-k,N}\le z^{1/m}+\ln A)\Big]\,\dd z\\
&=\int_{0}^{\infty}\Big[1-\sum_{i=0}^{k}{N\choose i}(1-S(z^{1/m}+\ln A))^{N-i}(S(z^{1/m}+\ln A))^{i}\Big]\,\dd z\\
%&=\int_{0}^{\infty}\Big[1-\sum_{i=0}^{k}{N\choose i}\big(1-e^{-z^{1/m}}-g(z^{1/m})\big)^{N-i}\big(e^{-z^{1/m}}+g(z^{1/m})\big)^{i}\Big]\,\dd z\\
&=\int_{0}^{\infty}\Big[1-\sum_{i=0}^{k}{N\choose i}\big(1-e^{-t}-g(t)\big)^{N-i}\big(e^{-t}+g(t)\big)^{i}\Big]mt^{m-1}\,\dd t,
\end{align*}
where we have used \eqref{unwieldy} and \eqref{niceg} and changed variables $t=z^{1/m}$.

Now, if $\tau_{1}',\dots,\tau_{N}'$ are iid unit rate exponential random variables and $T_{N-k,N}'$ is as in \eqref{Tkn} with $\tau_{n}'$ replacing $\tau_{n}$, then Renyi's representation \cite{renyi1953} implies 
\begin{align*}
\E[(T_{N-k,N}')^{m}]
&=\int_{0}^{\infty}\Big[1-\sum_{i=0}^{k}{N\choose i}\big(1-e^{-t}\big)^{N-i} e^{-it} \Big]mt^{m-1}\,\dd t\\
&=\E\bigg[\Big(\sum_{i=1}^{N-k}\frac{E_{i}}{i+k}\Big)^{m}\bigg],
\end{align*}
where $E_{1},\dots,E_{N-k}$ are iid unit rate exponential random variables. Hence,
\begin{align*}
&\E[(T_{N-k,N}-\ln A)^{m}1_{T_{N-k,N}\ge\ln A}]
-\E\bigg[\Big(\sum_{i=1}^{N-k}\frac{E_{i}}{i+k}\Big)^{m}\bigg]\\
&\quad=\sum_{i=0}^{k}{N\choose i}\int_{0}^{\infty}\Big[\big(1-e^{-t}\big)^{N-i} e^{-it} -\big(1-e^{-t}-g(t)\big)^{N-i}\big(e^{-t}-g(t)\big)^{i}\Big]mt^{m-1}\,\dd t\\
&\quad=\sum_{i=0}^{k}{N\choose i}\int_{0}^{\infty}\big(1-e^{-t}\big)^{N-i} e^{-it} \Big[1-\Big(\frac{1-e^{-t}-g(t)}{1-e^{-t}}\Big)^{N-i}\Big(\frac{e^{-t}-g(t)}{e^{-t}}\Big)^{i}\Big]mt^{m-1}\,\dd t\\
&\quad=\sum_{i=0}^{k}{N\choose i}\int_{0}^{\infty}\big(1-e^{-t}\big)^{N-i} e^{-it} \Big[1-e^{(N-i)[\ln(1-e^{-t}-g(t))-\ln(1-e^{-t})]}\big(1-e^{t}g(t)\big)^{i}\Big]mt^{m-1}\,\dd t.
\end{align*}

Taylor expanding and using \eqref{gg} yields that for sufficiently large $t>0$,
\begin{align*}
-2|g(t)|
\le\ln(1-e^{-t}-g(t))-\ln(1-e^{-t})
\le2|g(t)|.
\end{align*}
Hence, we may take $\delta>0$ sufficiently small so that for any $i\in\{0,1,\dots,k\}$,
\begin{align*}
&\int_{1/\delta}^{\infty}(1-e^{-t})^{N-i}(e^{-t})^{i}\Big[1-e^{2(N-i)|g(t)|}(1-e^{t}g(t))^{i}\Big]mt^{m-1}\,\dd t\\
&\quad\le\int_{1/\delta}^{\infty}(1-e^{-t})^{N-i}(e^{-t})^{i}\Big[1-e^{(N-i)[\ln(1-e^{-t}-g(t))-\ln(1-e^{-t})]}(1-e^{t}g(t))^{i}\Big]mt^{m-1}\,\dd t\\
&\quad\le\int_{1/\delta}^{\infty}(1-e^{-t})^{N-i}(e^{-t})^{i}\Big[1-e^{-2(N-i)|g(t)|}(1-e^{t}g(t))^{i}\Big]mt^{m-1}\,\dd t.
\end{align*}
Furthermore, by \eqref{gg}, we may take $\delta>0$ sufficiently small so that there exists $C>0$ so that
\begin{align*}
&\int_{1/\delta}^{\infty}(1-e^{-t})^{N-i}(e^{-t})^{i}\Big[1-e^{2(N-i)Ce^{-\beta t}}(1+Ce^{-(\beta-1)t})^{i}\Big]mt^{m-1}\,\dd t\\
&\quad\le\int_{1/\delta}^{\infty}(1-e^{-t})^{N-i}(e^{-t})^{i}\Big[1-e^{(N-i)[\ln(1-e^{-t}-g(t))-\ln(1-e^{-t})]}(1-e^{t}g(t))^{i}\Big]mt^{m-1}\,\dd t\\
&\quad\le\int_{1/\delta}^{\infty}(1-e^{-t})^{N-i}(e^{-t})^{i}\Big[1-e^{-2(N-i)Ce^{-\beta t}}(1-Ce^{-(\beta-1)t})^{i}\Big]mt^{m-1}\,\dd t.
\end{align*}
The result follows from Lemma~\ref{key2} below.
\end{proof}

%%%%%%%%%%%%%%%%%%%%%%%%%%%%%%%%%%%%%%%%%%%%%%%%%%%%%%%%%%%%%%%%%%%%%%%%%%%%%%%%%%%%%%%%%%

\begin{lemma}\label{key2}
If $\delta>0$, $C>0$, $i\in\{0,1,2,\dots\}$, $m>0$, and $\beta>1$, then
\begin{align*}
&\int_{1/\delta}^{\infty}(1-e^{-t})^{N-i}(e^{-t})^{i}\Big[1-e^{\pm 2(N-i)Ce^{-\beta t}}(1\pm Ce^{-(\beta-1)t})^{i}\Big]t^{m-1}\,\dd t\\
&\quad=O(N^{-(\beta-1+i)}(\ln N)^{m-1})\quad\text{as }N\to\infty.
\end{align*}
\end{lemma}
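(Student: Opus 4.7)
The plan is to split the integral at $t^{\ast} := \frac{1}{\beta}\ln(2(N-i)C)$. On $[1/\delta, t^{\ast}]$ the factor $(1-e^{-t})^{N-i}$ supplies super-polynomial decay in $N$ that dominates everything else; on $[t^{\ast},\infty)$ the bracket admits a useful Taylor-type bound because $|a| := 2(N-i)Ce^{-\beta t} \le 1$ throughout.

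For the small-$t$ part, I would use the crude bound $|1 - e^{\pm a}(1\pm b)^{i}| \le 1 + e^{|a|}(1+|b|)^{i}$ together with $(1-e^{-t})^{N-i} \le \exp(-(N-i)e^{-t})$. Since $\beta > 1$, the combined exponent $-(N-i)e^{-t} + 2(N-i)Ce^{-\beta t}$ is dominated throughout $[1/\delta, t^{\ast}]$ by its first (negative) term, and at $t = t^{\ast}$ it equals $-cN^{1-1/\beta}$ for some $c = c(\delta,\beta,C) > 0$. Thus the small-$t$ contribution is bounded by $\exp(-cN^{1-1/\beta})$ times a polynomial factor in $\ln N$, which is $o(N^{-K})$ for every $K > 0$.

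For the large-$t$ part, I would first establish the key estimate
\begin{align*}
|1 - e^{\pm a}(1\pm b)^{i}| \le |1-e^{\pm a}| + e^{|a|}\,|1-(1\pm b)^{i}| \le K_{1}(|a|+|b|),
\end{align*}
valid whenever $|a|, |b| \le 1$, with $K_{1}$ depending only on $i$; this follows from $|e^{\pm a}-1|\le e|a|$ and $|(1\pm b)^{i}-1|\le 2^{i}|b|$. Both $|a|$ and $|b| = Ce^{-(\beta-1)t}$ lie in $[0,1]$ on $[t^{\ast},\infty)$ for $N$ large, so the integrand is controlled (up to a constant depending on $C$ and $i$) by $(1-e^{-t})^{N-i}(e^{-t})^{i}\bigl[(N-i)e^{-\beta t}+e^{-(\beta-1)t}\bigr]t^{m-1}$. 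Changing variables via $u = (N-i)e^{-t}$ (so $\dd t = -\dd u/u$ and $t = \ln(N-i) - \ln u$) and simplifying, this bound becomes, up to a constant,
\begin{align*}
(N-i)^{-(\beta-1+i)}\,(1-u/(N-i))^{N-i}\,u^{i+\beta-2}(2u+1)\,(\ln(N-i)-\ln u)^{m-1}\,\dd u.
\end{align*}
Using $(1-u/M)^{M} \le e^{-u}$ and splitting the $u$-integral at $u = \sqrt{M}$, on $(0,\sqrt{M}]$ the factor $(\ln M - \ln u)^{m-1}$ is $O((\ln M)^{m-1})$ uniformly (by convexity for $m \ge 1$ and by monotonicity of $x^{m-1}$ for $m < 1$), reducing to a convergent integral in $u$; on $[\sqrt{M},\infty)$ the $e^{-u}$ factor makes the contribution $o(N^{-K})$ for every $K > 0$. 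Combining, the large-$t$ part is $O(N^{-(\beta-1+i)}(\ln N)^{m-1})$, matching the claim.

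The main obstacle will be the integrability of the resulting $u$-envelope at $u = 0$: this requires $i+\beta-2 > -1$, i.e., $i+\beta > 1$, which is precisely where the hypothesis $\beta > 1$ is used (most delicately when $i = 0$, where the envelope behaves like $u^{\beta-1}|\ln u|^{m-1}$ near the origin). A secondary technical point is that $t^{\ast} > 1/\delta$ only for $N$ sufficiently large, but this is harmless for the asymptotic claim.
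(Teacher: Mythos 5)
Your strategy is sound and would yield a complete proof, but it is genuinely different from the paper's. The paper substitutes $u=e^{-t}$ after sandwiching $(1-e^{-t})^{N-i}$ between $e^{-2(N-i)e^{-t}}$ and $e^{-(N-i)e^{-t}}$, then expands the bracket as a double series (the Taylor series of $e^{\pm 2(N-i)Cu^{\beta}}$ times the binomial expansion of $(1\pm Cu^{\beta-1})^{i}$), interchanges sum and integral by Fubini--Tonelli, and invokes a generalized Watson's lemma for integrands with logarithmic singularities term by term. You instead split the $t$-range at $t^{\ast}$ where $2(N-i)Ce^{-\beta t^{\ast}}=1$, kill the near range by super-polynomial decay, and on the far range replace the whole series machinery by the single first-order bound $|1-e^{\pm a}(1\pm b)^{i}|\le K_{1}(|a|+|b|)$, reducing to one Watson-type integral whose logarithmic factor you control by hand. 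Your route is more elementary and self-contained: it avoids the citation to the generalized Watson's lemma and sidesteps the question (not addressed in the paper) of whether the $O$-constants in the termwise estimates $I_{l}=O(N^{-((\beta-1)l+i)}(\ln N)^{m-1})$ are summable over $l$. The paper's series approach, in exchange, produces the full asymptotic hierarchy in $l$ rather than just the leading bound, though only the leading term is used.

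Two imprecisions to repair. First, your claim that the negative term $-(N-i)e^{-t}$ dominates $2(N-i)Ce^{-\beta t}$ \emph{throughout} $[1/\delta,t^{\ast}]$ requires $2Ce^{-(\beta-1)t}\le\tfrac12$ on that range, i.e.\ $\delta$ small enough that $e^{-(\beta-1)/\delta}\le 1/(4C)$; for large $\delta$ and large $C$ the positive exponential wins near $t=1/\delta$ and the stated bound actually fails for the $+$ sign, so the lemma must be read (as the paper's own proof does, repeatedly taking ``$\delta$ sufficiently small'') with this smallness proviso. Second, $(\ln M-\ln u)^{m-1}$ is \emph{not} uniformly $O((\ln M)^{m-1})$ on $(0,\sqrt{M}]$ when $m>1$ (it blows up as $u\to0+$); the correct statement is that after the split $(\ln M+\ln(1/u))^{m-1}\le 2^{m-1}[(\ln M)^{m-1}+(\ln(1/u))^{m-1}]$ the extra term integrates against the envelope $e^{-u}u^{i+\beta-2}$ to an $N$-independent constant, which is what you need. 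Neither point changes the architecture of your argument.
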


\begin{proof}[Proof of Lemma~\ref{key2}]
Since
\begin{align*}
-2x
\le\ln(1-x)
\le-x\quad\text{if }x\in[0,1/2],
\end{align*}
we have that for sufficiently large $t>0$,
\begin{align*}
-2e^{-t}
\le\ln(1-e^{-t})
\le-e^{-t}.
\end{align*}
Therefore, we may take $\delta>0$ sufficiently small so that
\begin{align*}
&\int_{1/\delta}^{\infty}e^{-2(N-i)e^{-t}}(e^{-t})^{i}\Big[1-e^{- 2(N-i)Ce^{-\beta t}}(1- Ce^{-(\beta-1)t})^{i}\Big]t^{m-1}\,\dd t\\
&\quad\le\int_{1/\delta}^{\infty}e^{(N-i)\ln(1-e^{-t})}(e^{-t})^{i}\Big[1-e^{- 2(N-i)Ce^{-\beta t}}(1- Ce^{-(\beta-1)t})^{i}\Big]t^{m-1}\,\dd t\\
&\quad\le\int_{1/\delta}^{\infty}e^{-(N-i)e^{-t}}(e^{-t})^{i}\Big[1-e^{- 2(N-i)Ce^{-\beta t}}(1- Ce^{-(\beta-1)t})^{i}\Big]t^{m-1}\,\dd t,
\end{align*}
and
\begin{align*}
&\int_{1/\delta}^{\infty}e^{-(N-i)e^{-t}}(e^{-t})^{i}\Big[1-e^{ 2(N-i)Ce^{-\beta t}}(1+ Ce^{-(\beta-1)t})^{i}\Big]t^{m-1}\,\dd t\\
&\quad\le\int_{1/\delta}^{\infty}e^{(N-i)\ln(1-e^{-t})}(e^{-t})^{i}\Big[1-e^{ 2(N-i)Ce^{-\beta t}}(1+ Ce^{-(\beta-1)t})^{i}\Big]t^{m-1}\,\dd t\\
&\quad\le\int_{1/\delta}^{\infty}e^{-2(N-i)e^{-t}}(e^{-t})^{i}\Big[1-e^{ 2(N-i)Ce^{-\beta t}}(1+ Ce^{-(\beta-1)t})^{i}\Big]t^{m-1}\,\dd t.
\end{align*}
For $B>0$, changing variables $u=e^{-t}$ yields
\begin{align}
&\int_{1/\delta}^{\infty}e^{-B(N-i)e^{-t}}(e^{-t})^{i}\Big[1-e^{\pm2(N-i)Ce^{-\beta t}}(1\pm Ce^{-(\beta-1)t})^{i}\Big]t^{m-1}\,\dd t\nonumber\\
&=\int_{0}^{\delta'}e^{-B(N-i)u}u^{i-1}(\ln(1/u))^{m-1}\Big[1-e^{\pm2(N-i)Cu^{\beta}}(1\pm Cu^{\beta-1})^{i}\Big]\,\dd u,\label{second44}
\end{align}
where $\delta'=e^{-1/\delta}$. 
Expanding $e^{\pm2(N-i)Cu^{\beta}}$ about $u=0$ and applying the binomial theorem to $(1\pm Cu^{\beta-1})^{i}$yields that \eqref{second44} is the following sum of two integrals,
\begin{align}
&\int_{0}^{\delta'}e^{-B(N-i)u}u^{i-1}(\ln(u^{-1}))^{m-1}\sum_{j=1}^{i}{i\choose j}(\pm Cu^{\beta-1})^{j}\,\dd u\label{firsti}\\
&+\int_{0}^{\delta'}e^{-BN'u}u^{i-1}(\ln(u^{-1}))^{m-1}\Big(\sum_{l\ge1}\frac{(\pm2N'Cu^{\beta})^{l}}{l!}\Big)\Big(\sum_{j=0}^{i}{i\choose j}(\pm Cu^{\beta-1})^{j}\Big)\,\dd u,\label{secondi}
\end{align}
where $N'=N-i$.

To determine the large $N$ behavior of the integral in \eqref{firsti}, let $f_{0}(u)$ denote its integrand and notice that
\begin{align}\label{logsing1}
f_{0}(u)
\sim i(\pm C)u^{i-1+\beta-1}(\ln(u^{-1}))^{m-1}\quad\text{as }u\to0+.
\end{align}
We thus apply Theorem~5 in \cite{bleistein1977}, which generalizes Watson's lemma to functions with logarithmic singularities of the form \eqref{logsing1}, to obtain that the integral in \eqref{firsti}, call it $I_{0}$, satisfies
\begin{align*}
I_{0}
=O(N^{-(\beta+i-1)}(\ln N)^{m-1})\quad\text{as }N\to\infty.
\end{align*}

To determine the large $N$ behavior of the integral in \eqref{secondi}, note that
\begin{align*}
&\int_{0}^{\delta'}\sum_{l\ge1}\bigg|e^{-BN'u}u^{i-1}(\ln(1/u))^{m-1}\frac{(\pm2N'Cu^{\beta})^{l}}{l!}\Big(\sum_{j=0}^{i}{i\choose j}(\pm Cu^{\beta-1})^{j}\Big)\bigg|\,\dd u\\
&\quad=
\int_{0}^{\delta'}\sum_{l\ge1}e^{-BN'u}u^{i-1}(\ln(1/u))^{m-1}\frac{(2N'Cu^{\beta})^{l}}{l!}\Big(\sum_{j=0}^{i}{i\choose j}(\pm Cu^{\beta-1})^{j}\Big)\,\dd u\\
&\quad=
\int_{0}^{\delta'}e^{-BN'u}u^{i-1}(\ln(1/u))^{m-1}\Big(e^{2N'Cu^{\beta}}-1\Big)\Big(\sum_{j=0}^{i}{i\choose j}(\pm Cu^{\beta-1})^{j}\Big)\,\dd u<\infty
\end{align*}
assuming $\delta'=e^{-1/\delta}>0$ is sufficiently small so that
\begin{align*}
\sum_{j=0}^{i}{i\choose j}(\pm Cu^{\beta-1})^{j}>0\quad\text{for all }u\in[0,\delta'].
\end{align*}

Therefore, the Fubini-Tonelli theorem implies that
\begin{align*}
&\int_{0}^{e^{-\delta}}e^{-BN'u}u^{i-1}(\ln(1/u))^{m-1}\sum_{l\ge1}\frac{(\pm2N'Cu^{\beta})^{l}}{l!}\Big(\sum_{j=0}^{i}{i\choose j}(\pm Cu^{\beta-1})^{j}\Big)\,\dd u\\
&\quad=\sum_{l\ge1}I_{l}
\end{align*}
where
\begin{align*}
I_{l}
:=\int_{0}^{e^{-\delta}}e^{-BN'u}u^{i-1}(\ln(1/u))^{m-1}\frac{(\pm2N'Cu^{\beta})^{l}}{l!}\Big(\sum_{j=0}^{i}{i\choose j}(\pm Cu^{\beta-1})^{j}\Big)\,\dd u.
\end{align*}
To determine the behavior of $I_{l}$ as $N\to\infty$, let $f_{l}(u)$ denote the integrand of $I_{l}$ and note that it has the following singular behavior,
\begin{align}\label{logsing}
f(u)
\sim u^{i-1+\beta l}(\ln(1/u))^{m-1}\frac{(\pm2N'C)^{l}}{l!}\quad\text{as }u\to0+.
\end{align}
We thus apply Theorem~5 in \cite{bleistein1977}, which generalizes Watson's lemma to functions with logarithmic singularities of the form \eqref{logsing}, to obtain
\begin{align*}
I_{l}
=O(N^{-((\beta-1)l+i)}(\ln N)^{m-1})\quad\text{as }N\to\infty,
\end{align*}
which completes the proof.
\end{proof}

%%%%%%%%%%%%%%%%%%%%%%%%%%%%%%
\begin{proof}[Proof of Theorem~\ref{powerdist}]
{
Let $S_{0}(t):=(\lambda t)^{-p}$ and fix $y\in\R$. Hence,
\begin{align}\label{want14}
\lim_{N\to\infty}(1-S_{0}(\lambda^{-1}N^{1/p}y))^{N}
=\lim_{N\to\infty}(1-y^{-p}/N)^{N}
=\exp(-y^{-p}).
\end{align}
By assumption, $S(t)\sim S_{0}(t)$ as $t\to\infty$, and therefore $S(\lambda^{-1}(N^{1/p}y))\sim S_{0}(\lambda^{-1}(N^{1/p}y))$ as $N\to\infty$. Hence, \eqref{want14} and Lemma~\ref{simplelemma} imply
\begin{align}\label{want24}
\lim_{N\to\infty}(1-S(\lambda^{-1}N^{1/p}y))^{N}
=\exp(-y^{-p}),\quad \text{for all }{y}\in\R.
\end{align}
Therefore, since $T_{N,N}=\max\{\tau_{1},\dots,\tau_{N}\}$ and $\{\tau_{n}\}_{n\ge1}$ are iid, we have
\begin{align}\label{triv4}
\begin{split}
\P(\lambda N^{-1/p}T_{N,N}\le y)
%=\P(T_{N,N}\le\lambda^{-1}N^{1/p}y)
=\big(\P(\tau_{1}\le \lambda^{-1}N^{1/p}y)\big)^{N}
&=(1-S(\lambda^{-1}N^{1/p}y))^{N}.
\end{split}
\end{align}
Taking $N\to\infty$ in \eqref{triv4} and using \eqref{want24} yields that $\lambda N^{-1/p}T_{N,N}\to_{\dd} Y_{0}$, which completes the proof for the case $k=0$. Having established the case $k=0$, the case of a general fixed integer $k\ge1$ and the convergence in distribution for the joint random variables in \eqref{cd4} follows from Theorem 2.1.1 in \cite{haanbook}.
}
\end{proof}
%%%%%%%%%%%%%%%%%%%%%%%%%%%%%%

%%%%%%%%%%%%%%%%%%%%%%%%%%%%%%%%%%%%%%%%%%%%%%%%%%%%%%%%%%%%%%%%%%%%%%%%%%%%%%%%%%%%%%%%%%

\begin{proof}[Proof of Theorem~\ref{powermoments}]
Since Theorem~\ref{powerdist} establishes that $\lambda T_{N-k,N}/N^{1/p}$ converges in distribution to $Z_{k}$ as $N\to\infty$, the continuous mapping theorem (see, for example, Theorem~2.7 in \cite{billingsley2013}) implies that $(\lambda T_{N-k,N}/N^{1/p})^{m}$ converges in distribution to $(Z_{k})^{m}$ as $N\to\infty$. To conclude $\E[(\lambda T_{N-k,N}/N^{1/p})^{m}]\to\E[(Z_{k})^{m}]$ as $N\to\infty$, it is enough to show that the random variables $\{(\lambda T_{N-k,N}/N^{1/p})^{m}\}_{N\ge1}$ are uniformly integrable (see, for example, Theorem~3.5 in \cite{billingsley2013}). To show this uniform integrability, it is enough (see, for example, equation~(3.18) in \cite{billingsley2013}) to show that
\begin{align}\label{wantui}
\sup_{N\ge1}\E\Big[\big(\lambda T_{N-k,N}/N^{1/p}\big)^{r}\Big]<\infty,
\end{align}
where $r$ is an even integer satisfying $r>m>0$.

Since $S(t)\sim(\lambda t)^{-p}$ as $t\to\infty$, there exists $t_{0}\ge1/\lambda$ such that
\begin{align*}
S(t)\le2(\lambda t)^{-p}<1\quad\text{for all }t\ge t_{0}\ge1/\lambda.
\end{align*}
Suppose $\{\tau_{n}^{+}\}_{n\ge1}$ is an iid sequence of realizations of a random variable with survival probability
\begin{align*}
S_{+}(t)
=\begin{cases}
1 & \text{if }t<t_{0},\\
(2\lambda t)^{-p} & \text{if }t\ge t_{0}.
\end{cases}
\end{align*}
Defining $T_{N-k,N}^{+}$ as in \eqref{Tkn} but with $\tau_{n}$ replaced by $\tau_{n}^{+}$, it is immediate that
\begin{align}\label{ebp}
\E[(T_{N-k,N})^{r}]
\le\E[(T_{N-k,N}^{+})^{r}].
\end{align}
Using \eqref{gnn} and \eqref{unwieldy}, we then have
\begin{align*}
\E[(T_{N-k,N}^{+})^{r}]
&=\int_{0}^{\infty}\P(T_{N-k,N}^{+}>s^{1/r})\,\dd s\\
&=\sum_{i=0}^{N-k-1}{N\choose i}\int_{0}^{\infty}(S_{+}(s^{1/r}))^{N-i}(1-S_{+}(s^{1/r}))^{i}\,\dd s.
\end{align*}
For the $i=0$ term, we have that
\begin{align*}
\int_{0}^{\infty}(S_{+}(s^{1/r}))^{N}\,\dd s
=(t_{0})^{r}+\int_{(t_{0})^{r}}^{\infty}(2\lambda s^{1/r})^{-Np}\,\dd s
=(t_{0})^r \Big(\frac{r 2^{-N p} (\lambda  t_{0})^{-N p}}{N p-r}+1\Big).
\end{align*}
For $i\in\{1,\dots,N-k-1\}$, we have
\begin{align*}
&\int_{0}^{\infty}(S_{+}(s^{1/r}))^{N-i}(1-S_{+}(s^{1/r}))^{i}\,\dd s\\
&\quad=\int_{(t_{0})^{r}}^{\infty}(2\lambda s^{1/r})^{-p(N-i)}(1-(2\lambda s^{1/r})^{-p})^{i}\,\dd s\\
&\quad=(2\lambda)^{-r}\int_{(2\lambda t_{0})^{r}}^{\infty}(z^{1/r})^{-p(N-i)}(1-(z^{1/r})^{-p})^{i}\,\dd z\\
&\quad\le(2\lambda)^{-r}\int_{1}^{\infty}(z^{1/r})^{-p(N-i)}(1-(z^{1/r})^{-p})^{i}\,\dd z,
\end{align*}
where we have used that $\lambda t_{0}\ge1$. Changing variables $u=z^{-r/p}$ yields
\begin{align*}
\int_{1}^{\infty}(z^{1/r})^{-p(N-i)}(1-(z^{1/r})^{-p})^{i}\,\dd z
&=\frac{r}{p}\int_{0}^{1}u^{-1-r/p+N-i}(1-u)^{i}\,\dd u\\
&=\frac{r}{p}\frac{\Gamma(N-i-r/p)\Gamma(i+1)}{\Gamma(N-r/p+1)},
\end{align*}
where we have used the identity,
\begin{align*}
\int_{0}^{1}u^{x-1}(1-u)^{y-1}\,\dd u
=\frac{\Gamma(x)\Gamma(y)}{\Gamma(x+y)}.
\end{align*}
Now, using the identity $\Gamma(z+1)=z\Gamma(z)$, one can check that
\begin{align*}
\sum_{i=0}^{N-k-1}\frac{\Gamma(N-i-r/p)}{\Gamma(N-i+1)}
=\frac{p}{r}\Big(\frac{\Gamma(k+1-r/p)}{\Gamma(k+1)}-\frac{\Gamma(N+1-r/p)}{\Gamma(N+1)}\Big).
\end{align*}
Therefore,
\begin{align*}
\sum_{i=0}^{N-k-1}{N\choose i}\frac{r}{p}\frac{\Gamma(N-i-r/p)\Gamma(i+1)}{\Gamma(N-r/p+1)}
=\frac{\binom{N}{k} \Gamma (N-k+1) \Gamma (k-\frac{r}{p}+1)}{\Gamma(N-\frac{r}{p}+1)}-1.
\end{align*}
Taking $N\to\infty$ yields
\begin{align*}
\lim_{N\to\infty}N^{-r/p}\Big(\frac{\binom{N}{k} \Gamma (N-k+1) \Gamma (k-\frac{r}{p}+1)}{\Gamma(N-\frac{r}{p}+1)}-1\Big)=\frac{\Gamma(k+1-r/p)}{\Gamma(k+1)}<\infty.
\end{align*}
Combining this calculation with \eqref{ebp} yields
\begin{align*}
0
\le\limsup_{N\to\infty}N^{-r/p}\E[(T_{N-k,N})^{r}]
\le\limsup_{N\to\infty}N^{-r/p}\E[(T_{N-k,N}^{+})^{r}]<\infty,
\end{align*}
and thus \eqref{wantui} holds.
\end{proof}

%%%%%%%%%%%%%%%%%%%%%%%%%%%%%%%%%%%%%%%%%%%%%%%%%%%%%%%%%%%%%%%%%%%%%%%%%%%%%%%%%%%%%%%%%%%%%%%%%%%%%%%%%%%%%%%%%%%%%%%%%%%%%%%%%%%%%%%%%%%%%%%%%%%%%%%%%%%%%%%%%%%%%%%%%%%%%%%%%%%%%%%%%%%%%%%%%%%%%%%%%%%%%
\subsection{Numerical methods and auxiliary proofs}\label{appnumerical}

We now give more details on the numerical methods used in sections~\ref{meno}-\ref{secdiscussion}.

%%%%%%%%%%%%%%%%%%%%%%%%%%%%%%%%%%%
\subsubsection{Numerical methods for section~\ref{meno}}\label{appmeno}

The cumulative distribution function in \eqref{cdfsim} was obtained by $M=10^{3}$ stochastic realizations of $T_{N-k,N}$ using the survival probability of $\tau$ in \eqref{sre}. In particular, $M\times N$ independent realizations of $\tau$ were sampled, which then yielded $M$ independent realizations of $T_{N-k,N}$. Each realization of $\tau$ was obtained by numerically inverting $S(\tau)=U$, where $U$ is a uniformly distributed random value on $[0,1]$ and $S(t)$ is computed from the first 100 terms in the series in \eqref{sre}.

%%%%%%%%%%%%%%%%%%%%%%%%%%%%%%%%%%%
\subsubsection{Numerical methods for section~\ref{interval}}\label{appinterval}

For the example in section~\ref{interval}, we compute the statistics and distribution of $T_{N,N}$ using the first 100 terms in the following series representation for $S(t)$,
\begin{align}\label{serinterval} 
\begin{split}
S(t)
&=\sum_{k=1}^{\infty}A_{k}e^{-\lambda_{k}t},\\
\lambda_{k}
&=Dk^{2}\pi^{2}/(2L)^{2},\quad
A_{k}=\frac{4}{k\pi}\sin(k\pi(x_{0}+L)/(2L)).
\end{split}
\end{align}
The series in \eqref{serinterval} is obtained by finding the solution $s(x,t)$ to the backward Kolmogorov equation \cite{pavliotis2014}
\begin{align*}
\partial_{t}s
&=D\partial_{xx}s,\quad x\in(0,2L),
\end{align*}
with $s=0$ if $x\in\{0,2L\}$ and $s=1$ if $t=0$ and setting $S(t)=s(x_{0}+L,t)$. {Generalizing the case that each searcher starts at $x_{0}\in(-L,L)$, if we instead assume that each searcher starts according to some stochastic initial position with probability measure $\mu$, then we merely set
\begin{align*}
S(t)=\int_{-L}^{L}s(x_{0}+L,t)\,\dd \mu(x_{0}),
\end{align*}
which merely amounts to replacing $A_{k}$ in \eqref{serinterval} by $\int_{-L}^{L}A_{k}\,\dd \mu(x_{0})$.} The first and second moments of $T_{N,N}$ were computed via the following integrals using the trapezoidal rule,
\begin{align}\label{ttrap}
\E[T_{N,N}]
&=\int_{0}^{\infty}(1-(1-S(t))^{N})\,\dd t,\\
\E[(T_{N,N})^{2}]
&=\int_{0}^{\infty}2t(1-(1-S(t))^{N})\,\dd t,\label{ttrap2}
\end{align}
To compute \eqref{ttrap}-\eqref{ttrap2}, we use $10^{7}$ uniformly spaced time points from time $t=0$ to time $t=10$.

%%%%%%%%%%%%%%%%%%%%%%%%%%%%%%%%%%%
\subsubsection{Numerical methods for section~\ref{rare}}\label{apprare}

For the example in section~\ref{rare}, we find the solution $s(r,t)$ to the backward Kolmogorov equation \cite{pavliotis2014}
\begin{align*}
\partial_{t}s
&=D(\partial_{rr}s+(2/r)\partial_{r}s),\quad r\in(a,L),
\end{align*}
with $s=0$ if $r=a$, $\partial_{r}s=0$ if $r=L$, and $s=1$ if $t=0$ and setting $S(t)=s(L,t)$. We find $s(r,t)$ numerically using \texttt{pdepe} in Matlab \cite{MATLAB} with $10^{5}$ equally uniform spatial grid points between $r=a$ and $r=L$ and $10^{4}$ logarithmically spaced time points between $t=10^{-16}$ and $t=10^{4}/3$. We then calculate $\E[T_{N,N}]$ via \eqref{ttrap} using the trapezoidal rule on these time points.

%%%%%%%%%%%%%%%%%%%%%%%%%%%%%%%%%%%
\subsubsection{Details for section~\ref{sg}}\label{appkappa}

For the example in Figure~\ref{figkappa} in section~\ref{sg}, we calculate $\E[T_{N,N}]$ via computing the integral in \eqref{ttrap} with the trapezoidal rule, where $S(t)$ is computed by taking the first $10^{3}$ terms in the series representation given in equation~(B11) in \cite{grebenkov2022}. To compute \eqref{ttrap}, we use $10^{6}$ uniformly spaced time points from time $t=0$ to time $t=10^{3}$

%%%%%%%%%%%%%%%%%%%%%%%%%%%%%%%%%%%
\subsubsection{Details for section~\ref{halfline}}\label{apphalfline}

For the numerics in section~\ref{halfline}, we use the exact formula for $S(t)$ in \eqref{erf}. The formula for $S(t)=s(x,t)$ in \eqref{SV} is obtained by checking that it satisfies the backward Kolmogorov equation \cite{pavliotis2014}
\begin{align*}
\partial_{t}s
=D\partial_{xx}s-V\partial_{x}s,\quad x>0,
\end{align*}
and $s=0$ if $x=0$ and $s=1$ if $t=0$.

%%%%%%%%%%%%%%%%%%%%%%%%%%%%%%%%%%%
\subsubsection{Auxiliary proofs for section~\ref{sub}}\label{appsub}

For the example in section~\ref{sub}, we have that $S_{\sigma}(t)$ is given by \eqref{serinterval} and thus \eqref{rep} implies
\begin{align*}
S(t)
=\E[S_{\sigma}(\mathbb{S}(t))]
=\sum_{k=1}^{\infty}A_{k}\E[e^{-\lambda_{k}\mathbb{S}(t)}]
=\sum_{k=1}^{\infty}A_{k}E_{\alpha}(-\lambda_{k}t^{\alpha}),
\end{align*}
where $E_{\alpha}$ is the Mittag-Leffler function,
\begin{align*}
E_{\alpha}(z)
:=\sum_{n=0}^{\infty}\frac{z^{n}}{\Gamma(1+\alpha n)},
\end{align*}
and we have used that
\begin{align}\label{gfm}
\E[e^{-\lambda t}]
=E_{\alpha}(-\lambda t^{\alpha}),\quad\text{if }t>0,\lambda>0.
\end{align}
The distributions in the right panel of Figure~\ref{fighalfsub} use the first 100 terms of the series \eqref{gfm}. 
To obtain \eqref{gfm}, recall the probability density function of $\mathbb{S}(t)$ in \eqref{lspdf}, integrate by parts, and use the series representation for the exponential function, 
\begin{align*}
\E[e^{-\lambda t}]
=\int_{0}^{\infty}e^{-\lambda s}\frac{t}{\alpha s^{1+1/\alpha}}l_{\alpha}\Big(\frac{t}{s^{1/\alpha}}\Big)\,\dd s
&=\int_{0}^{\infty}e^{-\lambda t^{\alpha}z^{-\alpha}}l_{\alpha}(z)\,\dd z\\
&=\int_{0}^{\infty}\sum_{n=0}^{\infty}\frac{(-\lambda t^{\alpha}z^{-\alpha})^{n}}{\Gamma(n+1)}l_{\alpha}(z)\,\dd z\\
&=\sum_{n=0}^{\infty}\frac{(-\lambda t^{\alpha})^{n}}{\Gamma(n+1)}\int_{0}^{\infty}z^{-\alpha n}l_{\alpha}(z)\,\dd z\\
&=E_{\alpha}(-\lambda t^{\alpha}),
\end{align*}
where the final equality uses the following formula for moments of a one-sided Levy stable distribution \cite{penson2010},
\begin{align*}
\int_{0}^{\infty}z^{\mu}l_{\alpha}(z)
=\frac{\Gamma(-\mu/\alpha)}{\alpha\Gamma(-\mu)},\quad -\infty<\mu<\alpha.
\end{align*}

%%%%%%%%%%%%%%%%%%%%%%%%%%%%%%%%%%%%%%%%%%%%%%%%%%%%%%%%%%%%%%%%%%%%%%%%%%%%%%%%%%%%%%%%%%%%%%%%%%%%%%%%%%%%%%%%%%%%%%%%%%%%%%%%%%%%%%%%%%%%%%%%%%%%%%%%%%%%%%%%%%%%%%%%%%%%%%%%%%%%%%%%%%%%%%%%%%%%%%%%%%%%%%%%%%%%%%%%%%%%%%%%%%%%%%%%%%%%%%%%%%%%%%%%%%%%%%%%%%%%%%%%%%%%%%%%%%%%%%%
\begin{proof}[Proof of Proposition~\ref{thmsub}]
Let $\eps>0$. It is well-known that $l_{\alpha}(z)$ has the following asymptotic behavior \cite{barkai2001},
\begin{align*}
l_{\alpha}(z)
\sim\frac{\alpha}{\Gamma(1-\alpha)}z^{-1-\alpha}\quad\text{as }z\to\infty.
\end{align*}
Hence, there exists a $\delta>0$ so that
\begin{align}\label{bounds}
(1-\eps)\frac{\alpha}{\Gamma(1-\alpha)}z^{-1-\alpha}
\le l_{\alpha}(z)
\le(1+\eps)\frac{\alpha}{\Gamma(1-\alpha)}z^{-1-\alpha},\quad\text{if }z>1/\delta.
\end{align}
If we split the integral in \eqref{reprep} into two integrals,
\begin{align*}
S(t)
&=\int_{0}^{(\delta t)^{\alpha}}\P(\sigma>s)\frac{t}{\alpha s^{1+1/\alpha}}l_{\alpha}\Big(\frac{t}{s^{1/\alpha}}\Big)\,\dd s
+\int_{(\delta t)^{\alpha}}^{\infty}\P(\sigma>s)\frac{t}{\alpha s^{1+1/\alpha}}l_{\alpha}\Big(\frac{t}{s^{1/\alpha}}\Big)\,\dd s,\\
&=:I_{1}+I_{2},
\end{align*}
then \eqref{bounds} implies that we can bound the first integral, $I_{1}$, as follows,
\begin{align*}
\frac{(1-\eps)t^{-\alpha}}{\Gamma(1-\alpha)}\int_{0}^{(\delta t)^{\alpha}}\P(\sigma>s)\,\dd s
\le I_{1}
\le\frac{(1+\eps)t^{-\alpha}}{\Gamma(1-\alpha)}\int_{0}^{(\delta t)^{\alpha}}\P(\sigma>s)\,\dd s.
\end{align*}
Hence,
\begin{align*}
1-\eps
&\le\liminf_{t\to\infty}\frac{I_{1}}{\frac{t^{-\alpha}}{\Gamma(1-\alpha)}\int_{0}^{(\delta t)^{\alpha}}\P(\sigma>s)\,\dd s}\\
&\quad\le\limsup_{t\to\infty}\frac{I_{1}}{\frac{t^{-\alpha}}{\Gamma(1-\alpha)}\int_{0}^{(\delta t)^{\alpha}}\P(\sigma>s)\,\dd s}
\le 1+\eps.
\end{align*}
Since $\eps>0$ is arbitrary and since we assumed $\E[\tau]=\int_{0}^{\infty}\P(\sigma>s)\,\dd s<\infty$, we obtain
\begin{align*}
I_{1}
\sim\frac{\E[\sigma]}{\Gamma(1-\alpha)}t^{-\alpha}\quad\text{as }t\to\infty.
\end{align*}

It remains to show that the second integral, $I_{2}$, vanishes faster than $t^{-\alpha}$ as $t\to\infty$. Since $\P(\sigma>s)$ is a nonincreasing function of $s\ge0$, and $\frac{t}{\alpha s^{1+1/\alpha}}l_{\alpha}(\frac{t}{s^{1/\alpha}})$ is a probability density function, we have that
\begin{align}\label{boundI2}
I_{2}
\le \P\big(\sigma>(\delta t)^{\alpha}\big)\int_{(\delta t)^{\alpha}}^{\infty}\frac{t}{\alpha s^{1+1/\alpha}}l_{\alpha}\Big(\frac{t}{s^{1/\alpha}}\Big)\,\dd s
\le\P\big(\sigma>(\delta t)^{\alpha}\big).
\end{align}
Since we assumed $\E[\tau]=\int_{0}^{\infty}\P(\sigma>s)\,\dd s<\infty$, it follows that $\P(\sigma>s)$ must vanish faster than $s^{-1}$ as $s\to\infty$. Hence, \eqref{boundI2}  completes the proof.
\end{proof}

%%%%%%%%%%%%%%%%%%%%%%%%%%%%%%%%%%%
\subsubsection{Auxiliary proof for section~\ref{mortal}}\label{appmortal}

%%%%%%%%%%%%%%%%%%%%%%%%%%%%%%%%%%%%%%%%%%%%%%%%%%%%%%
\begin{proof}[Proof of Proposition~\ref{mortalthm}]
Integrating over the possible values of $\sigma$ gives
\begin{align}
\P(\tau\le\sigma)
&=\int_{0}^{\infty}(1-S(s))r e^{-r s}\,\dd s,\nonumber\\
\P(t<\tau\le\sigma)
&=S(t)e^{-r t}-\int_{t}^{\infty}S(s)r e^{-r s}\,\dd s.\label{irep}
\end{align}
The results follow from the definition of conditional probability in \eqref{taubar} and applying Laplace's method to the integral in \eqref{irep}.
\end{proof}
%%%%%%%%%%%%%%%%%%%%%%%%%%%%%%%%%%%%%%%%%%%%%%%%%%%%%%

\section*{Ethical statement}

The authors declare that there is no conflict of interest.

\section*{Data availability statement }
Data will be made available on reasonable request.

%%%%%%%%%%%%%%%%%%%%%%%%%%%%%%%%%%%%%%%%%%%%%%%%%%%%%%%%%%%%%%%%%%%%%%%%%%%%%%%%%%%%%%%%%%%%%%%%%%%%%%%%%%%%%%%%%%%%%%%%%%%%%%%%%%%%%%%%%%%%%%%%%%%%%%%%%%%%%%%%%%%%%%%%%%%%%%
% Create the reference section using BibTeX:
\bibliography{library.bib}
\bibliographystyle{unsrt}

\end{document}